\Crefname{algocf}{Algorithm}{Algorithms}
\renewcommand{\epsilon}{\ve}
\def\ve{\varepsilon}
\newcommand{\paratitle}[1]{\vspace{1mm}\noindent\textbf{{#1}.}}
\definecolor{mygreen}{rgb}{0,0.6,0}
\definecolor{mygray}{rgb}{0.5,0.5,0.5}
\definecolor{mymauve}{rgb}{0.58,0,0.82}
\newcommand{\ignore}[1]{}
\newcommand*{\rom}[1]{\expandafter\@slowromancap\romannumeral #1@}
  \newcommand{\amir}[1]{}
\definecolor{applegreen}{rgb}{0.55, 0.71, 0.0}
\newcommand{\reva}[1]{{{#1}}}
\newcommand{\revc}[1]{{{#1}}}
\newcommand{\common}[1]{{{#1}}}
\newcommand{\drastic}{\ensuremath{\mathcal{I_\mathsf{D}}}}
\newcommand{\problematic}{\ensuremath{\mathcal{I_\mathsf{P}}}}
\newcommand{\mininconsistency}{\ensuremath{\mathcal{I_{\mathsf{MI}}}}}
\newcommand{\maxconsistency}{\ensuremath{\mathcal{I_\mathsf{MC}}}}
\newcommand{\repair}{\ensuremath{\mathcal{I_\mathsf{R}}}}
\newcommand{\constraintset}{\ensuremath{{\Sigma}}}
\newcommand{\graph}{\ensuremath{{\mathcal{G}_\constraintset^{D}}}}
\newcommand{\node}{\ensuremath{{u}}}
\newcommand{\attrset}{\mathcal{A}}
\newcommand{\opt}{\normalfont{\text{opt}}}
\newcommand{\degreebound}{d_{\normalfont{\text{bound}}}}
\newcommand{\noisydegreebound}{\tilde{d}_{\normalfont{\text{bound}}}}
\newcommand{\degree}{d}
\newcommand{\graphsimple}{\mathcal{G}}
\newcommand{\RNum}[1]{\uppercase\expandafter{\romannumeral #1\relax}}
\newtheorem{theorem}{Theorem}
\newtheorem{proposition}{Proposition}
\newtheorem{observation}{Observation}
\newtheorem{example}{Example}
\newtheorem{definition}{Definition}
\newtheorem{lemma}{Lemma}
\newcommand{\proj}[1]{{\Pi}}
\newcommand{\sel}[1]{{\sigma}}
\newcommand{\cut}[1]{}
\newcommand{\eat}[1]{}
\newcommand{\dom}{{\tt dom}}
\newcommand{\neighbor}{\approx}
\newcommand{\xh}[1]{\textcolor{purple}{(XH: #1)}}
\newcommand{\commenttext}[1]{{\color{gray}{#1}}}
\newcommand{\squishlist}{
	\begin{list}{$\bullet$}
		{
			\setlength{\itemsep}{0pt}
			\setlength{\parsep}{0pt}
			\setlength{\topsep}{3pt}
			\setlength{\partopsep}{0pt}
			\setlength{\leftmargin}{1.5em}
			\setlength{\labelwidth}{1em}
			\setlength{\labelsep}{0.5em} } }
	\newcommand{\squishend}{
\end{list}  }
\newif\ifpaper
\begin{document}

\title{Computing Inconsistency Measures Under Differential Privacy}


\author{Shubhankar Mohapatra}
\email{shubhankar.mohapatra@uwaterloo.ca}
\affiliation{%
  \institution{University of Waterloo}
  \country{Canada}}

\author{Amir Gilad}
\authornote{Authors AG, XH, BK have
equal contribution and are listed in alphabetical order}
\email{amirg@cs.huji.ac.il}
\affiliation{
  \institution{Hebrew University}
  \country{Israel}
}

\author{Xi He}
\email{xi.he@uwaterloo.ca}
\authornotemark[1]
\affiliation{%
  \institution{University of Waterloo}
  \country{Canada}
  }

\author{Benny Kimelfeld}
\authornotemark[1]
\email{bennyk@cs.technion.ac.il}
\affiliation{%
  \institution{Technion}
  \country{Israel}
}

\renewcommand{\shortauthors}{Mohapatra et al.}

\begin{abstract}
  Assessing data quality is crucial to knowing whether and how to use the data for different purposes. Specifically, given a collection of integrity constraints, various ways have been proposed to quantify the inconsistency of a database. Inconsistency measures are particularly important when we wish to assess the quality of private data without revealing sensitive information. We study the estimation of inconsistency measures for a database protected under Differential Privacy (DP). Such estimation is nontrivial since some measures intrinsically query sensitive information, and the computation of others involves functions on underlying sensitive data. Among five inconsistency measures that have been proposed in recent work, we identify that two are intractable in the DP setting. The major challenge for the other three is high sensitivity: adding or removing one tuple from the dataset may significantly affect the outcome. To mitigate that, we model the dataset using a conflict graph and investigate private graph statistics to estimate these measures. The proposed machinery includes adapting graph-projection techniques with parameter selection optimizations on the conflict graph and a DP variant of approximate vertex cover size. We experimentally show that we can effectively compute DP estimates of the three measures on five real-world datasets with denial constraints, where the density of the conflict graphs highly varies.
\end{abstract}


\eat{
\begin{CCSXML}
<ccs2012>
 <concept>
  <concept_id>00000000.0000000.0000000</concept_id>
  <concept_desc>Do Not Use This Code, Generate the Correct Terms for Your Paper</concept_desc>
  <concept_significance>500</concept_significance>
 </concept>
 <concept>
  <concept_id>00000000.00000000.00000000</concept_id>
  <concept_desc>Do Not Use This Code, Generate the Correct Terms for Your Paper</concept_desc>
  <concept_significance>300</concept_significance>
 </concept>
 <concept>
  <concept_id>00000000.00000000.00000000</concept_id>
  <concept_desc>Do Not Use This Code, Generate the Correct Terms for Your Paper</concept_desc>
  <concept_significance>100</concept_significance>
 </concept>
 <concept>
  <concept_id>00000000.00000000.00000000</concept_id>
  <concept_desc>Do Not Use This Code, Generate the Correct Terms for Your Paper</concept_desc>
  <concept_significance>100</concept_significance>
 </concept>x
</ccs2012>
\end{CCSXML}

\ccsdesc[500]{Do Not Use This Code~Generate the Correct Terms for Your Paper}
\ccsdesc[300]{Do Not Use This Code~Generate the Correct Terms for Your Paper}
\ccsdesc{Do Not Use This Code~Generate the Correct Terms for Your Paper}
\ccsdesc[100]{Do Not Use This Code~Generate the Correct Terms for Your Paper}
}

\keywords{Differential privacy, Inconsistency measures, Integrity constraints}

\maketitle

\section{Introduction}\label{sec:intro}

Differential Privacy (DP)~\cite{dwork2006calibrating} has become the de facto standard for querying sensitive databases and has been adopted by various industry and government bodies~\cite{abowd2018us,erlingsson2014rappor,ding2017collecting}. 
DP offers high utility for aggregate data releases while ensuring strong guarantees on individuals' sensitive data. 
The laudable progress in DP study, as demonstrated by multiple recent works~\cite{zhang2015private,zhang2017privbayes,kotsogiannis2019privatesql,abs-1802-06739,gupta2010differentially,ZhangXX16}, has made it approachable and useful in many common scenarios. 
A standard DP mechanism adds noise to the query output, constrained by a privacy budget that quantifies the permitted privacy leakage. Once the privacy budget is exhausted, no more queries can be answered directly using the database. 
However, while DP ensures data privacy, it limits users' ability to directly observe or assess data quality, leaving them to rely on the data without direct validation.

\common{
The utility of such sensitive data primarily depends on its quality. Therefore, organizations that build these applications spend vast amounts of money on purchasing data from private data marketplaces~\cite{liu2021dealer, DBLP:conf/infocom/SunCLH22, DBLP:journals/corr/abs-2210-08723,DBLP:journals/tdsc/XiaoLZ23}. These marketplaces build relationships and manage monetary transactions between data owners and buyers. These buyers are often organizations that want to develop applications such as machine learning models or personalized assistants. Before the buyer purchases a dataset at a specific cost, they may want to ensure the data is suitable for their use case, adhere to particular data quality constraints, and be able to profile its quality to know if the cost reflects the quality.

{\em To address such scenarios, we consider the problem of assessing the quality of databases protected by DP.} 
Such quality assessment will allow users to decide whether they can rely on the conclusions drawn from the data or whether the suggested data is suitable for them. 
To solve this problem, we must tackle several challenges. First, since DP protects the database, users can only observe noisy aggregate statistics, which can be challenging to summarize into a quality score. Second, if the number of constraints is large (e.g., if they were generated with an automatic system~\cite{BleifussKN17,DBLP:journals/pvldb/LivshitsHIK20,PenaAN21}), translating each constraint to an SQL {\tt COUNT} query and evaluating it over the database with a DP mechanism may lead to low utility since the number of queries is large, allowing for only a tiny portion of the privacy budget to be allocated to each query. }

Hence, our proposed solution employs \emph{inconsistency measures}~\cite{thimm2017compliance, parisi2019inconsistency, LivshitsKTIKR21,DBLP:conf/sum/Bertossi18,DBLP:conf/ecsqaru/GrantH13,DBLP:journals/ijar/GrantH23,LivshitsK22,LivshitsBKS20} 
that quantify data quality with a single number for all constraints, essentially yielding {\em a data quality score.} This approach aligns well with DP, as such measures give a single aggregated numerical value representing data quality, regardless of the given number of constraints. 
As inconsistency measures, we adopt the ones studied by \citet{LivshitsKTIKR21} following earlier work on the topic~\cite{thimm2017compliance, parisi2019inconsistency, 
DBLP:conf/sum/Bertossi18,DBLP:conf/ecsqaru/GrantH13}. 
This work discusses and studies five measures, including (1) the {\em drastic measure}, a binary indicator for whether the database contains constraint violations, the (2) \emph{maximal consistency measure}, counting the number of maximal tuple sets for which addition of a single tuple will cause a violation, the (3) {\em minimal inconsistency measure}, counting the number of minimal tuple sets that violate a constraint, 
the (4) \emph{problematic measure}, counting the number of constraint violations, the (5) \emph{minimal repair measure}, counting the minimal tuple deletions needed to achieve consistency. 
These measures apply to various inconsistency measures that have been studied in the literature of data quality management, including functional dependencies, the more general conditional functional dependencies~\cite{bohannon2007conditional}, and the more general denial constraints~\cite{ChomickiM05}.
We show that the first two measures are incompatible for computation in the DP setting (\Cref{sec:hardness}), focusing throughout the paper on the latter three. 

An approach that one may suggest to computing the inconsistency measures in a DP manner is to translate the measure into an SQL query and then compute the query using an SQL engine that respects DP~\cite{tao2020computing,dong2022r2t,kotsogiannis2019privatesql,johnson2018towards}. Specifically relevant is R2T~\cite{dong2022r2t}, the state-of-the-art DP mechanism for SPJA queries, including self-joins. Nevertheless, when considering the three measures of inconsistency we focus on, this approach has several drawbacks. One of these measures (number of problematic tuples) requires the SQL {\tt DISTINCT} operator that R2T cannot handle. In contrast, another measure (minimal repair) cannot be expressed at all in SQL, making such engines irrelevant. 

Contrasting the first approach, the approach we propose and investigate here models the violations of the integrity constraints as a \emph{conflict graph} and applies DP techniques for graph statistics. In the conflict graph, nodes are tuple identifiers, and there is an edge between a pair of tuples if this pair violates a constraint. 
Then, each inconsistency measure can be mapped to a specific graph statistic. 
Using this view of the problem allows us to leverage prior work on releasing graph statistics with DP~\cite{hay2009accurate,KasiviswanathanNRS13,day2016publishing} and develop tailored mechanisms for computing inconsistency measures with DP. 

To this end, we harness graph projection techniques from the state-of-the-art DP algorithms~\cite{day2016publishing} that truncate the graph to achieve DP. While these algorithms have proven effective in prior studies on social network graphs, they may encounter challenges with conflict graphs arising from their unique properties. To overcome this, we devise a novel optimization for choosing the truncation threshold. 
We further provide a DP mechanism for the minimal repair measure that augments the classic 2-approximation of the vertex cover algorithm~\cite{vazirani1997approximation} to restrict its sensitivity and allow effective DP guarantees with high utility. 
Our experimental study shows that our novel algorithms prove efficacious for different datasets with various conflict graph sizes and sparsity levels.

\begin{table}
\centering
\begin{tabular}{|l|l|l|l|}
\hline
\textbf{DP Algorithms} & \textbf{Adult~\cite{misc_adult_2}}                  & \textbf{Flight~\cite{flight}}                 & \textbf{Stock~\cite{oleh_onyshchak_2020}}                  \\ \hline
\textbf{R2T~\cite{dong2022r2t}}           & $0.17\pm 0.01$ & $0.12\pm0.03$ & $123.19\pm 276.73$    \\ \hline
\textbf{This work}  & $0.10 \pm 0.05$ & $0.10 \pm 0.20$  & $0.07\pm 0.08$ \\ \hline
\end{tabular}
  \caption{Relative errors for a SQL approach vs our approach to compute the minimal inconsistency measure at $\epsilon=1$} \label{tab:intro_comparison}
\end{table}

Beyond handling the two inconsistency measures that R2T cannot handle, our approach provides considerable advantages even for the one R2T can handle (number of conflicts). 
For illustration, \Cref{tab:intro_comparison} shows the results of evaluating R2T~\cite{dong2022r2t} on three datasets with the same privacy budget of $1$ for this measure. Though R2T performed well for the Adult and Flight datasets, it reports more than 120\% relative errors for the Stock dataset with very few violations. 
On the other hand, our approach demonstrates strong performance across all three datasets.

The main contributions of this paper are as follows.
First, we formulate the novel problem of computing inconsistency measurements with DP for private datasets and discuss the associated challenges, including a thorough analysis of the sensitivity of each measure.
Second, we devise several algorithms that leverage the conflict graph and algorithms for releasing graph statistics under DP to estimate the measures that we have determined are suitable. Specifically, we propose a new optimization for choosing graph truncation threshold that is tailored to conflict graphs and augment the classic vertex cover approximation algorithms to bound its sensitivity to $2$ to obtain accurate estimates of the measures. 
Third, we present experiments on five real-world datasets with varying sizes and densities to show that the proposed DP algorithms are efficient in practice. Our average error across these datasets is 1.3\%-67.9\% compared to the non-private measure.

\section{Preliminaries}\label{sec:prelim}
We begin with some background that we need to describe the concept of inconsistency measures for private databases. 

\def\tids{\mathit{tids}}

\subsection{Database and Constraints}\label{sec:prelim-integrity-constraints}
We consider a single-relation schema $\attrset = (A_1, \ldots, A_m)$, which is a vector of distinct attribute names $A_i$, each associated with a domain $\dom(A_i)$ of values. A database $D$ over $\attrset$ is a associated with a set $\tids(D)$ of \emph{tuple identifiers}, and it maps every identifier $i\in\tids(D)$ to a tuple $D[i]=(a_1,\dots,a_m)$ in $A_1\times\dots\times A_m$. A database $D'$ is a \emph{subset} of $D$, denoted $D'\subseteq D$, if  $D'$ is obtained from $D$ by deleting zero or more tuples, that is, $\tids(D')\subseteq\tids(D)$ and $D'[i]=D[i]$ for all $i\in\tids(D')$.

Following previous work on related topics~\cite{DBLP:journals/pvldb/GeMHI21,DBLP:journals/pvldb/LivshitsHIK20}, we focus on Denial Constraints (DCs) on pairs of tuples. Using the formalism of Tuple Relational Calculus, such a DC is of the form
$\forall t,t' \neg \big(\varphi_1\land\dots\land\varphi_k\big)$ where each $\varphi_j$ is a comparison $\sigma_1\circ\sigma_2$ so that: (a) each of $\sigma_1$ and $\sigma_2$ is either $t[A_i]$, or $t'[A_i]$, or $a$, where $A_i$ is some attribute and $a$ is a constant value, and (b) the operator $\circ$ belongs to set $\{<, >, \leq, \geq, =, \neq\}$ of comparisons. This DC states that there cannot be two tuples $t$ and $t'$ such that all comparisons $\varphi_j$ hold true (i.e., at least one $\varphi_j$ should be violated).

Note that the class of DCs of the form that we consider generalizes the class of Functional Dependencies (FDs). An FD has the form $X\rightarrow Y$ where $X,Y\subseteq\{A_1,\dots,A_m\}$, and it states that every two tuples that agree on (i.e., have the same value in each attribute of) $X$ must also agree on $Y$.

In the remainder of the paper, we denote by $\constraintset$ the given set of DCs. A database $D$ \emph{satisfies} $\constraintset$, denoted $D \models \constraintset$, if $D$
satisfies every DC in $\constraintset$; otherwise, $D$ \emph{violates} $\constraintset$, denoted $D \not\models \constraintset$.

\def\set#1{\mathord{\{#1\}}}
A common way of capturing the violations of $\constraintset$ in $D$ is through the \emph{conflict graph} $\graph$, which is the graph $(V, E)$, where $V=\tids(D)$ an edge $e = \set{i,j} \in E$ occurs whenever the tuples $D[i]$ and jointly $D[j]$ violate $\constraintset$. 
To simplify the notation, we may write simply $\mathcal{G}$ instead of $\graph$ when there is no risk of ambiguity.

\begin{example}
     Consider a dataset that stores information about capital and country as shown in Figure~\ref{fig:db_to_graph}. Assume an FD constraint $\sigma: \mbox{Capital} \rightarrow \mbox{Country}$ between attributes capital and country that says that the country of two tuples must be the same if their capital is the same. Assume the dataset has 3 rows (white color) and a neighboring dataset has an extra row (grey color) with the typo in its country attribute. As shown in the right side of Figure~\ref{fig:db_to_graph}, the dataset with 4 rows can be converted to a conflict graph with the nodes corresponding to each tuple and edges referring to conflicts between them. The $\mininconsistency$ measure computes the size of the set of all minimally inconsistent subsets $|MI_\constraintset(D)|$ (the number of edges in the graph) for this dataset. 
     \label{example:running_example}
\end{example}

\begin{figure}
    \centering
    \begin{tikzpicture}
    \node[draw, align=center] (table) at (1.5,0) {
        \begin{tabular}{|c|c|c|}
            \hline
            \textbf{ID} & \textbf{Capital} & \textbf{Country} \\
            \hline
            1 & Ottawa & Canada \\
            \hline
            2 & Ottawa & Canada \\
            \hline
            3 & Ottawa & Canada \\
            \hline
             \rowcolor{lightgray}
            4 & Ottawa & Kanada \\
            \hline
        \end{tabular}
    };

    
    \node[draw, circle] (node1) at (4.5,0.5) {1};
    \node[draw, circle] (node2) at (5.5,0.5) {2};
    \node[draw, circle] (node3) at (5,-0.7) {3};
    \node[draw, circle] (node4) at (5,0) {4};
    \draw (node1) -- (node4);
    \draw (node2) -- (node4);
    \draw (node3) -- (node4);
    
    \draw[->] (3.7, 0) -- (4.2, 0);
\end{tikzpicture}
   \vspace{0.2cm} 
    \caption{Toy example dataset to show a worst-case analysis. An additional row may violate all other rows in the dataset (left). Easier analysis can be done by instead converting the dataset into its corresponding conflict graph (right).
    }
    \label{fig:db_to_graph}
\end{figure}
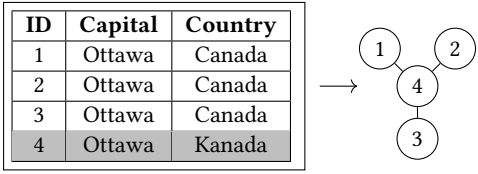

\subsection{Inconsistency Measures}\label{sec:prelim-inconsistency}

\def\MI{\mathord{\mathsf{MI}}}
\def\MC{\mathord{\mathsf{MC}}}

Inconsistency measures have been studied in previous work~\cite{DBLP:conf/sum/Bertossi18,DBLP:conf/ecsqaru/GrantH13,DBLP:journals/ijar/GrantH23,LivshitsK22,LivshitsBKS20} as a means of measuring database quality for a set of DCs. 
We adopt the measures and notation of \citet{LivshitsKTIKR21}. Specifically, they consider five inconsistency measures that capture different aspects of the dataset quality. To define these concepts, we need some notation. Given a database $D$ and a set $\constraintset$ of anti-monotonic integrity constraints, we denote by $\MI_\constraintset(D)$ the set of all \emph{minimally inconsistent subsets}, that is, the sets $E \subseteq D$ such that $E \not\models \constraintset$ but $E' \models \constraintset$ for all $E' \subsetneq E$. We also denote by $\MC_\constraintset(D)$ the set of all \emph{maximal consistent subsets} of $D$; that is, the sets $E \subseteq D$ such that $E \models \constraintset$ and $E' \not\models \constraintset$ whenever $E \subsetneq E' \subseteq D$.

\begin{definition}[Inconsistency measures~\cite{LivshitsKTIKR21}]\label{def:inconsistencymeasure}
Given a database $D$ and a set of DCs $\constraintset$, the inconsistency measures are defined as follows:
\begin{itemize}
    \item Drastic measure: $\drastic(D, \constraintset) = 1$ if $D \models \constraintset$ and 0 otherwise.
    \item Minimal inconsistency measure:  $\mininconsistency(D,\constraintset) = |\MI_\constraintset(D)|$.
    \item Problematic measure: $\problematic(D,\constraintset) = |\cup \MI_\constraintset(D)|$.
    \item Maximal consistency measure: $\maxconsistency(D,\constraintset) = |\MC_\constraintset(D)|$ \footnote{We drop "-1" from the original definition~\cite{LivshitsKTIKR21} for simplicity.}.
    \item Optimal repair measure: $\repair(D,\constraintset) = |D| - |D_R|$, where $|D_R|$ is the largest subset $D_R \subseteq D$ such that $D_R \models \constraintset$.
\end{itemize}
\end{definition}

Observe that inconsistency measures also have a graphical interpretation for the conflict graph \graph. For instance, the drastic measure $\drastic(D,\constraintset)$ corresponds to a binary indicator for whether there exists an edge in $\graph$. We summarize the graph interpretation of these inconsistency measures in Table~\ref{tab:summary}.

\eat{
We can also define these inconsistency measures in terms of a conflict graph \graph as follows~\cite{LivshitsKTIKR21}:

\xh{may drop this observation chunk, and points to Table~\ref{tab:summary}}.
\begin{observation}\label{prop:graph-defs}
Given a database $D$, a set of DCs \constraintset, and their corresponding conflict graph \graph, the following holds: 
\begin{enumerate}
    \item $\drastic(D,\constraintset)$ is 1 if there exists at least an edge in \graph, 0 otherwise. 
    \item\label{itm:mininc} $\mininconsistency(D,\constraintset)$ is the number of edges in \graph.
    \item $\problematic(D,\constraintset)$ is the number of nodes in \graph\ that have a positive degree.
    \item $\maxconsistency(D,\constraintset)$ is the number of maximal independent sets in \graph.
    \item $\repair(D,\constraintset)$ is the minimum vertex cover size of \graph. 
\end{enumerate}
\end{observation}
}

\subsection{Differential Privacy}~\label{sec:prelim-dp}
Differential privacy (DP) \cite{dwork2006calibrating} aims to protect private information in the data. In this work, we consider the unbounded DP setting where we define two neighboring datasets, $D$ and $D'$ (denoted by $D \neighbor D'$) if $D'$ can be transformed from $D$ by adding or removing one tuple in $D$.

\def\prob{\mathrm{Pr}}

\begin{definition}[Differential Privacy \cite{dwork2006calibrating}]
    An algorithm $\mathcal{M}$ is said to satisfy $\epsilon$-DP if for all $S\subseteq \mbox{Range}(\mathcal{M})$ and for all $D\neighbor D'$,
        $$\prob[\mathcal{M}(D)\in S] \leq e^\epsilon \prob[\mathcal{M}(D')\in S]\,.$$
\end{definition}

The privacy cost is measured by the parameters $\epsilon$, often called the \emph{privacy budget}. The smaller $\epsilon$ is, the stronger the privacy is.  
Complex DP algorithms can be built from the basic algorithms following two essential properties of differential privacy:

\begin{proposition}[DP Properties~\cite{Dwork06,DworkKMMN06}]\label{prop:DP-comp-post}  
    The following hold.
    \begin{enumerate}
        \item {\bf (Sequential composition)} 
        If $\mathcal{M}_i$ satisfies $\epsilon_i$-DP, then the sequential application of $\mathcal{M}_1$, $\mathcal{M}_2, \cdots$,  satisfies $(\sum_{i} \epsilon_i)$-DP.
        
        \item {\bf (Parallel composition)} 
        If each $\mathcal{M}_i$ accesses disjoint sets of tuples, they satisfy $(\max_i \epsilon_i)$-DP together.
        
        \item {\bf (Post-processing)} 
        Any function applied to the output of an $\epsilon$-DP mechanism $\mathcal{M}$ also satisfies $\epsilon$-DP.
    \end{enumerate}
\end{proposition}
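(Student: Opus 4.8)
The plan is to establish each of the three properties separately, in each case reducing to the defining likelihood-ratio inequality of $\epsilon$-DP: for every neighboring pair $D \neighbor D'$ and every measurable output event, the probability under $D$ is at most $e^{\epsilon}$ times the probability under $D'$. Since all three statements are closure properties, the common strategy is to express the probability of an event for the composite (or post-processed) mechanism in terms of the probabilities of the constituent mechanisms, and then bound the resulting ratio using the per-mechanism guarantees. Throughout, I would fix an arbitrary neighboring pair $D \neighbor D'$ at the outset, so that each argument is purely about output distributions.

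For \textbf{sequential composition}, I would first treat two mechanisms and then extend by induction. Writing the joint output of $\mathcal{M}_1$ and $\mathcal{M}_2$ as a pair $(s_1,s_2)$, the key step is to factor the joint likelihood ratio. In the non-adaptive case, where the two mechanisms draw independent randomness, the joint density factors as a product, so the ratio equals $\frac{\prob[\mathcal{M}_1(D)=s_1]}{\prob[\mathcal{M}_1(D')=s_1]}\cdot\frac{\prob[\mathcal{M}_2(D)=s_2]}{\prob[\mathcal{M}_2(D')=s_2]}$, which is bounded by $e^{\epsilon_1}\cdot e^{\epsilon_2}=e^{\epsilon_1+\epsilon_2}$. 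In the adaptive case, where $\mathcal{M}_2$ may depend on the observed $s_1$, I would condition on $s_1$ and apply the $\epsilon_2$-DP guarantee of $\mathcal{M}_2$ pointwise for that fixed realization, then combine with the $\epsilon_1$-DP guarantee of $\mathcal{M}_1$. A straightforward induction on the number of mechanisms then yields the $\sum_i \epsilon_i$ bound.

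For \textbf{parallel composition}, the crucial observation is about the neighboring relation rather than about randomness. Since $D$ and $D'$ differ by adding or removing a single tuple, and the mechanisms operate on a fixed partition of the data into disjoint tuple sets, that one differing tuple lies in exactly one block $j$. Hence the inputs to every $\mathcal{M}_i$ with $i\neq j$ are \emph{identical} under $D$ and $D'$, while the inputs to $\mathcal{M}_j$ are themselves neighboring. Using independent randomness across blocks to factor the joint likelihood ratio into a product, every factor with $i\neq j$ is exactly $1$, and the single factor for block $j$ is at most $e^{\epsilon_j}\le e^{\max_i \epsilon_i}$. I expect this to be the step that most benefits from care: one must justify that the add/remove model confines the change to a single block and that the remaining blocks contribute trivial factors, which is precisely what drives the bound down from $\sum_i \epsilon_i$ to $\max_i \epsilon_i$.

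For \textbf{post-processing}, I would handle a deterministic map $f$ first: for any output event $T$, pulling back through $f$ gives $\prob[f(\mathcal{M}(D))\in T]=\prob[\mathcal{M}(D)\in f^{-1}(T)]$, and applying $\epsilon$-DP to the event $f^{-1}(T)$ immediately yields the $e^{\epsilon}$ bound. For a randomized post-processor, the main obstacle is handling its internal randomness; I would model $f$ as a deterministic function of $\mathcal{M}$'s output together with an independent random seed, condition on the seed to invoke the deterministic case, and then average over the seed, noting that the $e^{\epsilon}$ factor is uniform across seeds and therefore survives the averaging. Overall, I anticipate the adaptive conditioning in sequential composition and the seed-averaging in randomized post-processing to be the only genuinely delicate points, with parallel composition hinging entirely on the disjointness-plus-single-tuple argument.
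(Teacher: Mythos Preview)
Your proposal correctly outlines the standard arguments for all three closure properties, and the delicate points you flag (adaptive conditioning for sequential composition, the single-block argument for parallel composition, seed-averaging for randomized post-processing) are exactly the right ones. However, the paper does not supply its own proof of this proposition: it is stated as a cited result from~\cite{Dwork06,DworkKMMN06} and used as a black box throughout (e.g., in the proofs of Theorems~\ref{thm:privacy_proof_dc_oblivious} and~\ref{thm:privacy_proof_dc_aware}). So there is nothing to compare against; your write-up is a correct reconstruction of the classical proofs that the paper simply imports.
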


Many applications in DP require measuring the change in a particular function's result over two neighboring databases. The supremum over all pairs of neighboring databases is called the \emph{sensitivity} of the function. 

\begin{definition}[Global sensitivity \cite{DworkMNS16}]\label{def:sensitivity}
Given a function $f: \mathcal{D} \rightarrow \mathbb{R}$,  the  sensitivity of $f$ is 
\begin{equation}
\Delta_f = \max\limits_{D'\approx D}|f(D) - f(D')|.
\end{equation}
\end{definition}

\paratitle{Laplace mechanism}
The Laplace mechanism~\cite{DworkMNS16} is a common building block in DP mechanisms and is used to get a noisy estimate for queries with numeric answers. The noise injected is calibrated to the query's global sensitivity.

\begin{definition}[Laplace Mechanism~\cite{DworkMNS16}]\label{def:LM}
Given a database $D$, a function $f$ : $\mathcal{D} \rightarrow \mathbb{R}$, and a privacy budget $\epsilon$, the Laplace mechanism $\mathcal{M}_L$ returns $f(D) + \nu_q$, where $\nu_q\sim Lap(\Delta_f/\epsilon)$. 
\end{definition}

The Laplace mechanism can answer many numerical queries, but the exponential mechanism can be used in many natural situations requiring a non-numerical output. 

\paratitle{Exponential mechanism} The exponential mechanism~\cite{mcsherry2007mechanism} expands the application of DP by allowing a non-numerical output. 

\begin{definition}[Exponential Mechanism~\cite{mcsherry2007mechanism}]\label{def:EM}
Given a dataset $D$, a privacy budget $\epsilon$, a set $\Theta$ of output candidates, a quality function $q(D, \theta_i) \in \mathbb{R}$, the exponential mechanism $\mathcal{M}_{EM}$ outputs a candidate $\theta_i \in \Theta$ with probability proportional to $\exp \left(\frac{\epsilon q(D, \theta_i)}{2\Delta_q}\right)$, where $\Delta_q$ is the sensitivity of the quality function $q$.
\end{definition}

\eat{
The exponential mechanism offers strong utility guarantees proportional to the number of candidates $|\Theta|$ and the sensitivity of the quality function $\Delta q$ as formalized by Theorem~\ref{thm:utility_expo}.

\xh{we can remove this theorem}
\begin{theorem}[\cite{mcsherry2007mechanism}]\label{thm:utility_expo}
Let $D$ be a private dataset, and $OPT(D)=\max _{\theta \in \Theta} q(D, \theta)$ be the score attained by the best object $\theta$ with respect to the dataset $D$ due to the exponential mechanism $M_{EM}(D)$. If the set of objects that achieve the $OPT(D)$, $\Theta^*=\{\theta \in \Theta: q(D, \theta)=OPT(D)\}$ has size $|\Theta^*| \geq 1$. Then
$$ \Pr[q(D, M_E(D)) \leq OPT (D) - \frac{2\Delta_q}{\epsilon_{em}} (\ln |\Theta| + t) ] \leq \exp(-t)$$
where $\epsilon_{em}$ is the privacy budget for the exponential mechanism and $\Delta_q$ is the sensitivity of the quality function $q$. 
\end{theorem}
}

\begin{table*}[h]
\centering
\footnotesize
\begin{tabular}{|l|ll|lll|}
\hline
                                      & \multicolumn{2}{c|}{\textbf{Non-private analysis \cite{LivshitsKTIKR21}}}                           & \multicolumn{3}{c|}{\textbf{DP analysis (this work)}}                                                                         \\ \hline
\textbf{Inconsistency Measures for $D$} & \multicolumn{1}{l|}{\textbf{Graph Interpretations in $\graph(V,E)$}}     & \textbf{Computation cost} & \multicolumn{1}{l|}{\textbf{Sensitivity}}      & \multicolumn{1}{l|}{\textbf{Computation cost}} & \textbf{Utility} \\ \hline
Drastic measure     $\drastic$                       & \multicolumn{1}{l|}{if exists an edge}                        & $O(|\Sigma|n^2)$    & \multicolumn{1}{l|}{1}                         & \multicolumn{1}{l|}{N.A.}                                    &    N.A.            \\ \hline
Minimal inconsistency measure    $\mininconsistency$       & \multicolumn{1}{l|}{the no. of edges}                      & $O(|\Sigma|n^2)$  & \multicolumn{1}{l|}{$n$}                    & \multicolumn{1}{l|}{$O(|\Sigma|n^2 + |\Theta|m)$}                                 &       $-\tilde{q}_{\opt}(D,\epsilon_2)+O(\frac{\theta_{\max}\ln|\Theta|}{\epsilon_1})$           \\ \hline
Problematic measure         $\problematic$               & \multicolumn{1}{l|}{the no. of nodes with positive degrees} & $O(|\Sigma|n^2)$  & \multicolumn{1}{l|}{$n$}                    & \multicolumn{1}{l|}{$O(|\Sigma|n^2+|\Theta|m)$}                                 &         $-\tilde{q}_{\opt}(D,\epsilon_2)+O(\frac{\theta_{\max}\ln|\Theta|}{\epsilon_1})$               \\ \hline
Maximal consistency measure    $\maxconsistency$           & \multicolumn{1}{l|}{the no. of maximal independent sets}   & \#P-complete              & \multicolumn{1}{l|}{$O(3^n)$} & \multicolumn{1}{l|}{N.A.}                                    &    N.A.              \\ \hline
Optimal repair measure        $\repair$               & \multicolumn{1}{l|}{the minimum vertex cover size}            & NP-hard                   & \multicolumn{1}{l|}{1}                    & \multicolumn{1}{l|}{$O(|\Sigma|n^2+ m)$}                                 &             $\mathcal{I}_R(D,\Sigma) + O(1/\epsilon) $     \\ \hline
\end{tabular}
\caption{Summary of Inconsistency Measures, $n=|D|=|\graph.V|$, $m=|\graph.E|$, $\Theta$ is the candidate set.}\label{tab:summary} 
\end{table*}

\eat{
\begin{table*}[t]
\centering
\small
\begin{tabular}{|l|l|l|l|l|l|}
\hline 
\textbf{Inconsistency Measures for $D$} & \textbf{Graph Interpretations in $\graph$}     & \textbf{\begin{tabular}[c]{@{}l@{}}Computation cost \\ (Non-private)\end{tabular}} && \textbf{Sensitivity}      & \textbf{Our Attempt} \\ \hline
Drastic measure     $\drastic$                  & if exists an edge                        & $O(n^2)$                                                          && 1                         & Section~\ref{sec:hardness}                     \\ \hline
Minimal inconsistency measure    $\mininconsistency$     & the number of edges                      & $O(n^2)$                                                            & & $O(n)$                       &    Section~\ref{sec:graph-algorithms-graphproj}                      \\ \hline
Problematic measure         $\problematic$          & the number of nodes with positive degrees & $O(n^2)$                                                            & & $O(n)$                     & Section~\ref{sec:graph-algorithms-graphproj}                           \\ \hline
Maximal consistency measure    $\maxconsistency$       & the number of maximal independent sets   & \#P-complete                                                                    &   & $O(3^n)$    &    Section~\ref{sec:hardness}                   \\ \hline
Optimal repair measure        $\repair$        & the minimum vertex cover size            & NP-hard                                                                          &  & $O(n)$                      & Section~\ref{sec:vertex_cover}                           \\ \hline
\end{tabular}
\caption{Summary of Inconsistency Measures, $n=|D|=|\graph.V|$ \xh{Will complete later}}\label{tab:summary}
\end{table*}
}

\paratitle{DP for graphs}
When the dataset is a graph $\mathcal{G} = (V, E)$, the standard definition can be translated to two variants of DP~\cite{hay2009accurate}. The first is \emph{edge-DP} where two graphs are neighboring if they differ on one edge, and the second is \emph{node-DP}, when two graphs are neighboring if one is obtained from the other by removing a node (and its incident edges). The two definitions offer different kinds of privacy protection.  In our work, as we deal with databases and their corresponding conflict graphs, adding or removing a tuple of the dataset translates to node-DP. The corresponding definition of neighboring datasets changes to neighboring graphs where two graphs $\mathcal{G}$ and $\mathcal{G}'$ are called neighboring $\mathcal{G} \neighbor \mathcal{G}^\prime$ if $\mathcal{G}'$ can be transformed from $\mathcal{G}$ by adding or removing one node along with all its edges in $\mathcal{G}$. Node-DP provides a stronger privacy guarantee than edge-DP since it protects an individual's privacy and all its connections, whereas edge-DP concerns only one such connection.  

\begin{definition}[Node sensitivity]\label{def:node_sensitivity}
Given a function $f$ over a graph $\mathcal{G}$, the sensitivity of $f$ is 
$\Delta_f = \max\limits_{\mathcal{G}'\approx \mathcal{G}}|f(\mathcal{G}) - f(\mathcal{G}')|$.
\end{definition}

The building blocks of DP, such as the Laplace and Exponential mechanisms, also work on graphs by simply substituting the input to a graph and the sensitivity to the corresponding node sensitivity. 

\paratitle{Graph projection} Graph projection algorithms refer to a family of algorithms that help reduce the node sensitivity of a graph by 
truncating the edges and, hence, bounding the maximum degree of the graph. Several graph projection algorithms exist~\cite{KasiviswanathanNRS13, blocki2013differentially}, among which the \textit{edge addition} algorithm~\cite{day2016publishing} stands out for its effectiveness in preserving most of the underlying graph structure. The edge addition algorithm denoted by $\pi_\theta^\Lambda$, takes as input the graph $\mathcal{G}= \graph = (V, E)$, a bound on the maximum degree of each vertex ($\theta$), and a stable ordering of the edges ($\Lambda$) to output a projected $\theta$-bounded graph denoted by $\mathcal{G}_\theta = \pi_\theta^{\Lambda}(\mathcal{G})$. 
\begin{definition}[Stable ordering~\cite{day2016publishing}]~\label{def:stable_ordering} A graph edge ordering $\Lambda$ is stable if and only if given two neighboring graphs $\mathcal{G} = (V, E)$ and $\mathcal{G}' = (V', E')$ that differ by only a node, $\Lambda(\mathcal{G})$
and $\Lambda(\mathcal{G}')$ are consistent in the sense that if two edges appear
both in $\mathcal{G}$ and $\mathcal{G}'$, their relative ordering are the same in  $\Lambda(\mathcal{G})$ and
$\Lambda(\mathcal{G}')$.
\end{definition}
The stable ordering of edges, $\Lambda(\mathcal{G})$, can be any deterministic ordering of all the edges $E$ in the $\mathcal{G}$. Such stabling edge ordering can be easily obtained in practice. For example, it could be an ordering (e.g. alphabetical ordering) based on the node IDs of the graph such that in the neighboring dataset $\mathcal{G}'$, the edges occur in the same ordering as $\mathcal{G}$. The edge addition algorithm starts with an empty set of edges and operates by adding edges in the same order as $\Lambda$ so that each node has a maximum degree of $\theta$. 
To simplify the notation, in the remainder of the paper, we drop $\Lambda$ and denote the edge addition algorithm $\pi_\theta^{\Lambda}(\mathcal{G})$ as $\pi_\theta(\mathcal{G})$.

\ifpaper
\else
\begin{algorithm}
\caption{Edge addition algorithm~\cite{day2016publishing}}\label{algo:day_edgeadd}
    \KwData{Graph $\mathcal{G} (V,E)$, Bound $\theta$, Stable ordering $\Lambda$}
    \KwResult{$\theta$-bounded graph $\pi_\theta(\mathcal{G})$}
    $E^\theta \leftarrow \emptyset ; d(v) \leftarrow 0$ for each $v \in V$ \;
    \For{$e=(u,v) \in \Lambda$} {
        \If{$d(u)<\theta \& d(v)<\theta$}{
            $E^\theta \leftarrow E^\theta \cup\{e\} \; d(u) \leftarrow d(u)+1 , d(v) \leftarrow d(v)+1$ \;
        }
    }
    return $G^\theta=(V, E^\theta)$\;
\end{algorithm}
\fi
\section{Inconsistency Measures under DP}\label{sec:problem}

\paratitle{Problem Setup} Consider a private dataset $D$, a set of DCs $\constraintset$, and a privacy budget $\epsilon$. For an inconsistency measure $\mathcal{I}$ from the set $\{\drastic,\mininconsistency,\problematic,\maxconsistency,\repair\}$ (\Cref{def:inconsistencymeasure}), we would like to design an $\epsilon$-DP algorithm $\mathcal{M}(D,\Sigma,\epsilon)$ such that with high probability, $|\mathcal{M}(D, \constraintset, \epsilon) - \mathcal{I}(D, \constraintset)|$ is bounded with a small error.

\paratitle{Sensitivity Analysis} 
We first analyze the sensitivity of the five inconsistency measures and discuss the challenges to achieving DP.  
\begin{proposition}\label{prop:sensitivity}
Given a database $D$ and a set of DCs $\constraintset$, where $|D|=n$, the following holds:
\begin{enumerate*}
    \item The global sensitivity of \drastic\ is 1. 
    \item The global sensitivity of 
    \mininconsistency\ is $n$.       
    \item The global sensitivity of \problematic\ is $n$.
    \item The global sensitivity of \maxconsistency\ is exponential in $n$.
    \item The global sensitivity of \repair\ is 1.
\end{enumerate*}
\end{proposition}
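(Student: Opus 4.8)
The plan is to work entirely in the conflict graph $\mathcal{G}=(V,E)$ with $n=|V|=|\tids(D)|$, exploiting the graph interpretations of the five measures: under node-DP, adding or removing one tuple is exactly adding or removing one node together with all its incident edges. I would then prove the five claims one at a time, since each reduces to a clean statement about how a single node perturbs a different graph statistic. The two bounded cases ($\drastic$ and $\repair$) admit exact arguments; the two linear cases ($\mininconsistency$ and $\problematic$) follow from a degree bound together with a matching worst-case construction; and the exponential case ($\maxconsistency$) needs an explicit family of neighboring graphs whose numbers of maximal independent sets differ exponentially.

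For $\drastic$ the range is $\{0,1\}$, so the sensitivity is at most $1$; taking a consistent $D$ and adding one tuple that conflicts with an existing one flips the value, so it is exactly $1$. For $\repair$, I would first observe that a largest consistent subset is a maximum independent set of $\mathcal{G}$, so $\repair(D,\constraintset)=n-\alpha(\mathcal{G})=\tau(\mathcal{G})$, the minimum vertex-cover size. Adding a node $v$ changes $\tau$ by at most $1$: appending $v$ to a minimum cover of $\mathcal{G}$ covers every new edge, giving $\tau(\mathcal{G}')\le \tau(\mathcal{G})+1$, while deleting $v$ from a minimum cover of $\mathcal{G}'$ still covers all edges not incident to $v$, giving $\tau(\mathcal{G})\le \tau(\mathcal{G}')$. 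A two-tuple database to which a third, conflicting tuple is added shows the value $1$ is attained.

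For $\mininconsistency$, which counts edges, a single tuple is incident to at most $n$ conflict edges, so inserting or deleting it changes $|E|$ by at most $n$; the construction in Figure~\ref{fig:db_to_graph}, generalized so that one new tuple conflicts with all $n$ existing tuples, realizes exactly $n$ new edges and hence sensitivity $n$. For $\problematic$, which counts positive-degree nodes, deleting a node $v$ of degree $d$ can remove $v$ itself and turn each of its $d$ neighbors into an isolated (non-problematic) node, a change of at most $1+d\le n$; a star in which one center conflicts with $n-1$ mutually consistent tuples realizes the full drop from $n$ to $0$. Thus both measures have sensitivity linear in $n$; the precise constant (an additive $\pm 1$ depending on whether one measures against the smaller or larger neighbor, arising in the insertion direction for $\problematic$) is immaterial, the point being that naive output perturbation is infeasible.

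The delicate case, and the one I expect to be the main obstacle, is $\maxconsistency$, the number of maximal consistent subsets, equivalently the number of maximal independent sets of $\mathcal{G}$. For the lower bound I would take $\mathcal{G}$ to be a perfect matching on $n=2k$ nodes, which has exactly $2^{k}$ maximal independent sets (one endpoint per matching edge); removing a single node isolates its partner and leaves $k-1$ edges, yielding $2^{k-1}$ maximal independent sets, so two neighbors differ by $2^{k-1}=2^{n/2-1}$, already exponential. For the matching upper bound I would invoke the Moon--Moser bound that any $n$-node graph has at most $3^{n/3}$ maximal independent sets, so the difference between two neighbors is at most $3^{n/3}=O(3^{n})$. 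Combining the two gives sensitivity $2^{\Theta(n)}$, which is exactly what makes $\maxconsistency$ unsuitable for the DP mechanisms developed later. The subtle points here are verifying the maximality characterization for the matching and citing the right extremal bound, rather than any lengthy computation.
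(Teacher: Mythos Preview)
Your proposal is correct and follows essentially the same route as the paper: both work through the conflict-graph interpretation of each measure, handle $\drastic$ via its binary range, bound $\mininconsistency$ and $\problematic$ by the degree of the inserted/deleted node with a star as the tight example, argue $\repair$ via the vertex-cover characterization, and invoke the Moon--Moser bound for $\maxconsistency$. The one noteworthy difference is your lower-bound construction for $\maxconsistency$: you use a perfect matching on $2k$ nodes to get a gap of $2^{k-1}$, whereas the paper (implicitly) relies on the Moon--Moser extremal family of disjoint triangles, which gives $3^{n/3}-2\cdot 3^{n/3-1}=3^{n/3-1}$; both are exponential in $n$ and either suffices, though your argument is more explicit than the paper's one-line appeal to the formula. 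Your treatment of $\repair$ via the two inequalities $\tau(\mathcal{G}')\le\tau(\mathcal{G})+1$ and $\tau(\mathcal{G})\le\tau(\mathcal{G}')$ is also somewhat more careful than the paper's informal ``repairing this extra row removes all added violations'' argument.
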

\ifpaper
Due to space constraints, the proofs for this proposition and other theorems in the paper are in the full version~\cite{full_paper}.  \else
The proof can be found in Appendix~\ref{app:proof_sensitivity}.
\fi

\paratitle{Inadequacy of $\drastic$ and $\maxconsistency$}\label{sec:hardness}
We note that two inconsistency measures are less suitable for DP. First, the drastic measure \drastic\ is a binary measure that outputs $1$ if at least one conflict exists in the dataset and $0$ otherwise. Due to its binary nature, the measure's sensitivity is 1, meaning adding or removing a single row can significantly alter the result. Adding DP noise to such a binary measure can render it meaningless.

One way to compute the \drastic\ measure could be to consider a proxy of \drastic\ by employing a threshold-based approach that relies on \problematic\ or \mininconsistency. For example, if these measures are below a certain given number, we return $0$ and, otherwise, $1$. A recent work~\cite{PatwaSGMR23} addresses similar problems for synthetic data by employing the exponential mechanism. However, since we focus on directly computing the measures under DP, we leave this intriguing subject for future work. 

\revc{The \maxconsistency\ measure that computes the total number of independent sets in the conflict graph has both computational and high sensitivity issues. First, prior work~\cite{LivshitsBKS20} showed that computing \maxconsistency\ is \#P-complete and even approximating it is an NP-hard problem~\cite{DBLP:conf/ijcai/Roth93}. Even for special cases where \maxconsistency\ can be polynomially computed (when $\graph$ is $P_4$-free~\cite{KimelfeldLP20}), we show in Proposition~\ref{prop:sensitivity} that its sensitivity is exponential in the number of nodes of $\graph$. This significantly diminishes the utility of its DP estimate. Due to these challenges, we defer the study of $\drastic$ and $\maxconsistency$ to future work.}

\paratitle{Challenges for \repair, \mininconsistency, and \problematic}
Although the $\repair$ measure has a low sensitivity of 1 for its output range $[0,n]$, it is an NP-hard problem, and the common non-private solution
is to solve a linear approximation that requires solving a linear program~\cite{LivshitsBKS20}. However, in the worst case, this linear program again has sensitivity equal to $n$ (number of rows in the dataset) and may have up to $\binom n2$ number of constraints (all rows violating each other). Existing state-of-the-art DP linear solvers~\cite{hsu2014privately} are slow and fail for such a challenging task. Our preliminary experiments to solve such a linear program timed out after 24 hours with $n=1000$.
For \mininconsistency\ and \problematic\, they have polynomial computation costs and reasonable output ranges. However, they still have high sensitivity $n$. In the upcoming sections~\ref{sec:graph-algorithms-graphproj} and \ref{sec:vertex_cover}, we show that these problems can be alleviated by pre-processing the input dataset as a conflict graph and computing these inconsistency measures as private graph statistics.

\section{DP Graph Projection for $\mininconsistency$ and $\problematic$}\label{sec:graph-algorithms-graphproj}

Computing graph statistics such as edge count and degree distribution while preserving node-differential privacy (node-DP) is a well-explored area~\cite{day2016publishing, KasiviswanathanNRS13, blocki2013differentially}. 
Hence, in this section, we leverage the state-of-the-art node-DP approach for graph statistics to analyze the inconsistency measures $\mininconsistency$ and $\problematic$ as graph statistics on the conflict graph $\graph$. However, the effectiveness of this approach hinges on carefully chosen parameters.  We introduce two optimization techniques that consider the integrity constraints to optimize parameter selection and enhance the algorithm's utility.

\subsection{Graph Projection Approach for \mininconsistency\ and \problematic} 
A primary utility challenge in achieving node-DP for graph statistics is their high sensitivity. In the worst case, removing a single node from a graph of $n$ nodes can result in removing $(n-1)$ edges. To mitigate this issue, the state-of-the-art approach~\cite{day2016publishing} first projects the graph $\mathcal{G}$ onto a $\theta$-bounded graph $\mathcal{G}_{\theta}$, where the maximum degree is no more than $\theta$. Subsequently, the edge count of the transformed graph is perturbed by the Laplace mechanism with a sensitivity value of less than $n$. However, the choice of $\theta$ is critical for accurate estimation. 
A small $\theta$ reduces Laplace noise due to lower sensitivity, but results in significant edge loss during projection. Conversely, a $\theta$ close to $n$ preserves more edges but increases the Laplace noise. Prior work addresses this balance using the exponential mechanism (EM) to prefer a $\theta$ that minimizes the combined errors arising from graph projection and the Laplace noise.

\begin{algorithm}[t]
\caption{Graph projection approach for $\mininconsistency$ and $\problematic$}
\label{algo:graph_general}
    \KwData{Dataset $D$, constraint set $\constraintset$, candidate set $\Theta$, privacy budgets $\epsilon_1$ and $\epsilon_2$}
    \KwResult{DP inconsistency measure for $\mininconsistency$ or $\problematic$}
    
    Construct the conflict graph $\graph$\\
    
    Sample $\theta^*$ from $\Theta$  with a $\epsilon_1$-DP mechanism \commenttext{// Basic EM (Algorithm~\ref{algo:expo_mech_basic}); Optimized EM (Algorithm~\ref{algo:em_opt})}\
    
    Compute $\theta^*$-bounded graph $\mathcal{G}_{\theta^*} \gets \pi_{\theta^*}(\graph)$  
     \commenttext{// Edge addition algorithm~\cite{day2016publishing}}\\
    
    {\bf Return} $f(\mathcal{G}_{\theta^*})+\text{Lap}(\frac{\theta^*}{\epsilon_2})$
    \commenttext{//
    $f(\cdot)$ returns edge count for $\mininconsistency$ and the number of nodes with positive degrees for $\problematic$} \end{algorithm}

We outline this general approach in Algorithm~\ref{algo:graph_general}. This algorithm takes in the dataset $D$, the constraint set $\constraintset$, a candidate set $\Theta$ for degree bounds, and privacy budgets $\epsilon_1$ and $\epsilon_2$. These privacy budgets are later composed to get a final guarantee of $\epsilon$-DP.
We start by constructing the conflict graph $\graph$ generated from the input dataset $D$ and constraint set $\constraintset$ (line 1), as defined in \Cref{sec:prelim-integrity-constraints}. 
Next, we sample in a DP manner a value of $\theta^*$ from the candidate set $\Theta$ with the privacy budget $\epsilon_1$ (line 2). A baseline choice is an exponential mechanism detailed in Algorithm~\ref{algo:expo_mech_basic} to output a degree that minimizes the edge loss in a graph and the Laplace noise.
In line 3, we compute a bounded graph $\mathcal{G}_{\theta^*}$ using the edge addition algorithm~\cite{day2016publishing}, we compute a $\theta^*$-bounded graph $\mathcal{G}_{\theta^*}$ (detailed in Section~\ref{sec:prelim}). Finally, we perturb the true measure (either the number of edges for $\mininconsistency$ or the number of positive degree nodes for $\problematic$) on the projected graph, denoted by $f(\mathcal{G}_{\theta^*})$, by adding Laplace noise using the other privacy budget $\epsilon_2$ (line 4). 

The returned noisy measure at the last step has two sources of errors: (i) the bias incurred in the projected graph, i.e.,  $f(\mathcal{G})-f(\mathcal{G}_{\theta^*})$, and (ii) the noise from the Laplace mechanism with an expected square root error ${\sqrt{2}\theta^*}/{\epsilon_2}$. Both errors depend on the selected parameter $\theta^*$, and it is vital to select an optimal $\theta^*$ that minimizes the combined errors. Next, we describe a DP mechanism that helps select this parameter.

\begin{algorithm}[b]
\caption{EM-based first try for parameter selection}
\label{algo:expo_mech_basic}
    \KwData{Graph $\mathcal{G}$, candidate set $\Theta$, quality function $q$, privacy budget $\epsilon_1, \epsilon_2$ }
    \KwResult{Candidate $\theta^*$}
    Find the maximum value in $\Theta$ as $\theta_{\max}$ \\
    For each $\theta_i \in \Theta$, compute $q_{\epsilon_2}(\mathcal{G}, \theta_i)$ 
    \commenttext{// See Equation~\eqref{eq:quality_function}}
    \\
    
    Sample $\theta^*$ with prob $\propto \exp( \frac{\epsilon_1 q_{\epsilon_2}(\mathcal{G}, \theta_i)}{2\theta_{\max}})$ \\
    {\bf Return} $\theta^*$
\end{algorithm}

\paratitle{EM-based first try for parameter selection} 
The EM (Definition~\ref{def:EM}) specifies a quality function $q(\cdot,\cdot)$ that maps a pair of a database $D$ and a candidate degree $\theta$ to a numerical value. The optimal $\theta$ value for a given database $D$ should have the largest possible quality value and, hence, the highest probability of being sampled. We also denote $\theta_{\max}$ the largest degree candidate in $\Theta$ and use it as part of the quality function to limit its sensitivity.

The quality function we choose to compute the inconsistency measures includes two terms: for each $\theta\in \Theta$,
\begin{equation}~\label{eq:quality_function}
    q_{\epsilon_2}(\mathcal{G}, \theta) = - e_{\text{bias}}(\mathcal{G}, \theta) - {\sqrt{2}\theta}/{\epsilon_2}
\end{equation}
where the first term $e_{\text{bias}}$  captures the bias in the projected graph, and the second term ${\sqrt{2}\theta}/{\epsilon_2}$ captures the error from the Laplace noise at budget $\epsilon_2$. For the minimum inconsistency measure $\mininconsistency$, we define the bias term as
\begin{equation}
e_{\text{bias}}(\mathcal{G},\theta) = |\mathcal{G}_{\theta_{\max}}.E| - |\mathcal{G}_\theta.E|    
\end{equation}
i.e., the number of edges truncated at degree $\theta$ as compared to that at degree $\theta_{\max}$. For the problematic measure $\problematic$, we have 
\begin{equation}
e_{\text{bias}}(\mathcal{G}, \theta) = 
|\mathcal{G}_{\theta_{\max}}.V_{>0}| - |\mathcal{G}_{\theta}.V_{>0}|    
\end{equation} 
where $\mathcal{G}_\theta.V_{>0}$ denote the nodes with positive degrees. 

\begin{example}~\label{example:quality_function}
    Consider the same graph as Example~\ref{example:running_example} and a candidate set $\Theta = [1, 2, 3]$ to compute the $\mininconsistency$ measure (number of edges) with $\epsilon_2=1$. For the first candidate $\theta = 1$, as node 4 has degree 3, the edge addition algorithm would truncate 2 edges, for $\theta = 2$, 1 edge would be truncated and for $\theta = 3$, no edges would be truncated. We can, therefore, compute each term of the quality function for each $\theta$ given in Table~\ref{tab:example_quality_function}.  
    \begin{table}[]
        \centering
        \begin{tabular}{|c|c|c|c|}
             \hline
             $\theta$ & $e_{\text{bias}}$ & ${\sqrt{2}\theta}/{\epsilon_2}$ & q  \\
             \hline
             1 & 2 & $\sqrt{2}$ & $-2 - \sqrt{2}$\\
             2 & 1 & $2\sqrt{2}$ &  $-1 - 2\sqrt{2}$\\
             3 & 0 & $3\sqrt{2}$ & $-3\sqrt{2}$\\
             \hline
        \end{tabular}
        \caption{Quality function computation for $\mininconsistency$ for the conflict graph in Figure~\ref{fig:db_to_graph} when $\epsilon_2=1$}
        \label{tab:example_quality_function}
    \end{table}
    For this example, we see that $\theta=1$ has the best quality even if it truncates the most number of edges as the error from Laplace noise overwhelms the bias error.
\end{example}

We summarize the basic EM for the selection of the bounded degree in Algorithm~\ref{algo:expo_mech_basic}.
This algorithm has a complexity of $O(|\Theta|m)$, where $m$ is the edge size of the graph, as computing the quality function for each $\theta$ candidate requires running the edge addition algorithm once. The overall Algorithm~\ref{algo:graph_general} has a complexity of $O(|\Sigma|n^2+|\Theta|m)$, where the first term is due to the construction of the graph.

\paratitle{Privacy analysis}
The privacy guarantee of \cref{algo:graph_general} depends on the budget spent for the exponential mechanism and the Laplace mechanism, as summarized below. 
\begin{theorem}\label{thm:privacy_proof_dc_oblivious}
    ~\cref{algo:graph_general} satisfies $(\epsilon_1 + \epsilon_2)$-node DP for $\graph$ and $(\epsilon_1 + \epsilon_2)$-DP for the input database $D$.
\end{theorem}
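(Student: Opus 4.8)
The plan is to decompose Algorithm~\ref{algo:graph_general} into its two privacy-consuming randomized steps --- the exponential-mechanism selection of $\theta^*$ (line~2, instantiated by Algorithm~\ref{algo:expo_mech_basic}) and the Laplace release of $f(\mathcal{G}_{\theta^*})$ (line~4) --- and to argue that the first is $\epsilon_1$-node-DP and the second is $\epsilon_2$-node-DP. Everything else is post-processing: constructing $\graph$ from $(D,\constraintset)$ (line~1) is a deterministic function of the input, and the edge-addition projection $\mathcal{G}_{\theta^*}=\pi_{\theta^*}(\graph)$ (line~3) is a deterministic function of $\graph$ and the already-released $\theta^*$. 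Since $\theta^*$ is output by the first step and then fed into the second, I would invoke the adaptive form of sequential composition (Proposition~\ref{prop:DP-comp-post}, item~1) together with post-processing (item~3) to conclude that the whole pipeline is $(\epsilon_1+\epsilon_2)$-node-DP. Finally, because the map $D\mapsto\graph$ sends database-neighbors ($D\neighbor D'$, differing in one tuple) to graph-neighbors ($\graph\neighbor\mathcal{G}_\constraintset^{D'}$, differing in one node with its incident edges), and the rest of the mechanism touches $D$ only through $\graph$, node-DP for $\graph$ immediately yields $(\epsilon_1+\epsilon_2)$-DP for $D$ by the same post-processing principle.

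The first technical obligation is to show that Algorithm~\ref{algo:expo_mech_basic} is a faithful instance of the exponential mechanism (Definition~\ref{def:EM}) with budget $\epsilon_1$, which amounts to proving that the quality function has node-sensitivity $\Delta_q\le\theta_{\max}$ (exactly the constant in the sampling weight $\exp(\epsilon_1 q_{\epsilon_2}/(2\theta_{\max}))$). The data-independent term $\sqrt{2}\theta/\epsilon_2$ contributes nothing to sensitivity, so it suffices to bound the sensitivity of $e_{\text{bias}}(\mathcal{G},\theta)=|E(\pi_{\theta_{\max}}(\mathcal{G}))|-|E(\pi_{\theta}(\mathcal{G}))|$ (and its positive-degree analogue for $\problematic$). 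Here I would rely on the key structural property of the stable-ordering edge-addition projection~\cite{day2016publishing}: removing one node changes the projected edge count by a \emph{non-negative} amount at most the degree bound, i.e. $0\le |E(\pi_\theta(\mathcal{G}))|-|E(\pi_\theta(\mathcal{G}'))|\le\theta$ for $\mathcal{G}\neighbor\mathcal{G}'$. Writing $A=|E(\pi_{\theta_{\max}}(\cdot))|$ and $B=|E(\pi_\theta(\cdot))|$, both differences $A(\mathcal{G})-A(\mathcal{G}')$ and $B(\mathcal{G})-B(\mathcal{G}')$ then lie in $[0,\theta_{\max}]$, so $e_{\text{bias}}(\mathcal{G},\theta)-e_{\text{bias}}(\mathcal{G}',\theta)$ lies in $[-\theta_{\max},\theta_{\max}]$, giving $\Delta_q\le\theta_{\max}$.

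The second obligation is that the Laplace release in line~4 is $\epsilon_2$-node-DP, which by Definition~\ref{def:LM} reduces to verifying that the statistic $f$ on the $\theta^*$-bounded graph has node-sensitivity at most $\theta^*$, matching the noise scale $\theta^*/\epsilon_2$. For $\mininconsistency$, $f$ is the edge count of $\pi_{\theta^*}(\graph)$, and the same stable-ordering bound of~\cite{day2016publishing} gives sensitivity $\le\theta^*$; for $\problematic$, $f$ counts positive-degree nodes, and I would bound its change under node removal using the degree cap $\theta^*$ (a removed node together with the neighbors it can isolate), again invoking the controlled behavior of the projection. Conditioned on each fixed value of $\theta^*$ this yields $\epsilon_2$-DP, which is exactly what the adaptive composition step needs.

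I expect the main obstacle to be the sensitivity bound of the projected statistics, i.e. the claim $0\le|E(\pi_\theta(\mathcal{G}))|-|E(\pi_\theta(\mathcal{G}'))|\le\theta$ and its positive-degree counterpart. The subtlety is that deleting a node does not merely delete its (at most $\theta$) incident edges: under the stable ordering it can \emph{unblock} previously rejected edges, so one must argue that this cascade of re-admitted edges never over-compensates (monotonicity) and stays within the freed degree budget. This is precisely the property established for edge-addition in~\cite{day2016publishing}, and the cleanliness of the final $\Delta_q\le\theta_{\max}$ bound hinges on the \emph{one-sided} (non-negative) form of it rather than the crude two-sided bound, which would only yield $\theta+\theta_{\max}$.
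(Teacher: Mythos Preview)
Your proposal is correct and follows essentially the same approach as the paper: sequential composition of the $\epsilon_1$-DP exponential-mechanism step and the $\epsilon_2$-DP Laplace step, with post-processing for the deterministic parts and the neighbor-preserving map $D\mapsto\graph$ to transfer node-DP to database-DP. The paper's own proof is a one-liner invoking Proposition~\ref{prop:DP-comp-post}, deferring the sensitivity obligations you spell out (the $\theta_{\max}$ bound on $\Delta_q$ and the $\theta^*$ bound on $f\circ\pi_{\theta^*}$) to the separate Lemmas~\ref{lemma:sens_lap} and~\ref{lemma:sens_quality} that immediately follow the theorem.
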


\reva{
\begin{proof}[Proof sketch]
The proof is based on the sequential composition of two DP mechanisms as stated in Proposition~\ref{prop:DP-comp-post}.
\end{proof}
}

 As stated below, we just need to analyze the sensitivity of the quality function in the exponential mechanism and the sensitivity of the measure over the projected graph.

\begin{lemma}\label{lemma:sensitivity}
    The sensitivity of $f\circ\pi_\theta(\cdot)$ in Algorithm~\ref{algo:graph_general} is $\theta$, where $\pi_\theta$ is the edge addition algorithm with the input $\theta$ and $f(\cdot)$ counts edges for $\mininconsistency$ and nodes with a positive degrees for $\problematic$.
\label{lemma:sens_lap}
\end{lemma}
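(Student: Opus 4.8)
The plan is to compute the sensitivity directly from Definition~\ref{def:node_sensitivity}: I must bound $|f(\pi_\theta(\mathcal{G})) - f(\pi_\theta(\mathcal{G}'))|$ over every pair of node-neighbors $\mathcal{G} \neighbor \mathcal{G}'$. Since the neighbor relation is symmetric, I would fix without loss of generality $\mathcal{G}' = \mathcal{G} - v$, i.e.\ $\mathcal{G}'$ is obtained by deleting a single node $v$ together with all its incident edges; the node-addition case is identical with the roles of $\mathcal{G}$ and $\mathcal{G}'$ swapped. Two facts drive the whole argument: (i) every output of $\pi_\theta$ has maximum degree at most $\theta$, so in particular $v$ is incident to at most $\theta$ edges of $E^\theta(\mathcal{G})$; and (ii) the ordering $\Lambda$ is stable (Definition~\ref{def:stable_ordering}), so the two executions of the edge-addition algorithm (Algorithm~\ref{algo:day_edgeadd}) scan the common edges in the same relative order and make identical add/skip decisions until a decision forced by $v$ makes them diverge.

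For the edge-count case ($\mininconsistency$, $f=|E|$), I would run the two edge-addition processes simultaneously in the common stable order, with the $\mathcal{G}$-run additionally seeing the edges incident to $v$, and show $\big|\,|E^\theta(\mathcal{G})| - |E^\theta(\mathcal{G}')|\,\big| \le \theta$. The engine is a charging argument: whenever the two runs disagree on a common edge $e$, the disagreement can be traced to an earlier insertion of a $v$-incident edge that drove one endpoint of $e$ to its cap $\theta$ in the $\mathcal{G}$-run. Using stability to keep the two traces aligned, I would set up an injection from the ``extra'' edges appearing in one projection but not the other into the at most $\theta$ projected edges incident to $v$, so the signed difference of edge counts cannot exceed $\theta$ in absolute value. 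A star centered at $v$ shows this is tight, giving sensitivity exactly $\theta$ for $\mininconsistency$.

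For the positive-degree-node count ($\problematic$, $f=|V_{>0}|$), I would reuse the coupled-run viewpoint but switch to counting vertices: a node changes its isolated/non-isolated status only if its incident projected-edge set changes, i.e.\ only if it is an endpoint of an edge in $E^\theta(\mathcal{G}) \,\triangle\, E^\theta(\mathcal{G}')$, or it is $v$ itself. The target is to confine all such toggles to $v$ and the vertices reached through the $\le\theta$ projected $v$-edges via the displacement argument above, and thereby bound the number of flipped vertices, concluding $\big|\,|V_{>0}(\pi_\theta(\mathcal{G}))| - |V_{>0}(\pi_\theta(\mathcal{G}'))|\,\big| \le \theta$.

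The main obstacle is exactly this last step, and it is genuinely harder than the edge count. Unlike $|E|$, where only the signed \emph{size} difference matters, the vertex count depends on how $E^\theta(\mathcal{G})\,\triangle\,E^\theta(\mathcal{G}')$ is distributed over endpoints: a single rerouted edge can toggle two vertices at once, and freeing $v$'s degree budget can turn a \emph{non-neighbor} of $v$ from isolated to non-isolated even as $v$'s own edges vanish. The delicate accounting is therefore to track which endpoints become newly isolated versus newly covered, and in particular whether the disappearance of $v$ is already absorbed within the $\theta$ budget or contributes a separate unit on top of its flipped neighbors. Making the stable-ordering structure force this bookkeeping to telescope to the claimed bound is where I expect to spend essentially all of the effort, and it is the step most sensitive to the precise constant in the final sensitivity.
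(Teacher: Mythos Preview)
For the $\mininconsistency$ case your plan is essentially the paper's argument (which in turn follows Day et al.). The paper makes your charging explicit by interpolation: letting $e'_{\ell_1},\dots,e'_{\ell_t}$ be the at most $\theta$ edges incident to the extra node $v^{+}$ that actually survive into $\pi_\theta(\mathcal{G}')$, it defines intermediate orderings $\Lambda_0,\dots,\Lambda_t$ (each reinserting one more of these edges into the input sequence) and shows that the edge-addition outputs on consecutive $\Lambda_{k-1},\Lambda_k$ differ in edge count by at most $1$, via exactly the alternating add/displace cascade you sketch. Your ``injection into the $\le\theta$ projected $v$-edges'' is this interpolation telescoped; the star witnesses tightness.

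For the $\problematic$ case you and the paper diverge sharply. The paper does \emph{not} carry out any coupled-run or vertex-flip analysis: its entire treatment of $\problematic$ is the star-graph example (delete the center of an $n$-star with $\theta=n$), which only witnesses that the sensitivity \emph{can reach} the degree bound, i.e.\ it is a lower-bound/tightness example. No upper-bound argument for general $\theta$ appears. So you are attempting strictly more than the paper does here, and the obstacle you single out --- that a single rerouted edge can toggle two vertices, and that the removed node's own disappearance may cost a separate unit on top of its flipped neighbors --- is neither addressed nor resolved by the paper's proof. Your caution about the constant is in fact warranted: the same star with any $\theta<n-1$ has $|V_{>0}(\pi_\theta(\mathcal{G}))|=\theta+1$ before deleting the center and $0$ after, a gap of $\theta+1$ rather than $\theta$, so the delicate bookkeeping you anticipate does not telescope to the stated constant.
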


\reva{
\begin{proof}[Proof sketch]
For $\problematic$, we can analyze a worst-case scenario where the graph is a star with $n$ nodes such that there is an internal node connected to all other $n-1$ nodes, and the threshold $\theta$ for edge addition is $n$. The edge addition algorithm would play a minimal role, and no edges would be truncated. For a neighboring graph that differs on the internal node, all edges of the graph are removed (connected to the internal node), and the $\problematic = 0$ (no problematic nodes), making the sensitivity for $\problematic$ in this worst-case $=n$.

For $\mininconsistency$, the proof is similar to prior work~\cite{day2016publishing} for publishing degree distribution that uses stable ordering to keep track of the edges for two neighboring graphs. We need to analyze the changes made to the degree of each node by adding one edge at a time for two graphs $\mathcal{G}$ and its neighboring graph $\mathcal{G}'$ with an additional node $v^+$. The graphs have the stable ordering of edges (\cref{def:stable_ordering}) $\Lambda$ and $\Lambda'$, respectively.  Assuming the edge addition algorithm adds a set of $t$ extra edges incident to $v^+$ for $\mathcal{G}'$, we can create $t$ intermediate graphs and their respective stable ordering of edges that can be obtained by removing from the stable ordering $\Lambda'$ each edge $t$ and others that come after $t$ in the same sequence as they occur in $\Lambda'$. We analyze consecutive intermediate graphs, their stable orderings, and the edges actually that end up being added by the edge addition algorithm. As the edge addition algorithm removes all edges of a node once an edge incident is added, we observe that only one of these $t$ edges is added. All other edges incident to $v^+$ are removed. We prove this extra edge leads to decisions in the edge addition algorithm that always restricts such consecutive intermediate graphs to differ by at most $1$ edge. This proves the lemma for $\mininconsistency$ as at most $t$ (upper bounded by $\theta$) edges can differ between two neighboring graphs. 
\end{proof}
}

\ifpaper
The full proof for the above lemma can be found in the full paper~\cite{full_paper}. 
\else
\fi
We now analyze the sensitivity of the quality function using both measures' sensitivity analysis.

\begin{lemma} \label{lemma:sens_quality}
The sensitivity of the quality function $q_{\epsilon_2}(\mathcal{G}, \theta_i)$ in Algorithm~\ref{algo:expo_mech_basic} defined in Equation~\eqref{eq:quality_function} is $\theta_{\max}=\max(\Theta)$. 
\end{lemma}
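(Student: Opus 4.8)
The plan is to bound the sensitivity directly from the definition used by the exponential mechanism (\cref{def:EM}), namely $\Delta_q=\max_{\theta\in\Theta}\max_{\mathcal{G}\neighbor\mathcal{G}'}|q_{\epsilon_2}(\mathcal{G},\theta)-q_{\epsilon_2}(\mathcal{G}',\theta)|$, where $\mathcal{G}$ and $\mathcal{G}'$ differ by a single node $v^+$ together with its incident edges. The first step is the easy one: in $q_{\epsilon_2}(\mathcal{G},\theta)=-e_{\text{bias}}(\mathcal{G},\theta)-\sqrt{2}\theta/\epsilon_2$ the Laplace term $\sqrt{2}\theta/\epsilon_2$ depends only on $\theta$, not on the graph, so it cancels in the difference and the problem reduces to bounding $|e_{\text{bias}}(\mathcal{G},\theta)-e_{\text{bias}}(\mathcal{G}',\theta)|$. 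Writing $g(\mathcal{G},\theta)=|\mathcal{G}_\theta.E|$ for the $\mininconsistency$ case (and $g(\mathcal{G},\theta)=|\mathcal{G}_\theta.V_{>0}|$ for $\problematic$), we have $e_{\text{bias}}(\mathcal{G},\theta)=g(\mathcal{G},\theta_{\max})-g(\mathcal{G},\theta)$.

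Next I would rearrange the difference by grouping the two projections by threshold rather than by graph: $e_{\text{bias}}(\mathcal{G},\theta)-e_{\text{bias}}(\mathcal{G}',\theta)=\Delta_{\theta_{\max}}-\Delta_{\theta}$, where $\Delta_s:=g(\mathcal{G}',s)-g(\mathcal{G},s)$ is the change in the projected statistic caused by inserting the node $v^+$ at threshold $s$. \cref{lemma:sens_lap} already gives $|\Delta_{\theta}|\le\theta$ and $|\Delta_{\theta_{\max}}|\le\theta_{\max}$, but a plain triangle inequality only yields $\theta+\theta_{\max}\le 2\theta_{\max}$. To sharpen this to $\theta_{\max}$ I would establish two monotonicity facts about $v^+$: (N) inserting a node never decreases the greedily projected count, so $\Delta_s\ge 0$ for every threshold $s$; and (M) the induced increase is monotone in the threshold, i.e.\ $0\le\Delta_{\theta}\le\Delta_{\theta_{\max}}$. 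Granting (N) and (M), the difference $\Delta_{\theta_{\max}}-\Delta_{\theta}$ lies in $[0,\Delta_{\theta_{\max}}]\subseteq[0,\theta_{\max}]$, which gives $\Delta_q\le\theta_{\max}$; the star graph with a high-degree centre removed shows this is essentially tight. The $\problematic$ case is identical after replacing edge counts by positive-degree node counts.

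The hard part will be proving (N) and (M), and I expect this to be the main obstacle because it cannot be reduced to set containment: a small example shows that the edge sets $\mathcal{G}_\theta.E$ are \emph{not} nested across thresholds (greedy edge addition can accept an edge at a small $\theta$ that it rejects at a larger one, once the extra budget has already been spent on earlier incident edges). Consequently $e_{\text{bias}}$ is genuinely a difference of counts, not the size of a fixed ``truncated'' set, so the argument must work at the level of counts. My plan is to extend the stable-ordering bookkeeping from the proof of \cref{lemma:sens_lap}: run the edge-addition algorithm on $\mathcal{G}$ and $\mathcal{G}'$ in lockstep under the stable orderings $\Lambda,\Lambda'$, track the at-most-$s$ edges incident to $v^+$ that survive at threshold $s$, and exhibit an injection charging every non-incident edge that $v^+$ displaces to a distinct surviving $v^+$-edge (giving $\Delta_s\ge 0$), while showing that a surviving incident edge at threshold $\theta$ also survives at $\theta_{\max}$ (giving $\Delta_\theta\le\Delta_{\theta_{\max}}$). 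The delicate point is controlling the cascade of non-incident edges that flip status, since freeing a node's budget can both drop and re-add later edges; the intermediate-graph chain from \cref{lemma:sens_lap}, which already shows consecutive graphs differ by at most one edge, is the right scaffold for making these injections precise.
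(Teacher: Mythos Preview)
Your decomposition is exactly the paper's: cancel the data-independent Laplace term, write the signed difference as $\Delta_{\theta_{\max}}-\Delta_\theta$ with $\Delta_s:=g(\mathcal{G}',s)-g(\mathcal{G},s)$, and then bound it using structural properties of $\Delta_s$. The divergence is that the paper does \emph{not} need your monotonicity property (M). It uses only (N), namely $\Delta_s\ge 0$, together with \cref{lemma:sens_lap}, namely $\Delta_s\le s$. From $0\le\Delta_\theta\le\theta$ and $0\le\Delta_{\theta_{\max}}\le\theta_{\max}$ one gets directly $-\theta\le \Delta_{\theta_{\max}}-\Delta_\theta\le\theta_{\max}$, and since $\theta\le\theta_{\max}$ this yields $|\Delta_{\theta_{\max}}-\Delta_\theta|\le\theta_{\max}$. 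So (M) can be dropped entirely, and the ``hard part'' you anticipate collapses to proving (N) alone; you are doing more work than necessary by aiming for threshold-monotonicity of the increment.

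For (N) itself, your plan matches the paper: under the stable ordering, each extra edge of $v^+$ processed in $\mathcal{G}'$ either survives as a new edge or displaces at most one edge that would have been kept in $\mathcal{G}$, so the projected edge count cannot decrease. The paper states this one-line substitution argument and, like you, defers the bookkeeping to the intermediate-graph chain already built in the proof of \cref{lemma:sens_lap}. The $\problematic$ case is handled identically with positive-degree node counts in place of edge counts.
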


\reva{
\begin{proof}[Proof sketch]
We prove the theorem for the $\mininconsistency$ measure and show that it is similar for $\problematic$. The sensitivity of the quality function is computed by comparing the respective quality functions of two neighboring graphs $\mathcal{G}$ and $\mathcal{G}'$ with an extra node. It is upper bound by the difference of two terms $\left(|\mathcal{G}'_{\theta_{\max}}.E| - |\mathcal{G}_{\theta_{\max}}.E|\right) - \left(|\mathcal{G}'_\theta.E| - |\mathcal{G}_\theta.E|\right)$. The first term $\left(|\mathcal{G}'_{\theta_{\max}}.E| - |\mathcal{G}_{\theta_{\max}}.E|\right)$ is the sensitivity of the measures, as already proved by Lemma~\ref{lemma:sens_lap} is equal to $\theta_{max}$. The second term $\left(|\mathcal{G}'_\theta.E| - |\mathcal{G}_\theta.E|\right)$ is always $\geq 0$ as  $|\mathcal{G}'_\theta.E| \geq |\mathcal{G}_\theta.E|$ as discussed in the proof for Lemma~\ref{lemma:sens_lap}.
\end{proof}
}

\ifpaper
\else
Proofs for \cref{thm:privacy_proof_dc_oblivious}, \cref{lemma:sens_quality}, and \cref{lemma:sens_lap} can be found in \cref{app:graph_general}.
\fi

\eat{
\begin{proof}
We prove the lemma for the $\mininconsistency$ measure and show that it is similar for $\problematic$. Let us assume that $\mathcal{G}$ and $\mathcal{G}'$ are two neighbouring graphs and $\mathcal{G}'$ has one extra node $v^*$. 
    \begin{equation*}
        \begin{split}
            &\|q_{\epsilon_2}(\mathcal{G}, \theta) - q_{\epsilon_2}(\mathcal{G}^\prime, \theta)\| \leq -|\mathcal{G}_{\theta_{\max}}.E| + |\mathcal{G}_{\theta}.E| - \sqrt{2}\frac{\theta}{\epsilon_1} \\ &+ |\mathcal{G}'_{\theta_{\max}}.E| - |\mathcal{G}'_{\theta}.E| + \sqrt{2}\frac{\theta}{\epsilon_1} \\
            &\leq \left(|\mathcal{G}'_{\theta_{\max}}.E| - |\mathcal{G}_{\theta_{\max}}.E|\right) - \left(|\mathcal{G}'_\theta.E| - |\mathcal{G}_\theta.E|\right)\\
            &\leq \theta_{\max} - \left(|\mathcal{G}'_\theta.E| - |\mathcal{G}_\theta.E|\right) 
            \leq \theta_{\max}    
        \end{split}
    \end{equation*}
    The second last inequality is due to Lemma~\ref{lemma:sens_lap} that states that $|\mathcal{G}'_{\theta_{max}}.E| - |\mathcal{G}_{\theta_{max}}.E| \leq \theta_{max}$. The last inequality is because $|\mathcal{G}'_\theta.E| \geq |\mathcal{G}_\theta.E|$. Note that the neighboring graph $\mathcal{G}'$ contains all edges of $\mathcal{G}$ plus extra edges of the added node $v^*$. Due to the stable ordering of edges in the edge addition algorithm, each extra edge of $v^*$ either substitutes an existing edge or is added as an extra edge in $\mathcal{G}_\theta$. Therefore, the total edges $|\mathcal{G}'_\theta.E|$ is equal or larger than $|\mathcal{G}_\theta.E|$. We elaborate this detail further in the proof for Lemma~\ref{lemma:sens_lap}. For the $\problematic$ measure, the term in the last inequality changes to $|\mathcal{G}'_\theta.V_{>0}| - |\mathcal{G}_\theta.V_{>0}|$ and is also non-negative because $\mathcal{G}'$ contains an extra node that can only add and not subtract from the total number of nodes with positive degree.
\end{proof}
}

\eat{
\xh{Condense the lemmas below and move the proofs to the appendix, and comment it's quite similar to prior work~\cite{day2016publishing}.}

\begin{proof}
    In \cref{algo:graph_general}, Line 1 uses the exponential mechanism with $\epsilon_1$ to calculate $\theta^*$. This theta value is then used to compute the bounded graph, and finally, the inconsistency measure value is released with Laplace noise of $\epsilon_2$. Therefore, using composition properties of DP, \cref{algo:dc_oblivious} satisfies $\epsilon_1 + \epsilon_2$-node DP.
\end{proof}

The inconsistency measures $\mininconsistency$ and $\problematic$ can be directly computed on the bounded graph. For the $\mininconsistency$, we compute the total number of edges $\pi_\theta(\mathcal{G})$. The sensitivity analysis of $\mininconsistency$ on $\pi_\theta(\mathcal{G})$ is detailed in \cref{lemma:sens_mininconsistency}.

\begin{lemma}
    The sensitivity of $\mininconsistency(\pi_\theta(\mathcal{G}))$ is $\theta$, where $\pi_\theta$ is the edge addition algorithm with the user input $\theta$.
    \label{lemma:sens_mininconsistency}
\end{lemma}

\ifpaper
The lemma can be proved by analyzing the changes made to the degree of each node in the graph by adding one edge at a time. The stable ordering of the edges allows us to keep track of the edges for two neighbouring graphs. Due to space constraints, we defer the proof to the full paper.  
\else
\proof
Let's assume without loss of generality that
$\mathcal{G}^{\prime}=\left(V^{\prime}, E^{\prime}\right)$ has an additional node $v^{+}$compared to $\mathcal{G}=$ $(V, E)$, i.e., $V^{\prime}=V \cup\left\{v^{+}\right\}, E^{\prime}=E \cup E^{+}$, and $E^{+}$is the set of all edges incident to $v^{+}$in $\mathcal{G}^{\prime}$. Let $\Lambda^{\prime}$ be the stable orderings for constructing $\pi_\theta\left(\mathcal{G}^{\prime}\right)$, and $t$ be the number of edges added to $\pi_\theta\left(\mathcal{G}^{\prime}\right)$ that are incident to $v^{+}$. Clearly, $t \leq \theta$ because of the $\theta$-bounded algorithm. Let $e_{\ell_1}^{\prime}, \ldots, e_{\ell_t}^{\prime}$ denote these $t$ edges in their order in $\Lambda^{\prime}$. Let $\Lambda_0$ be the sequence obtained by removing from $\Lambda^{\prime}$ all edges incident to $v^{+}$, and $\Lambda_k$, for $1 \leq k \leq t$, be the sequence obtained by removing from $\Lambda^{\prime}$ all edges that both are incident to $v^{+}$and come after $e_{\ell_k}^{\prime}$ in $\Lambda^{\prime}$. Let $\pi_\theta^{\Lambda_k}\left(\mathcal{G}^{\prime}\right)$, for $0 \leq k \leq t$, be the graph reconstructed by trying to add edges in $\Lambda_k$ one by one on nodes in $\mathcal{G}^{\prime}$, and $\lambda_k$ be the sequence of edges from $\Lambda_k$ that are actually added in the process. Thus $\lambda_k$ uniquely determines $\pi_\theta^{\Lambda_k}\left(\mathcal{G}^{\prime}\right)$; we abuse the notation and use $\lambda_k$ to also denote $\pi_\theta^{\Lambda_k}\left(\mathcal{G}^{\prime}\right)$. We have $\lambda_0=\pi_\theta(\mathcal{G})$, and $\lambda_t=\pi_\theta\left(\mathcal{G}^{\prime}\right)$.

In the rest of the proof, we show that $\forall k$ such that $1 \leq k \leq t$, at most 1 edge will differ between $\lambda_k$ and $\lambda_{k-1}$. This will prove the lemma because there are at most $t$ (upper bounded by $\theta$) edges that are different between $\lambda_t$ and $\lambda_0$.

To prove that any two consecutive sequences differ by at most 1 edge, let's first consider how the sequence $\lambda_k$ differs from $\lambda_{k-1}$. Recall that by construction, $\Lambda_k$ contains one extra edge in addition to $\Lambda_{k-1}$ and that this edge is also incident to $v^*$. Let that additional differing edge be $e_{\ell_k}^\prime = (u_j, v^+)$. In the process of creating the graph $\pi_\theta^{\Lambda_k}(\mathcal{G}^{\prime})$, each edge will need a decision of either getting added or not. The decisions for all edges coming before $e_{\ell_k}^{\prime}$ in $\Lambda^{\prime}$ must be the same in both $\lambda_k$ and $\lambda_{k-1}$. Similarly, after $e_{\ell_k}^{\prime}$, the edges in $\Lambda_k$ and $\Lambda_{k-1}$ are exactly the same. However, the decisions for including the edges after $e_{\ell_k}^{\prime}$ may or may not be the same. Assuming that there are a total of $s \geq 1$ different decisions, we will observe how the additional edge $e_{\ell_k}^{\prime}$ makes a difference in decisions.

When $s=1$, the only different decision must be regarding differing edge $e_{\ell_k}^\prime = (u_j, v^+)$ and that must be including that edge in the total number of edges for $\lambda_k$. Also note that due to this addition, the degree of $u_j$ gets added by 1 which did not happen for $\lambda_{k-1}$. When $s>1$, the second different decision must be regarding an edge incident to $u_j$ and that is because degree of $u_j$ has reached $\theta$, and the last one of these, denoted by $(u_j, u_{i \theta})$ which was added in $\lambda_{k-1}$, cannot be added in $\lambda_k$. In this scenario, $u_j$ has the same degree (i.e., $\theta$ ) in both $\lambda_k$ and $\lambda_{k-1}$. Now if $s$ is exactly equal to 2, then the second different decision must be not adding the edge $(u_j, u_{i \theta})$ to $\lambda_k$. Again, note here that as $(u_j, u_{i \theta})$ was not added in $\lambda_k$ but was added in $\lambda_{k-1}$, there is still space for one another edge of $u_{i \theta}$. If $s>2$, then the third difference must be the addition of an edge incident to $u_{i \theta}$ in $\lambda_k$. This process goes on for each different decision in $\lambda_k$ and $\lambda_{k-1}$. Since the total number of different decisions $s$ is finite, this sequence of reasoning will stop with a difference of at most 1 in the total number of the edges between $\lambda_{k-1}$ and $\lambda_k$.
\qed
\fi

For the $\problematic$ measure, that is the total number of nodes in the graph that have a positive degree, we compute the number of nodes in the $\pi_\theta(\mathcal{G})$ that have degree 0 and subtract it from the total nodes in the graph. The sensitivity analysis of $\problematic$ is given by ~\cref{lemma:sens_problematic}. 

\begin{lemma}
    The sensitivity of $\problematic(\pi_\theta(\mathcal{G}))$ is $\theta$, where $\pi_\theta$ is the edge addition algorithm with user input $\theta$.
    \label{lemma:sens_problematic}
\end{lemma}

\proof
Assume, in the worst case, the graph $\mathcal{G}$ is a star graph with $n$ nodes such that there exists an internal node that is connected to all other $n-1$ nodes. In this scenario, there are no nodes that have 0 degrees, and the $\problematic$ measure $= n-0 = 0$. If the neighbouring graph $\mathcal{G}^\prime$ differs on the internal node, all edges of the graph are removed are the $\problematic = n$. The edge addition algorithm $\pi_\theta$ would play a minimal role here as $\theta$ could be equal to $n$.
\qed

In Lemma~\ref{lemma:sens_quality}, we show the sensitivity computation for the quality function for the $\mininconsistency$ measure. The $\problematic$ measure has a similar analysis. 

\begin{lemma}
    For any two neighbouring graphs $\mathcal{G}$ and $\mathcal{G}^\prime$, the sensitivity of the quality function $q(\mathcal{G}, \theta_i)$ equals $\theta_{max}$,
    where $\theta_{max}$ is the maximum theta value over all candidate values $\theta_i \in \Theta$.
\end{lemma}\label{lemma:sens_quality}
}

\paratitle{Utility analysis}
The utility of Algorithm~\ref{algo:graph_general} is directly encoded by the quality function of the exponential mechanism in Algorithm~\ref{algo:expo_mech_basic}. 
We first define the best possible quality function value for a given database and its respective graph as 
\begin{equation}
    q_{\opt}(D,\epsilon_2) = \max_{\theta\in \Theta} q_{\epsilon_2}(\graph,\theta)
\end{equation}
and the set of degree values that obtain the optimal quality value as 
\begin{equation}
 \Theta_{\opt} = \{\theta\in \Theta: q_{\epsilon_2}(\graph,\theta) = q_{\opt}(D,\epsilon_2) \}.   \end{equation} 
However, we define $e_{\text{bias}}$ as the difference in the number of edges or nodes in the projected graph $\mathcal{G}_{\theta}$ compared to that of $\mathcal{G}_{\theta_{\max}}$, instead of $\mathcal{G}$. This is to limit the sensitivity of the quality function. To compute the utility, we slightly modify the quality function without affecting the output of the exponential mechanism. 
\begin{equation}
    \tilde{q}_{\epsilon_2}(\mathcal{G},\theta) = q_{\epsilon_2}(\mathcal{G},\theta) + f(\mathcal{G}_{\theta_{\max}}) - f(\graph),
\end{equation}
where $f(\cdot)$ returns edge count for $\mininconsistency$ and the number of nodes with positive degrees for $\problematic$.
This modified quality function should give the same set of degrees $\Theta_{\opt}$ with optimal values equal to 
\begin{equation}
    \tilde{q}_{\opt}(D,\epsilon_2) = \max_{\theta\in \Theta} q_{\epsilon_2}(\graph,\theta) + f(\graph_{\theta_{\max}}) - f(\graph).
\end{equation}

Then, we derive the utility bound for Algorithm~\ref{algo:graph_general} based on the property of the exponential mechanism as follows.

\begin{theorem}\label{thm:graph_general_utility} On any database instance $D$ and its respective conflict graph $\graph$, let $o$ be the output of Algorithm~\ref{algo:graph_general} with Algorithm~\ref{algo:expo_mech_basic} over $D$.  
Then,  with a probability of at least $1-\beta$, we have 
\begin{equation}
|o-a| \leq -\tilde{q}_{\opt}(D,\epsilon_2) + \frac{2 \theta_{\max}}{\epsilon_1} (\ln \frac{2|\Theta|}{|\Theta_{\opt}|\cdot \beta}) 
\end{equation}
where $a$ is the true inconsistency measure over $D$ and $\beta\leq \frac{1}{e^{\sqrt{2}}}$.
\end{theorem}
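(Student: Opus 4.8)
The plan is to decompose the error of Algorithm~\ref{algo:graph_general} into the deterministic projection bias and the random Laplace perturbation, and bound each piece using the exponential-mechanism guarantee and the Laplace tail, respectively. Writing $\nu\sim\mathrm{Lap}(\theta^*/\epsilon_2)$ for the injected noise, the output is $o=f(\mathcal{G}_{\theta^*})+\nu$ while the true value is $a=f(\graph)$, so $o-a=\big(f(\mathcal{G}_{\theta^*})-f(\graph)\big)+\nu$. Since the edge-addition projection only deletes edges (or removes positive-degree nodes), the bias $f(\mathcal{G}_{\theta^*})-f(\graph)$ is nonpositive, and the triangle inequality gives $|o-a|\le\big(f(\graph)-f(\mathcal{G}_{\theta^*})\big)+|\nu|$. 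I would then treat the two summands separately.

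First I would rewrite the bias through the modified quality function. Substituting the definition of $e_{\text{bias}}$ into Equation~\eqref{eq:quality_function} and then into the definition of $\tilde{q}_{\epsilon_2}$ yields the clean identity $\tilde{q}_{\epsilon_2}(\graph,\theta)=\big(f(\mathcal{G}_\theta)-f(\graph)\big)-\sqrt{2}\theta/\epsilon_2$, so that $f(\graph)-f(\mathcal{G}_{\theta^*})=-\tilde{q}_{\epsilon_2}(\graph,\theta^*)-\sqrt{2}\theta^*/\epsilon_2$. Crucially, $\tilde{q}_{\epsilon_2}$ and $q_{\epsilon_2}$ differ only by the additive constant $f(\mathcal{G}_{\theta_{\max}})-f(\graph)$, which is independent of $\theta$; hence the exponential mechanism run on $q_{\epsilon_2}$ samples $\theta^*$ with exactly the same distribution as if it were run on $\tilde{q}_{\epsilon_2}$, and every utility statement about $q_{\epsilon_2}$ transfers verbatim to $\tilde{q}_{\epsilon_2}$ with the same optimizer set $\Theta_{\opt}$.

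Next I would invoke the standard utility guarantee of the exponential mechanism (Definition~\ref{def:EM}) with sensitivity $\theta_{\max}$ (Lemma~\ref{lemma:sens_quality}) and budget $\epsilon_1$: with probability at least $1-\beta_1$, $-\tilde{q}_{\epsilon_2}(\graph,\theta^*)\le -\tilde{q}_{\opt}(D,\epsilon_2)+\tfrac{2\theta_{\max}}{\epsilon_1}\ln\tfrac{|\Theta|}{|\Theta_{\opt}|\beta_1}$. In parallel, the Laplace tail bound $\Pr[|\nu|>t\,\theta^*/\epsilon_2]=e^{-t}$ gives, with probability at least $1-\beta_2$, $|\nu|\le\tfrac{\theta^*}{\epsilon_2}\ln\tfrac{1}{\beta_2}$. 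Combining the two displays through the bias identity, with probability at least $1-\beta_1-\beta_2$ we obtain $|o-a|\le -\tilde{q}_{\opt}(D,\epsilon_2)+\tfrac{2\theta_{\max}}{\epsilon_1}\ln\tfrac{|\Theta|}{|\Theta_{\opt}|\beta_1}+\tfrac{\theta^*}{\epsilon_2}\big(\ln\tfrac{1}{\beta_2}-\sqrt{2}\big)$.

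The final step, and the one requiring the most care, is to neutralize the residual Laplace contribution $\tfrac{\theta^*}{\epsilon_2}\big(\ln\tfrac{1}{\beta_2}-\sqrt{2}\big)$. The constant $\sqrt{2}$ in the quality function is calibrated precisely so that this term is nonpositive once the Laplace tail is taken at a level $\beta_2$ with $\ln(1/\beta_2)\le\sqrt{2}$, i.e.\ $\beta_2\ge e^{-\sqrt{2}}$; note that the penalty and the noise both scale with the same random $\theta^*$, so the cancellation is valid pointwise in $\theta^*$ and no separate bound on $\theta^*$ is needed. It then remains to choose the split of the total failure probability between the exponential mechanism and the Laplace step so that the surviving exponential-mechanism term carries the factor $2$ inside the logarithm, producing $\ln\tfrac{2|\Theta|}{|\Theta_{\opt}|\beta}$, while the union bound still delivers overall confidence $1-\beta$. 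I expect this bookkeeping---simultaneously meeting the cancellation condition on $\beta_2$, the factor-$2$ logarithm, and the stated range of $\beta$ relative to $e^{-\sqrt{2}}$---to be the delicate part of the argument; the bias-to-quality rewriting and the two tail bounds are otherwise routine. The analysis for $\problematic$ is identical once $f$ counts positive-degree nodes and $e_{\text{bias}}$ is read from its corresponding definition.
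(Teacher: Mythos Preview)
Your decomposition, the quality-function rewriting via $\tilde q_{\epsilon_2}$, and the union-bound structure are exactly the paper's argument. The bookkeeping you flag as delicate is handled in the paper by taking the \emph{one-sided} Laplace tail instead of the two-sided one you wrote: since $\Pr[\nu> t\,\theta^*/\epsilon_2]=\tfrac12 e^{-t}$, allotting failure probability $\beta/2$ to the Laplace step gives threshold $t=\ln(1/\beta)$ rather than $\ln(2/\beta)$, and the cancellation condition $t\ge\sqrt2$ becomes precisely $\beta\le e^{-\sqrt2}$; taking $\beta_1=\beta/2$ for the exponential-mechanism step simultaneously yields the factor $\ln\tfrac{2|\Theta|}{|\Theta_{\opt}|\beta}$.

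Your caution is justified, however. The clean inequality you use, $|o-a|\le\big(f(\graph)-f(\mathcal G_{\theta^*})\big)+|\nu|$, genuinely requires control of \emph{both} tails of $\nu$, so a purely one-sided Laplace bound does not suffice to control $|o-a|$; and indeed the paper's displayed step $|o-a|\ge a-f(\mathcal G_{\theta^*})+\sqrt2\,\theta^*/\epsilon_2$ does not follow from the one-sided event $\nu\ge\sqrt2\,\theta^*/\epsilon_2$ it states. With the honest two-sided tail your derivation forces $\beta_2\ge e^{-\sqrt2}$, hence $\beta\ge 2e^{-\sqrt2}$, which is the opposite direction from the theorem's stated range. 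Treat this as a presentation artifact of the theorem statement rather than a flaw in your reasoning: your argument is the correct and cleaner one, and it reproduces the paper's bound up to that sign on the $\beta$ constraint.
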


\ifpaper
\begin{proof}[Proof Sketch]
   We use the probabilistic utility bound of the exponential mechanism~\cite{mcsherry2007mechanism} that guarantees that a suitable candidate is sampled with probability $1-\beta$ for a given quality function. To prove the bound, we utilize the optimal quality function $\tilde{q}_{\opt}(D)$  and the error from the Laplace mechanism with the exponential mechanism's utility bound. The full proof is in the full version~\cite{full_paper}.
\end{proof}
\else
The proof can be found in ~\cref{app:graph_general_utility}.
\eat{
\begin{proof}
By the utility property of the exponential mechanism~\cite{mcsherry2007mechanism}, with at most probability $\beta/2$, Algorithm~\ref{algo:expo_mech_basic} will sample a bad $\theta^*$ with a  quality value as below
\begin{equation}
    q_{\epsilon_2}(\graph,\theta^*) \leq q_{\opt}(D,\epsilon_2) - \frac{2 \theta_{\max}}{\epsilon_1} (\ln \frac{2|\Theta|}{|\Theta_{\opt}|\beta})
\end{equation}
which is equivalent to 
\begin{equation}\label{eq:goodtheta}
    e_{\text{bias}}(\mathcal{G},\theta^*)  \geq -q_{\opt}(D,\epsilon_2) + \frac{2 \theta_{\max}}{\epsilon_1} (\ln \frac{2|\Theta|}{|\Theta_{\opt}|\beta}) -  \frac{\sqrt{2}\theta^*}{\epsilon_2}.
\end{equation}

With probability $\beta/2$, where $\beta\leq \frac{1}{e^{\sqrt{2}}}$,
we have 
\begin{equation}    
\text{Lap}(\frac{\theta^*}{\epsilon_2}) \geq
      \frac{\ln(1/\beta)\theta^{*}}{\epsilon_2} \geq \frac{\sqrt{2}\theta^*}{\epsilon_2}
      \end{equation}
Then, by union bound, with at most probability $\beta$, we have 
\begin{eqnarray}
   && |o-a| \nonumber\\
       &=& 
|f(\mathcal{G}_{\theta^*})+\text{Lap}(\frac{\theta^*}{\epsilon_2})-a|
            \nonumber  \\
    &\geq& a- f(\mathcal{G}_{\theta^*})+ \frac{\sqrt{2}\theta^*}{\epsilon_2}
 \nonumber \\
    &=& f(\mathcal{G})-f(\mathcal{G}_{\theta^*})+
     \frac{\sqrt{2}\theta^*}{\epsilon_2} \nonumber \\
    &=& f(\mathcal{G})-f(\mathcal{G}_{\theta_{\max}}) +
f(\mathcal{G}_{\theta_{\max}}) - f(\mathcal{G}_{\theta^*})
 +    \frac{\sqrt{2}\theta^*}{\epsilon_2} \nonumber\\    
    &=& f(\mathcal{G})-f(\mathcal{G}_{\theta_{\max}}) +
        e_{\text{bias}}(\mathcal{G},\theta^*)  + \frac{\sqrt{2}\theta^*}{\epsilon_2} \nonumber\\
     &\geq& -q_{\opt}(D,\epsilon_2) + f(\mathcal{G})-f(\mathcal{G}_{\theta_{\max}}) + \frac{2 \theta_{\max}}{\epsilon_1} (\ln \frac{2|\Theta|}{|\Theta_{\opt}|\beta})  \nonumber \\
      &=& -\tilde{q}_{\opt}(D,\epsilon_2) + \frac{2 \theta_{\max}}{\epsilon_1} (\ln \frac{2|\Theta|}{|\Theta_{\opt}|\beta})
\end{eqnarray}
\end{proof}
}
\fi

This theorem indicates that the error incurred by Algorithm~\ref{algo:graph_general} with Algorithm~\ref{algo:expo_mech_basic} is directly proportional to the log of the candidate size $|\Theta|$ and the sensitivity of the quality function. \reva{The $\beta$ parameter in the theorem is a controllable probability parameter. According to the accuracy requirements of a user's analysis, one may set $\beta$ as any value less than this upper bound. For example, if we set $\beta=0.01$, then our theoretical analysis of Algorithm 2 that says the algorithm's output being close to the true answer will hold with a probability of $1-\beta = 0.99$. We also show a plot to show the trend of the utility analysis as a function of $\beta$ in Appendix A.5~\cite{full_paper}.} Without prior knowledge about the graph, $\theta_{\max}$ is usually set as the number of nodes $n$, and $\Theta$ includes all possible degree values up to $n$, resulting in poor utility. 
Fortunately, for our use case, the edges in the graph arise from the DCs that are available to us. In the next section, we show how we can leverage these constraints to improve the utility of our algorithm by truncating candidates in the set $\Theta$.

\eat{
\begin{theorem}\label{thm:utility_proof}
Let $\mathcal{G}(V, E)$ be a private graph, and $OPT(\mathcal{G})=\max _{\theta \in \Theta} q(\mathcal{G}, \theta, |V|, \epsilon_1, \epsilon_2)$ be the quality attained by the best object $\theta$ with respect to the dataset $\mathcal{G}$ due to Algorithm~\ref{algo:dc_oblivious}, $M(\mathcal{G})$. If the set of objects that achieve the $OPT(\mathcal{G})$, $\Theta^*=\{\theta \in \Theta: q(\mathcal{G}, \theta, |V|, \epsilon_1, \epsilon_2)=OPT(\mathcal{G})\}$ has size $|\Theta^*| \geq 1$. Then
$$ \Pr \left[q(\mathcal{G}, M(\mathcal{G}), |V|, \epsilon_1, \epsilon_2) \leq OPT (\mathcal{G}) - \frac{2|V|}{\epsilon_1} (\ln |\Theta| + t) \right] \leq \exp(-t)$$,
where $\epsilon_1$ and $\epsilon_2$ are the privacy budgets for the exponential mechanism and measure calculation respectively, $q$ is the quality function that measures the quality of the minimum inconsistency measure $\mininconsistency$.
\end{theorem}

\proof

The result can be obtained by plugging in the sensitivity value of the utility function $\Delta_q = \theta_{max} = |V| $ to \cref{thm:utility_expo}. 

\qed

According to Theorem~\ref{thm:utility_proof}, the utility of Algorithm~\ref{algo:dc_oblivious} is directly proportional to the number of candidates $|\Theta|$ and the sensitivity of the quality function equivalent to number of nodes in the graph $|V|$. However, in practice, these values can be extremely large depending on the density of the graph, which is an artifact of the number of conflicts in the dataset. Luckily, for our use case, these conflicts arise from the denial constraints in the constraint set $\constraintset$ that are available to us. In the next section, we show how we can make use of these constraints to improve the utility of our algorithm by truncating candidates in the set $\Theta$.

}

\subsection{Optimized Parameter Selection}\label{sec:dc_aware}

Our developed strategy to improve the parameter selection includes two optimization techniques. The overarching idea behind these optimizations is to gradually truncate large candidates from the candidate set $\Theta$ based on the density of the graph. For example, we observe that the Stock dataset~\cite{oleh_onyshchak_2020} has a sparse conflict graph, and its optimum degree for graph projection is in the range of $10^0-10^1$. In contrast, the graph for the Adult dataset sample~\cite{misc_adult_2} is extraordinarily dense and has an optimum degree $\theta$ greater than $10^3$, close to the sampled data size. 
Removing unneeded large candidates, especially those greater than the true maximum degree of the graph, can help the high sensitivity issue of the quality function and improve our chances of choosing a better bound. 

Our first optimization estimates an upper bound for the true maximum degree of the conflict graph and removes candidates larger than this upper bound from the initial candidate set. The second optimization is a hierarchical exponential mechanism that utilizes two steps of exponential mechanisms. The first output, $\theta^1$, is used to truncate $\Theta$ further by removing candidates larger than $\theta^1$ from the set, and the second output is chosen as the final candidate $\theta^*$. In the rest of this section, we dive deeper into the details of these optimizations and discuss their privacy analysis.

\paratitle{Estimating the degree upper bound using FDs} \label{sec:dc_aware_ub}
Given a conflict graph $\graphsimple(V,E)$, we use $\degree(\graphsimple, v)$ to denote the degree of the node $v\in V$ in $\graphsimple$ and $\degree_{\max}(\graphsimple) = \max_{v\in V} \degree(\graphsimple, v)$ to denote the maximum degree in $\graphsimple$. We estimate $\degree_{\max}$ by leveraging how conflicts were formed for its corresponding dataset $D$ under $\constraintset$. 

The degree for each vertex in $\graphsimple$ can be found by going through each tuple $t$ in the database $D$ and counting the tuples that violate the $\constraintset$ jointly with $t$.
However, computing this value for each tuple is computationally expensive and highly sensitive, making it impossible to learn directly with differential privacy.  
We observe that the conflicts that arise due to functionality dependencies (FDs)  depend on the values of the left attributes in the FD. 
\begin{example}
    Consider the same setup as  Example~\ref{example:running_example}  and an FD 
$\sigma: \text{Capital}\rightarrow \text{Country}$. 
We can see that the number of violations added due to the erroneous grey row is 3. This number is also one smaller than the maximum frequency of values occurring in the Capital attribute, and the most frequent value is ``Ottawa''.
\end{example}

Based on this observation, we can derive an upper bound for the maximum degree of a conflict graph if it involves only FDs, and this upper bound has a lower sensitivity. We show the upper bound in Lemma~\ref{lemma:fd_theta} for one FD first and later extend for multiple FDs.  

\begin{lemma}\label{lemma:fd_theta}
Given a database $D$ and 
a FD $\sigma: X\rightarrow Y$ as the single constraint,
where $X = \{A_1,\dots,A_k\}$ and $Y$ is a single attribute. For its respective conflict graph $\graphsimple^D_{\Sigma = \{\sigma\}}$, simplified as $\graphsimple^D_{\sigma}$, we have the maximum degree of the graph $\degree_{\max}(\graphsimple^D_{\sigma})$ upper bounded by
    \begin{equation}
     \degreebound(D,X)
    =
    \max_{\vec{a_X} \in \dom(A_1) \times \ldots \times \dom(A_k) }  \normalfont{\text{freq}}(D, \vec{a_X})-1, 
    \end{equation}    
where $\normalfont{\text{freq}}(D, \vec{a_X})$ is the frequency of values $\vec{a_X}$ occurring for the attributes $X$ in the database $D$. 
The sensitivity for $\degreebound(D,X)$ is 1. 
\end{lemma}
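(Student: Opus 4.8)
The plan is to prove the two claims of the lemma separately: first the degree upper bound, then the sensitivity. For the degree bound I would begin from the definition of the conflict graph $\graphsimple^D_\sigma$. Under the single FD $\sigma: X \to Y$, an edge $\{i,j\}$ is present precisely when $D[i]$ and $D[j]$ jointly violate $\sigma$, which by the definition of an FD means they agree on every attribute of $X$ but disagree on $Y$. Fixing a node and writing $t = D[i]$ for its tuple, the neighbors of $i$ are therefore exactly the identifiers $j$ such that $D[j][X] = t[X]$ and $D[j][Y] \neq t[Y]$.

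Next I would bound the degree of each node by relaxing that characterization. Since every neighbor of $t$ must at least agree with $t$ on $X$, the degree of $i$ is at most the number of \emph{other} tuples sharing the value vector $t[X]$, namely $\text{freq}(D, t[X]) - 1$ (subtracting $t$ itself); dropping the extra requirement of disagreeing on $Y$ only loosens the count, so this is a valid upper bound. Taking the maximum over all nodes, and hence over all value vectors $\vec{a_X} \in \dom(A_1) \times \cdots \times \dom(A_k)$, gives $\degree_{\max}(\graphsimple^D_\sigma) \le \max_{\vec{a_X}} \text{freq}(D, \vec{a_X}) - 1 = \degreebound(D,X)$, which is the first claim.

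For the sensitivity I would observe that $\degreebound(D,X) = f(D) - 1$, where $f(D) = \max_{\vec{a_X}} \text{freq}(D, \vec{a_X})$ is the maximum-frequency (heavy-hitter) count over the projection of $D$ onto $X$, and the additive constant $-1$ does not affect sensitivity. It then suffices to show $f$ has sensitivity $1$. I would argue this directly: passing from $D$ to a neighbor $D'$ by adding or removing a single tuple changes the frequency of exactly one value vector $\vec{a_X}$ by exactly $1$ and leaves all other frequencies unchanged, so the maximum of these counts can move by at most $1$ in either direction. Hence $|f(D) - f(D')| \le 1$, giving $|\degreebound(D,X) - \degreebound(D',X)| \le 1$.

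I do not anticipate a serious obstacle, as the lemma is structurally simple once the FD-violation condition is unpacked. The one point deserving care is the sensitivity argument for the maximum: one must check that even when the argmax value vector differs between $D$ and $D'$, the maximum cannot jump by more than $1$, which holds because the count of any vector differs by at most $1$ across the two databases while remaining bounded by the other database's maximum. This is the standard sensitivity-$1$ behaviour of a max-count query, and I would state it cleanly rather than belabour it.
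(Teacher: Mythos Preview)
Your proposal is correct and follows essentially the same approach as the paper: both arguments observe that an FD violation requires agreement on $X$, so a tuple's degree is at most the number of other tuples sharing its $X$-value, and both note that adding or removing one tuple changes exactly one frequency count by one, making the maximum shift by at most one. Your write-up is in fact slightly more careful than the paper's, particularly in explicitly addressing the case where the argmax value vector differs between $D$ and $D'$.
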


\begin{proof}
An FD violation can only happen to a tuple $t$ with other tuples $t'$ that share the same values for the attributes $X$. 
Let $\vec{a_X}^*$ be the most frequent value for $X$ in $D$, i.e., 
$$\vec{a_X}^*=\text{argmax}_{\vec{a_X} \in \dom(A_1) \times \ldots \times \dom(A_k) } \normalfont{\text{freq}}(D, \vec{a_X}).$$
In the worst case, a tuple $t$ has the most frequent value $\vec{a_X}^*$ for $X$ but has a different value in $Y$ with all the other tuples with $X=\vec{a_X}^*$. Then the number of violations involved by $t$ is $\text{freq}(D,\vec{a_X}^*)-1$.

Adding a tuple or removing a tuple to a database will change, at most, one of the frequency values by 1. Hence, the sensitivity of  the maximum frequency values is 1.
\end{proof}

Now, we will extend the analysis to multiple FDs. 
\begin{theorem}\label{theorem:fd_theta}
Given a database $D$ and a set of FDs 
$\constraintset=\{\sigma_1,\ldots,\sigma_l \}$, for its respective conflict graph $\graph$, we have the maximum degree of the graph $\degree_{\max}(\graph)$ upper bounded by
\begin{eqnarray}
    \degreebound(D,\Sigma)
     =      \sum_{(\sigma: X\rightarrow Y)\in \Sigma} 
         \degreebound(D,X)
\end{eqnarray}
\end{theorem}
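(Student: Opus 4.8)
The plan is to reduce the multi-FD case to the single-FD case of Lemma~\ref{lemma:fd_theta} via a decomposition of the conflict graph. First I would observe that the conflict graph $\graph$ for the whole set $\constraintset=\{\sigma_1,\dots,\sigma_l\}$ decomposes edge-wise over the individual constraints: a pair $\{i,j\}$ is an edge of $\graph$ exactly when $D[i]$ and $D[j]$ jointly violate \emph{some} $\sigma_t\in\constraintset$. Writing $\graphsimple^D_{\sigma_t}$ for the single-constraint conflict graph, this means $E(\graph)=\bigcup_{t=1}^{l} E(\graphsimple^D_{\sigma_t})$ on the common vertex set $\tids(D)$.

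The key step is a degree inequality for this union. Fix any vertex $v$. Every neighbor of $v$ in $\graph$ is a neighbor of $v$ in at least one of the component graphs $\graphsimple^D_{\sigma_t}$, so the neighborhood of $v$ in $\graph$ is contained in the union of its neighborhoods across the components. Hence $\degree(\graph,v)\le\sum_{t=1}^{l}\degree(\graphsimple^D_{\sigma_t},v)$. This is only an inequality because an edge that violates several FDs is counted once in $\graph$ but contributes to several summands; that over-counting is harmless for an upper bound.

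Then I would take the maximum over $v$ and use subadditivity of the maximum, $\max_v\sum_t(\cdot)\le\sum_t\max_v(\cdot)$, to obtain
\[
\degree_{\max}(\graph)\;\le\;\sum_{t=1}^{l}\degree_{\max}(\graphsimple^D_{\sigma_t}).
\]
Applying Lemma~\ref{lemma:fd_theta} to each single FD $\sigma_t:X_t\rightarrow Y_t$ bounds $\degree_{\max}(\graphsimple^D_{\sigma_t})\le\degreebound(D,X_t)$, and summing yields $\degree_{\max}(\graph)\le\sum_t\degreebound(D,X_t)=\degreebound(D,\Sigma)$, as claimed.

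The argument is essentially routine given the lemma; the only substantive point — and the place to be careful — is the swap of maximum and sum. This is exactly what makes the bound potentially loose: distinct FDs may attain their maximum degree at distinct tuples, so the right-hand side can considerably exceed the true maximum degree. I expect no real obstacle beyond stating this inequality cleanly, but one should note explicitly that the result is an over-estimate rather than an equality, which is precisely what is needed to \emph{safely} truncate the candidate set $\Theta$ from above without risking discarding the genuinely optimal bound.
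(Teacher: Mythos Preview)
Your proposal is correct and follows essentially the same approach as the paper: bound each vertex's degree by summing its per-FD degrees (your edge-union decomposition) and then invoke Lemma~\ref{lemma:fd_theta} per FD. The paper's proof compresses this to two lines---``a tuple may violate at most $\degreebound(D,X)$ tuples per FD, and in the worst case the same tuple violates all FDs''---which is exactly your argument with the max--sum swap left implicit; your more explicit treatment of that swap and the accompanying remark on looseness are welcome additions but not a different route.
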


\eat{
\begin{lemma}\label{lemma:fd_theta}
    For any conflict graph $\mathcal{G} (V,E)$ and a functional dependency of the form $X \rightarrow Y$ where $X,Y\subseteq\{A_1,\dots,A_m\}$ and $X$ may have multiple attributes $X = \{A_1,\dots,A_k\}$ and $Y$ is a single attribute,
    \begin{equation}
       \theta_{bound}^\sigma = \max_{v \in V}(\text{deg}(v)) 
    \leq  \max_{a_x \in \dom(A_1) \times \ldots \times \dom(A_k) } |\text{freq}(a_x)|, 
    \end{equation}    
     where $\theta_{bound}^\sigma$ denotes the max theta due to the FD and $freq(a_x)$ calculates the frequency of any value $a_x$ occurring in the attributes of $X$ in the dataset. 
\end{lemma}

\proof
Consider an FD constraint $X \rightarrow Y$ with $X = \{A_1,\dots,A_k\}$, a single attribute $Y$ and a corresponding error $e(A_1, \dots, A_k, Y)$ that changes the value a cell $a_x = t[A_x]$ in any tuple $t$ of the dataset such that attribute $A_x$ is in the FD constraint $A_x \in \{A_1, \dots, A_k, Y\}$. This error could be viewed as a typo. There could be two cases based on the attribute $A_x$. First, if $A_x$ belongs to an attribute $X$, i.e., $A_x \in \{A_1, \dots, A_k\}$, we observe that the error $e$ will add violations to the dataset based on the frequency of the value $a_x$ occurring in the attribute $A_x$. In the worst case, $a_x$ could be the most occurring value, and the number of violations that could be added for the FD is the most frequent value in the domain of all $a_x \in \dom(A_{\phi_1}) \times \dom(A_{\phi_k})$. Suppose the error $e$ is in the attribute $Y$ in the second case. In this case, the number of violations added will also be equal to the frequency of attributes in $X$ but equal to the joint occurrence of those values that participated in the tuple $t$. However, such a joint frequency is upper-bounded by the single frequency of attributes in the former case.   \xh{is the last line part of the proof or it is a statement?} \qed
}

\eat{
Lemma~\ref{lemma:fd_theta} shows that if there is one FD, the maximum number of violations that could be added due to this FD is controlled by the most occurring value that appears in the equality formulas of that FD. The example below demonstrates this lemma.

\begin{example}
    Consider the same setup as  Example~\ref{example:running_example} and assume an FD $\sigma : \forall t_i, t_j \dots \in D, \neg(t_i[Occupation] = t_j[Occupation] \land t_i[Income] \neq t_j[Income])$. We can see that the number of violations added due to the erroneous grey row is 3 which is also the max frequency of values occurring in the Occupation attribute (Doctor). 
\end{example}
}

\eat{

The sensitivity of calculating this bound is fortunately also low as it deals with the frequency of values in the datasets. 

\begin{lemma}\label{lemma:sens_fd_theta}
    The sensitivity of $\theta_{bound}^\sigma(D)$ equals 1. 
\end{lemma}
\proof
$\theta_{bound}^\sigma$ calculates the most occurring value in the domain of the equality attributes. When a row is added or removed from the dataset, the frequency of any value in the domain can only change by 1. \qed
}

\eat{
The bound $\theta_{bound}^\sigma$ can be extended to more than one FD $\sigma_1, \dots, \sigma_k$ by summing over all the max frequencies over the equality attributes in the FDs as shown in Theorem~\ref{theorem:fd_theta}.

\begin{theorem}\label{theorem:fd_theta}
    For any conflict graph $\mathcal{G} (V,E)$ and a constraint set $\constraintset = [\sigma_1, \dots, \sigma_k]$ of all functional dependencies of the form $X \rightarrow Y$, 
    \begin{equation}
       \theta_{bound} = \sum_{\sigma_i} \theta_{bound}^{\sigma_i} 
    \end{equation}    
     where $\theta_{max}^{\sigma_i}$ denotes the max theta due to the FD $\sigma_i$ as in Lemma~\ref{lemma:fd_theta}.
\end{theorem}
}

\proof
By Lemma~\ref{lemma:fd_theta}, 
for each FD $\sigma: X\rightarrow Y$, a tuple may violate at most $\degreebound(D,X)$ number of tuples. In the worst case, the same tuple may violate all FDs. \qed

We will spend some privacy budget $\epsilon_0$ to perturb the upper bound $\degreebound(D,X)$ for all FDs with LM and add them together. Each FD is assigned with  $\epsilon_0/|\Sigma_{\text{FD}}|$, where
$\Sigma_{\text{FD}}$ is the set of FDs in $\Sigma$. We denote this perturbed upper bound as $\noisydegreebound$ and add it to the candidate set $\Theta$ if absent.

\paratitle{Extension to general DCs}
The upper bound derived in Theorem~\ref{theorem:fd_theta}
only works for FDs but fails for general DCs. General DCs have more complex operators, such as ``greater/smaller than,'' in their formulas. Such inequalities require the computation of tuple-specific information, which is hard with DP. For example, consider the DC $\sigma: \neg(t_i[gain] > t_j[gain] \land t_i[loss] < t_j[loss])$ saying that if the gain for tuple $t_i$ is greater than the gain for tuple $t_j$, then the loss for $t_i$ should also be greater than $t_j$. We can observe that similar analyses for FDs do not work here as the frequency of a particular domain value in $D$ does not bound the number of conflicts related to a tuple. 
Instead, we have to iterate each tuple $t$'s gain value and find how many other tuples $t'$s violate this gain value. In the worst case, such a computation may have a sensitivity equal to the data size. Therefore, estimation using DCs may result in much noise, especially when the dataset has fewer conflicts, and the noise is added to correspond to the large sensitivity.

Our experimental study (\Cref{sec:experiments}) shows that datasets with general DCs have dense conflict graphs, which favors larger $\theta$s for graph projection. Hence, if we learn a small noisy upper bound  $\noisydegreebound$ based on the FDs with LM, we will first prune all degree candidates smaller than $\noisydegreebound$, but then include $|V|$, which corresponds to the case when no edges are truncated, and Laplace mechanism is applied with the largest possible sensitivity $|V|$, i.e., 
\begin{equation}\label{eq:fd-bound}
    \Theta' = \{\theta\in \Theta~|~\theta \leq \noisydegreebound\} \cup \{|V|\}.
\end{equation}
Though the maximum value in $\Theta'$ is $|V|$, the sensitivity of the quality function over the candidate set $\Theta'$ remains $\noisydegreebound$. For the $|V|$ candidate, the quality function only depends on the Laplace error $\frac{\sqrt{2}|V|}{\epsilon_2}$ and has no error from $e_{\text{bias}}$ as no edges will be truncated. 
\revc{Despite being tailored for FDs, we show that, in practice, our approach is cheap and performs well for DCs. In \Cref{sec:experiments}, we show that this approach works well for the dense Adult~\cite{adult} dataset where we compute the $\problematic$ using this strategy in Figure~\ref{fig:comparing_strategies}. Developing a specific strategy for DCs is an important direction of future work.}
In practice, one may skip this upper bound calculation process and skip directly to the two-step exponential mechanism if it is known that the graph is too dense or contains few FDs and more general DCs. We discuss this in detail in the experiments section.

\paratitle{Hierarchical EM}\label{sec:dc_aware_hier_expo_mech}
The upper bound $\degreebound$ may not be tight as it estimates the maximum degree in the worst case. The graph would be sparse with low degree values, and there is still room for pruning. 
To further prune candidate values in the set $\Theta$, we use a hierarchical EM that first samples a degree value $\theta^*$ to prune values in $\Theta$ and then sample again another value $\theta^*$ from the remaining candidates as the final degree the graph projection. 
Our work uses a two-step hierarchical EM by splitting the privacy budget equally into halves. One may extend this EM to more steps at the cost of breaking their privacy budget more times, but in practice, we notice that a two-step is enough for a reasonable estimate. 

\begin{example}\label{example:parameter_selection} 
    Consider the same setup as Example~\ref{example:running_example}. For this dataset, we start with $\Theta = [1, 2, 3]$ and 
    the $\theta_{\max}$ for this setup is 3. Assume no values are pruned in the first optimization phase. 
    We compare a single versus a two-step hierarchical EM for the second optimization step. From Table~\ref{tab:example_expo_prob} in Example~\ref{example:quality_function}, we know that the $\theta_1$ has the best quality. However, as the quality values are close, the probability of choosing the best candidate is similar, as shown in Table~\ref{tab:example_expo_prob} with $\epsilon=1$.
    \begin{table}[]
        \centering
        \begin{tabular}{|c|c|c|c|c|}
             \hline
             $\theta$ & $q$ & EM & 2-EM ($\theta^*_1 = \theta_3$) & 2-EM ($\theta^*_1 = \theta_2$)\\
             \hline
             1 & $-3.41$ & $0.35$ & $0.51$ & $1$\\
             2 & $-3.82$ & $0.33$ & $0.49$ & -\\
             3 & $-4.24$ & $0.31$ & - & -\\
             \hline
        \end{tabular}
        \caption{Probabilities of candidates with the exponential mechanism (EM) vs.~the two-step hierarchical exponential mechanism (2-EM). $\theta^*_1$ refers to the first-step output of 2-EM.}
        \label{tab:example_expo_prob}
    \end{table}
    The exponential mechanism will likely choose a suboptimal candidate in such a scenario as the probabilities are close. But if a two-step exponential mechanism is used even with half budget $\epsilon = 0.5$, the likelihood of choosing the best candidate $\theta_1$ goes up to $0.51$ if the first step chose $\theta_3$ or $1$ if the first step chosen $\theta_2$.
\end{example}

\begin{algorithm}[t]
\caption{Optimized EM for parameter selection}
\label{algo:em_opt}
    \KwData{Graph $\mathcal{G} (V,E)$, candidate set $\Theta=\{1,\ldots,|V|\}$, quality function $q$, privacy budget $\epsilon_1, \epsilon_2$ }
    \KwResult{Candidate $\theta^*$}
    
    \If{$\Sigma$ mainly consists of FDs} {$\epsilon_0\gets\epsilon_1/4$, $\epsilon_1 \gets \epsilon_1-\epsilon_0$\\
    Compute noisy upper bound $\noisydegreebound\gets \sum_{\sigma:X\rightarrow Y} (\degreebound(D,X)+\text{Lap}(|\Sigma_{\text{FD}}|/\epsilon_0))$\\
     }{}
    Prune candidates $\Theta \gets \{\theta\in \Theta~|~\theta \leq \noisydegreebound\} \cup \{\noisydegreebound, |V|\}$ \\
    Set $\theta_{\max}\gets \min(\noisydegreebound,|V|)$ \\ 
    \For{$s\in \{1,2\}$ }{
   For each $\theta_i \in \Theta$, compute $q_{\epsilon_2}(\mathcal{G}, \theta_i)$     \commenttext{// See Equation~\eqref{eq:quality_function}}
    \\
    Sample $\theta^*$ with prob $\propto \exp( \frac{\frac{\epsilon_1}{2} q_{\epsilon_2}(\mathcal{G}, \theta_i)}{2\theta_{\max}})$ \\
    Prune candidates $\Theta \gets \{\theta\in \Theta~|~\theta \leq \theta^*\}$ \\
    Set $\theta_{\max}\gets \theta^*$  
    }
    {\bf Return} $\theta^*$\\
\end{algorithm}

\paratitle{Incorporating the optimizations into the algorithm}
Algorithm~\ref{algo:em_opt} outlines the two optimization techniques. First, we decide when to use the estimated upper bound for the maximum degrees, for example, when the constraint set $\Sigma$ mainly consists of FDs. We will spend part of the budget $\epsilon_0$ from $\epsilon_1$ to perturb the upper bounds $\degreebound(D,X)$ for all FDs with Laplace mechanism and add them together (lines 1-3). The noisy upper bound $\noisydegreebound$ prunes the candidate set (line 4). We also add $|V|$ to the candidate set if there are general DCs in $\Sigma$, and then set the sensitivity of the quality function $\theta_{\max}$ to be the minimum of the noisy upper bound or $|V|$ (line 5).
Then, we conduct the two-step hierarchical exponential mechanism for parameter selection (lines 6-10). Lines 7-8 work similarly to the previous exponential mechanism algorithm with half of the remaining $\epsilon_1$, where we choose a $\theta^*$ based on the quality function. However, instead of using it as the final candidate, we use it to prune values in $\Theta$ and improve the sensitivity $\theta_{\max}$ for the second exponential mechanism (lines 9-10). Then, we repeat the exponential mechanism and output the sampled $\theta^*$ (line 11).
Algorithm~\ref{algo:em_opt} has a similar complexity of $O(|\Theta|m)$ as Algorithm~\ref{algo:expo_mech_basic}, where $|E|$ is the edge size of the graph. The overall Algorithm~\ref{algo:graph_general} has a complexity of $O(|\Sigma|n^2+|\Theta|m)$.

\paratitle{Privacy and utility analysis}
The privacy analysis of the optimizations depends on the analysis of three major steps: $\degreebound$ computation with the Laplace mechanism, the two-step exponential mechanism, and the final measure calculation with the Laplace mechanism. By sequential composition, we have Theorem~\ref{thm:privacy_proof_dc_aware}.

\begin{theorem}\label{thm:privacy_proof_dc_aware}
    Algorithm~\ref{algo:graph_general} with the optimized EM in  Algorithm~\ref{algo:em_opt} satisfies $(\epsilon_1 + \epsilon_2)$-DP.
\end{theorem}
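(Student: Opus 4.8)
The plan is to decompose Algorithm~\ref{algo:graph_general}, when instantiated with the optimized selector of Algorithm~\ref{algo:em_opt}, into a sequence of elementary DP primitives and then invoke the sequential-composition and post-processing properties of Proposition~\ref{prop:DP-comp-post}. The overall skeleton mirrors the proof of Theorem~\ref{thm:privacy_proof_dc_oblivious}: the parameter-selection subroutine consumes $\epsilon_1$, the final Laplace release consumes $\epsilon_2$, and the two compose to $(\epsilon_1+\epsilon_2)$-DP. Since adding or removing a tuple of $D$ corresponds exactly to adding or removing a node (with its incident edges) of $\graph$, node-DP on $\graph$ and DP on $D$ coincide, so it suffices to argue node-DP throughout. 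The genuinely new work relative to the oblivious case is to verify that Algorithm~\ref{algo:em_opt} as a whole is $\epsilon_1$-DP, despite its internal budget split and its data-dependent pruning.

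First I would account for the upper-bound estimation (lines 1--3). When $\Sigma$ is predominantly FDs, the algorithm sets $\epsilon_0=\epsilon_1/4$ and reassigns $\epsilon_1\gets\epsilon_1-\epsilon_0$. By Lemma~\ref{lemma:fd_theta}, each per-FD bound $\degreebound(D,X)$ has global sensitivity $1$, so releasing $\degreebound(D,X)+\mathrm{Lap}(|\Sigma_{\text{FD}}|/\epsilon_0)$ is $(\epsilon_0/|\Sigma_{\text{FD}}|)$-DP by the Laplace mechanism (Definition~\ref{def:LM}). Summing these noisy bounds over the $|\Sigma_{\text{FD}}|$ functional dependencies is a sequential composition on the same database, so $\noisydegreebound$ is $\epsilon_0$-DP. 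The subsequent pruning of $\Theta$ and the assignment of $\theta_{\max}$ (lines 4--5) access $D$ only through $\noisydegreebound$, hence are post-processing and incur no further cost.

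Next I would treat the two-step hierarchical exponential mechanism (lines 6--10), now writing $\epsilon_1$ for the reassigned budget $\epsilon_1-\epsilon_0$. Each iteration samples $\theta^*$ with probability proportional to $\exp\!\bigl(\tfrac{(\epsilon_1/2)\,q_{\epsilon_2}(\mathcal{G},\theta_i)}{2\theta_{\max}}\bigr)$; since Lemma~\ref{lemma:sens_quality} gives quality-function sensitivity $\theta_{\max}=\max(\Theta)$, each iteration is exactly an exponential mechanism (Definition~\ref{def:EM}) with budget $\epsilon_1/2$ and is therefore $(\epsilon_1/2)$-DP. The intermediate pruning $\Theta\gets\{\theta\le\theta^*\}$ and update $\theta_{\max}\gets\theta^*$ depend on $D$ only through the already-released first-round output, so by post-processing the second exponential mechanism, conditioned on that output, remains an $(\epsilon_1/2)$-DP mechanism on $D$ with the (now smaller) sensitivity $\theta_{\max}=\theta^*$. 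Adaptive sequential composition of the two rounds gives $\epsilon_1$-DP for the selection phase; adding back the $\epsilon_0$ spent on $\noisydegreebound$, the three components consume $\epsilon_0+\tfrac{\epsilon_1-\epsilon_0}{2}+\tfrac{\epsilon_1-\epsilon_0}{2}=\epsilon_1$ in original terms, so Algorithm~\ref{algo:em_opt} is $\epsilon_1$-DP overall.

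Finally, the released value $f(\mathcal{G}_{\theta^*})+\mathrm{Lap}(\theta^*/\epsilon_2)$ in line~4 of Algorithm~\ref{algo:graph_general} is, conditioned on the selected $\theta^*$, a Laplace mechanism whose target $f\circ\pi_{\theta^*}$ has sensitivity $\theta^*$ by Lemma~\ref{lemma:sensitivity}; calibrating noise to $\theta^*/\epsilon_2$ makes this step $\epsilon_2$-DP. Composing the $\epsilon_1$-DP selector with the $\epsilon_2$-DP release via sequential composition (Proposition~\ref{prop:DP-comp-post}) yields the claimed $(\epsilon_1+\epsilon_2)$-DP. I expect the main subtlety to be the adaptivity: the single privately chosen $\theta^*$ simultaneously drives the projection parameter, the quality-function sensitivity in the second EM round, and the Laplace scale of the final release. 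I would discharge this by framing the whole pipeline as an adaptive composition in which each mechanism receives the previous DP outputs as auxiliary input, so that every data-dependent reuse of $\theta^*$ (and of $\noisydegreebound$) is legitimate post-processing rather than an unaccounted query to $D$.
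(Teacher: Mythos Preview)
Your decomposition is correct and follows exactly the sequential-composition strategy the paper uses, though you supply considerably more detail than the paper's one-line appeal to Proposition~\ref{prop:DP-comp-post}.

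One small technical point to fix: for the first EM round you invoke Lemma~\ref{lemma:sens_quality}, which bounds the quality sensitivity by $\max(\Theta)$. But after line~4 of Algorithm~\ref{algo:em_opt} the pruned candidate set explicitly contains $|V|$, so $\max(\Theta)$ can exceed the $\theta_{\max}=\min(\noisydegreebound,|V|)$ that line~8 actually places in the exponent. Lemma~\ref{lemma:sens_quality} alone therefore does not certify that the first round is $(\epsilon_1/2)$-DP when $\noisydegreebound<|V|$. The right reference is Lemma~\ref{lemma:sens_quality_2stepEM}, whose point is precisely that the score assigned to the special candidate $|V|$ carries no bias term and is data-independent, so only candidates $\le\noisydegreebound$ contribute to the sensitivity. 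With that substitution (and no other change) your argument goes through; your treatment of the second EM round, the Laplace release, and the adaptive reuse of $\theta^*$ and $\noisydegreebound$ via post-processing is already sound.
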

\reva{
\begin{proof}[Proof sketch]
The proof is similar to Theorem~\ref{thm:privacy_proof_dc_oblivious} and is due to the composition property of DP as stated in Proposition~\ref{prop:DP-comp-post}.
\end{proof}
}

We show a tighter sensitivity analysis for the quality function in EM over the pruned candidate set. The sensitivity analysis is given by Lemma~\ref{lemma:sens_quality_2stepEM} and is used for $\theta_{\text{max}}$ in line 8 of Algorithm~\ref{algo:em_opt}.

\begin{lemma} \label{lemma:sens_quality_2stepEM}
The sensitivity of $q_{\epsilon_2}(\mathcal{G}, \theta_i)$ in the 2-step EM (Algorithm~\ref{algo:em_opt})
defined in Equation~\eqref{eq:quality_function} is $\theta_{\max}= \min(\noisydegreebound,|V|)$ for 1st EM step and $\theta_{\max}=\theta^*$ for the 2nd EM step.
\end{lemma}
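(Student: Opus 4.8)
The plan is to reduce both claims to the sensitivity argument already established for the basic exponential mechanism in Lemma~\ref{lemma:sens_quality}, observing that the only quantity controlling the sensitivity there is the reference degree $\theta_{\max}$ appearing inside $e_{\text{bias}}$. First I would recall that, for neighboring conflict graphs $\mathcal{G}$ and $\mathcal{G}'$ with $\mathcal{G}'$ carrying one extra node, the Laplace term $\sqrt{2}\theta/\epsilon_2$ is identical on both sides and cancels, so that $|q_{\epsilon_2}(\mathcal{G},\theta) - q_{\epsilon_2}(\mathcal{G}',\theta)|$ equals $\bigl|(|\mathcal{G}'_{\theta_{\max}}.E| - |\mathcal{G}_{\theta_{\max}}.E|) - (|\mathcal{G}'_\theta.E| - |\mathcal{G}_\theta.E|)\bigr|$. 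By Lemma~\ref{lemma:sens_lap} the first term lies in $[0,\theta_{\max}]$, and by the stable-ordering property used in that proof the second term is non-negative and bounded by $\theta \le \theta_{\max}$; hence the whole difference is bounded in absolute value by $\theta_{\max}$. The argument is thus parametric in whatever value $\theta_{\max}$ takes, and for $\problematic$ the identical bound follows by substituting positive-degree node counts for edge counts.

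For the first EM step I would instantiate this with $\theta_{\max} = \min(\noisydegreebound,|V|)$, the reference degree set in line~5 of Algorithm~\ref{algo:em_opt}. Every pruned candidate satisfies $\theta \le \noisydegreebound$, so the parametric bound yields sensitivity at most $\min(\noisydegreebound,|V|)$. The one candidate needing separate treatment is $\theta = |V|$: since no edges are truncated there, as noted in the text its quality contribution reduces to the data-independent Laplace term $-\sqrt{2}|V|/\epsilon_2$ and therefore has zero sensitivity. Taking the maximum over all candidates leaves the sensitivity at $\theta_{\max} = \min(\noisydegreebound,|V|)$, matching the first claim.

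For the second EM step I would condition on the value $\theta^*$ output by the first step. Because $\theta^*$ is released by an $(\epsilon_1/2)$-DP mechanism, it may be treated as a fixed public constant for the purpose of the second step's sensitivity analysis; the candidate set has been pruned to $\{\theta\in\Theta : \theta \le \theta^*\}$ and the reference degree reset to $\theta_{\max} = \theta^*$. Re-running the parametric bound with this $\theta_{\max}$ immediately gives sensitivity $\theta^*$, which is exactly the value plugged into the exponent in line~8 of Algorithm~\ref{algo:em_opt}, so that the second exponential mechanism is $(\epsilon_1/2)$-DP conditioned on $\theta^*$ and sequential composition recovers the overall budget.

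The main obstacle I anticipate is not the algebra, which is a near-verbatim reuse of Lemma~\ref{lemma:sens_quality}, but the conceptual care required because the reference degree $\theta_{\max}$ is itself data-dependent in both steps: a noisy FD-derived bound in the first and a sampled value in the second. The delicate point is to justify holding $\theta_{\max}$ fixed while varying the database across a neighboring pair; this is legitimate precisely because $\noisydegreebound$ and $\theta^*$ are outputs of earlier DP releases, so the adaptive composition framework permits analyzing each exponential mechanism conditionally on them. I would also explicitly flag the $\theta=|V|$ candidate as the sole place where the generic $e_{\text{bias}}$ expression must be read as contributing zero bias, and verify that its inclusion never inflates the sensitivity past $\theta_{\max}$.
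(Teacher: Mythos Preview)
Your proposal is correct and follows essentially the same approach as the paper: reduce to Lemma~\ref{lemma:sens_quality} with the appropriate $\theta_{\max}$ plugged in for each step, treat the special candidate $\theta=|V|$ as having a data-independent quality score (hence zero sensitivity), and for the second step treat $\theta^*$ as a fixed released value. If anything, your write-up is more careful than the paper's, which leaves the adaptive-composition justification for conditioning on $\noisydegreebound$ and $\theta^*$ implicit.
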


\reva{
\begin{proof}[Proof sketch]
The proof follows from \Cref{lemma:sens_quality}, substituting the $\theta_{max}$ with the appropriate threshold values for each EM step.
\end{proof}
}

\ifpaper
\else
The proofs for the theorem and lemma are at \cref{app:privacy_proof_dc_aware} and \cref{app:sens_quality_2stepEM}.
\fi 

The utility analysis in Theorem~\ref{thm:graph_general_utility} for Algorithm~\ref{algo:graph_general} with the basic EM (Algorithm~\ref{algo:expo_mech_basic}) still applies to the optimized EM (Algorithm~\ref{algo:em_opt}). The basic EM usually has $\theta_{\max}=|D|=|V|$ and the full budget $\epsilon_1$, while the optimized EM has a much smaller $\theta_{\max}$ and slightly lower privacy budget when the graph is sparse. 

In practice (\Cref{sec:experiments}), we see significant utility improvements by the optimized EM for sparse graphs. When the graph is dense, we see the utility degrade slightly due to a smaller budget for each EM. However, the degradation is negligible with respect to the true inconsistency measure.

\section{DP Minimum Vertex Cover for \repair\ }\label{sec:vertex_cover}

This section details our approach for computing the optimal repair measure, $\repair$, using the conflict graph. \repair\ is defined as the minimum number of vertices that must be removed to eliminate all conflicts within the dataset. For the conflict graph $\graph$, this corresponds to finding the minimum vertex cover -- an NP-hard problem. To address this, we apply a well-known polynomial-time algorithm that provides a 2-approximation for vertex cover~\cite{vazirani1997approximation}. This randomized algorithm iterates through a random ordering of edges, adding both nodes of each edge to the vertex cover if they haven't been encountered, then removes all incident edges. The process repeats until the edge list is exhausted. 
In our setting, we aim to compute the minimum vertex cover size while satisfying DP. A straightforward approach would be to analyze the sensitivity of the 2-approximation algorithm and add the appropriate DP noise. However, determining the sensitivity of this naive approximation is challenging, as the algorithm's output can fluctuate significantly depending on the order of selected edges. This variability is illustrated in \Cref{example:naive_vertexcover}.

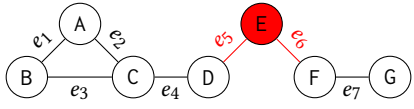
\begin{figure}
    \begin{center}
    \scalebox{1.0}{
    \begin{tikzpicture}[main/.style = {draw, circle}]
        \node[main] (1) {{\tt B}}; 
        \node[main] (2) [above right of=1] {{\tt A}};
        \node[main] (3) [below right of=2] {{\tt C}};
        \node[main] (4) [right of=3] {{\tt D}};
        \node[main, fill=red] (5) [above right of=4] {{\tt E}};
        \node[main] (6) [below right of=5] {{\tt F}};
        \node[main] (7) [right of=6] {{\tt G}};
        \draw (1) -- node[midway, above, sloped]{$e_1$} (2);
        \draw (2) -- node[midway, above, sloped]{$e_2$} (3);
        \draw (1) -- node[midway, below, sloped]{$e_3$} (3);
        \draw (3) -- node[midway, below, sloped]{$e_4$}(4);
        \draw[red] (5) -- node[midway, above, sloped]{$e_6$}(6);
        \draw[red] (4) --  node[midway, above, sloped]{$e_5$}(5);
        \draw (6) -- node[midway, below, sloped]{$e_7$}(7);
    \end{tikzpicture} 
    }
\end{center}
\caption{Toy graph example $\mathcal{G}$ with seven nodes ({\tt A} to {\tt G}) and seven edges. Consider a neighboring graph $\mathcal{G}'$ with the differing node {\tt E} (red) and its two edges. }
\label{fig:examplegraph}
\end{figure}

\begin{example}
Let us consider a graph $\mathcal{G}$ with 7 vertices {\tt A} to {\tt G} and 7 edges $e_1$ to $e_7$ as shown in Figure~\ref{fig:examplegraph}. We can have a neighboring graph $\mathcal{G}'$ by considering the vertex {\tt E} as the differing vertex and two of its edges $e_5$ and $e_6$ as the differing edges. This example shows that according to the vanilla 2-approximate algorithm, the output for the graphs $\mathcal{G}$ and $\mathcal{G}'$ may vary drastically. For $\mathcal{G}$, if $e_2$ is selected followed by $e_7$, then the vertex cover size is 4. However, for graph $\mathcal{G}'$, if $e_1$ or $e_4$ is selected first and subsequently after the other one $e_6$ is selected, then the output is 6. Moreover, this difference may get significantly large if the above graph is stacked multiple times and the corresponding vertex that creates this difference is chosen every time. 
\label{example:naive_vertexcover}
\end{example}

\begin{algorithm}[t]
\caption{DP approximation of minimum vertex cover size for $\repair$}
\label{algo:dp_vertexcover}
    \KwData{Graph $\mathcal{G} (V,E)$, stable global ordering $\Lambda$, privacy parameter $\epsilon$}
    \KwResult{DP minimum vertex cover size for $\repair$}
    Initialize vertex cover set $C = \emptyset$ and size $c = 0$ \\
    Initialize edge list $E_0 = E$ \\
    \For{$i \in \{1 \dots |\Lambda|\}$}{
        pop edge $e_i = \{u, v\}$ in order from $\Lambda$\\
        add $u$ and $v$ to $C$ and $c = c + 2$\\
        $E_{i+1}$ = remove all edges incident to $u$ or $v$ from $E_i$\\    
    }
    {\bf Return} $c$\ + Lap($\epsilon$/2)
\end{algorithm}

To solve the sensitivity issue, we make a minor change in the algorithm by traversing the edges in a particular order (drawing on~\cite{day2016publishing}). We use a similar stable ordering $\Lambda$ defined in \Cref{sec:prelim-dp}. The new algorithm is shown in Algorithm~\ref{algo:dp_vertexcover}. We initialize an empty vertex cover set $C$, its size $c$, and an edge list (lines 1--2). We then start an iteration over all edges in the same ordering as the stable ordering $\Lambda$ (line 3). For each edge $e_i = \{u,v\} \in E$ that is part of the graph, we add both $u$ and $v$ to $C$ and correspondingly increment the size $c$ (lines 4--5). We remove all other edges, including $e_i$, connected to $u$ or $v$ from $E$ and continue the iteration (line 5). Finally, we return the noisy size of the vertex cover (line 6). The sensitivity of this algorithm is given by \cref{prop:vertexcover_sens}.

\begin{proposition}\label{prop:vertexcover_sens}
Algorithm~\ref{algo:dp_vertexcover} obtains a vertex cover, and its size has a sensitivity of 2.
\end{proposition}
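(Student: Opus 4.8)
The statement has two parts, and the plan is to address them separately. \textbf{Part 1 (validity).} First I would observe that the edges actually acted upon in the loop (those still present in $E_i$ when the loop reaches them) form a matching $M$: whenever such an edge $\{u,v\}$ is processed, the algorithm deletes every remaining edge incident to $u$ or $v$, so no two processed edges can share an endpoint. The returned set $C$ is exactly the endpoint set of $M$, whence $c = 2|M| = |C|$ equals the number of matched vertices. To see $C$ is a vertex cover, note the process continues until the edge list is empty, and an edge can leave the list only when one of its endpoints is placed in $C$; hence every original edge has an endpoint in $C$. Equivalently, $M$ is a \emph{maximal} matching, so its endpoint set is a vertex cover.

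\textbf{Part 2 (sensitivity).} Let $\mathcal{G}$ and $\mathcal{G}'$ be node-neighboring, with $\mathcal{G}'=\mathcal{G}$ together with one extra node $v^+$ and the set $E^+$ of edges incident to it. Since $c$ counts matched vertices, it suffices to bound by $2$ the difference between the numbers of matched vertices produced by the two runs. I would run the algorithm on both graphs in lockstep along the single ordering $\Lambda(\mathcal{G}')$, using the stable-ordering property (\Cref{def:stable_ordering}) so that the edges common to the two graphs occur in the same relative order in $\Lambda(\mathcal{G})$ and $\Lambda(\mathcal{G}')$; the run on $\mathcal{G}$ simply ignores the edges of $E^+$.

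The core of the argument is an invariant proved by induction over the processed prefix of $\Lambda(\mathcal{G}')$: \emph{apart from $v^+$ itself, at most one vertex has a differing matched/unmatched status in the two runs.} Call this vertex the floating discrepancy. In the inductive step: for a common edge disjoint from the discrepancy vertex, both runs see identical endpoint statuses and make identical decisions, so nothing changes; for a common edge incident to the discrepancy vertex $z$, exactly one run can select it, which matches $z$ in that run (erasing the old discrepancy) while leaving the edge's other endpoint matched in only one run (creating a new one) — so the discrepancy merely \emph{moves} and never splits; and for an edge of $E^+$, only the $\mathcal{G}'$-run can act, at most one such edge is ever selected (selecting $\{v^+,w\}$ deletes all remaining edges at $v^+$), and this introduces $v^+$ as a $\mathcal{G}'$-only matched vertex plus at most one floating discrepancy at $w$.

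At termination the matched-vertex sets differ only in $v^+$ and possibly one floating vertex, giving $|c(\mathcal{G})-c(\mathcal{G}')|\le 2$; the bound is symmetric under node deletion, so the global node-sensitivity is $2$, and it is attained (e.g.\ by joining $v^+$ to an otherwise isolated vertex). I expect the invariant to be the main obstacle: the real content is arguing that the discrepancy stays a single floating vertex and does not cascade into many vertices as the greedy choices propagate down the edge order. This is exactly where stable ordering is indispensable — without a consistent ordering of the common edges the two runs could diverge arbitrarily (cf.\ \Cref{example:naive_vertexcover}) — together with the observation that $v^+$ can be matched at most once, which caps the number of seeds of discrepancy at one.
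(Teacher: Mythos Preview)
Your proof is correct and follows essentially the same approach as the paper: both run the algorithm on the two neighboring graphs in lockstep along the stable ordering and maintain an inductive invariant that a single ``special vertex'' (your \emph{floating discrepancy}, the paper's $v^*$) accounts for the entire difference between the two runs at every step, with the discrepancy moving but never splitting. Your framing in terms of matched/unmatched vertex status is a clean dual to the paper's framing in terms of the surviving edge sets $E_i$ versus $E'_i$ (and its three cases on the counter difference), and your explicit Part~1 via maximal matchings is a nice addition that the paper leaves implicit.
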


\ifpaper
\begin{proof} (sketch)
Consider two graphs that differ by one node $v^*$ and the edges connected to it. The stable ordering of edges $\Lambda$ in algorithm~\ref{algo:dp_vertexcover} restricts the order in which the edges occur in both graphs. As the algorithm removes all edges of a particular node once observed, we can delineate 3 cases depending on which of the two graphs has the differing node. This allows us to show proof by induction. The detailed proof is in the full version~\cite{full_paper}.
\end{proof}

\else
The proof can be found in ~\cref{app:vertext_cover_sensitivity}.
\eat{
\proof
Let's assume without loss of generality that
$\mathcal{G}^{\prime}=\left(V^{\prime}, E^{\prime}\right)$ has an additional node $v^{+}$compared to $\mathcal{G}=$ $(V, E)$, i.e., $V^{\prime}=V \cup\left\{v^{+}\right\}, E^{\prime}=E \cup E^{+}$, and $E^{+}$ is the set of all edges incident to $v^{+}$ in $\mathcal{G}^{\prime}$. We prove the theorem using a mathematical induction on $i$ that iterates over all edges of the global stable ordering $\Lambda$.

\underline{Base}: At step 0, the value of $c$ and $c'$ are both 0.

\underline{Hypothesis}: As the algorithm progresses at each step $i$ when the edge $e_i$ is chosen, either the edges of graph $\mathcal{G}'$ which is denoted by $E'_i$ has an extra vertex or the edge of graph $\mathcal{G}$ has an extra vertex. Thus, we can have two cases depending on some node $v^*$ and its edges $\{v^*\}$. Note that at the beginning of the algorithm, $v^*$ is the differing node $v^+$ and $\mathcal{G}'$ has the extra edges of $v^*/v^+$, but $v^*$ may change as the algorithm progresses. The cases are as illustrated below:
\begin{itemize}
    \item Case 1: $E_i$ does not contain any edges incident to $v^*$, $E'_i = E_i + \{ v^* \}$ and the vertex cover sizes at step $i$ could be $c_i = c'_i$ or $c_i = c'_i + 2$.
    \item Case 2: $E'_i$ does not contain any edges incident to $v^*$, $E_i = E'_i + \{ v^* \}$ and the vertex cover sizes at step $i$ could be $c_i = c'_i$ or $c'_i = c_i + 2$.
    \item Case 3: $E_i=E'_i$ and the vertex cover sizes at step $i$ is $c_i = c'_i$. This case occurs only when the additional node $v^+$ has no edges. 
\end{itemize}

\begin{figure*}
    \begin{subfigure}[b]{\textwidth}
    \centering
    \includegraphics[width=0.19\textwidth]{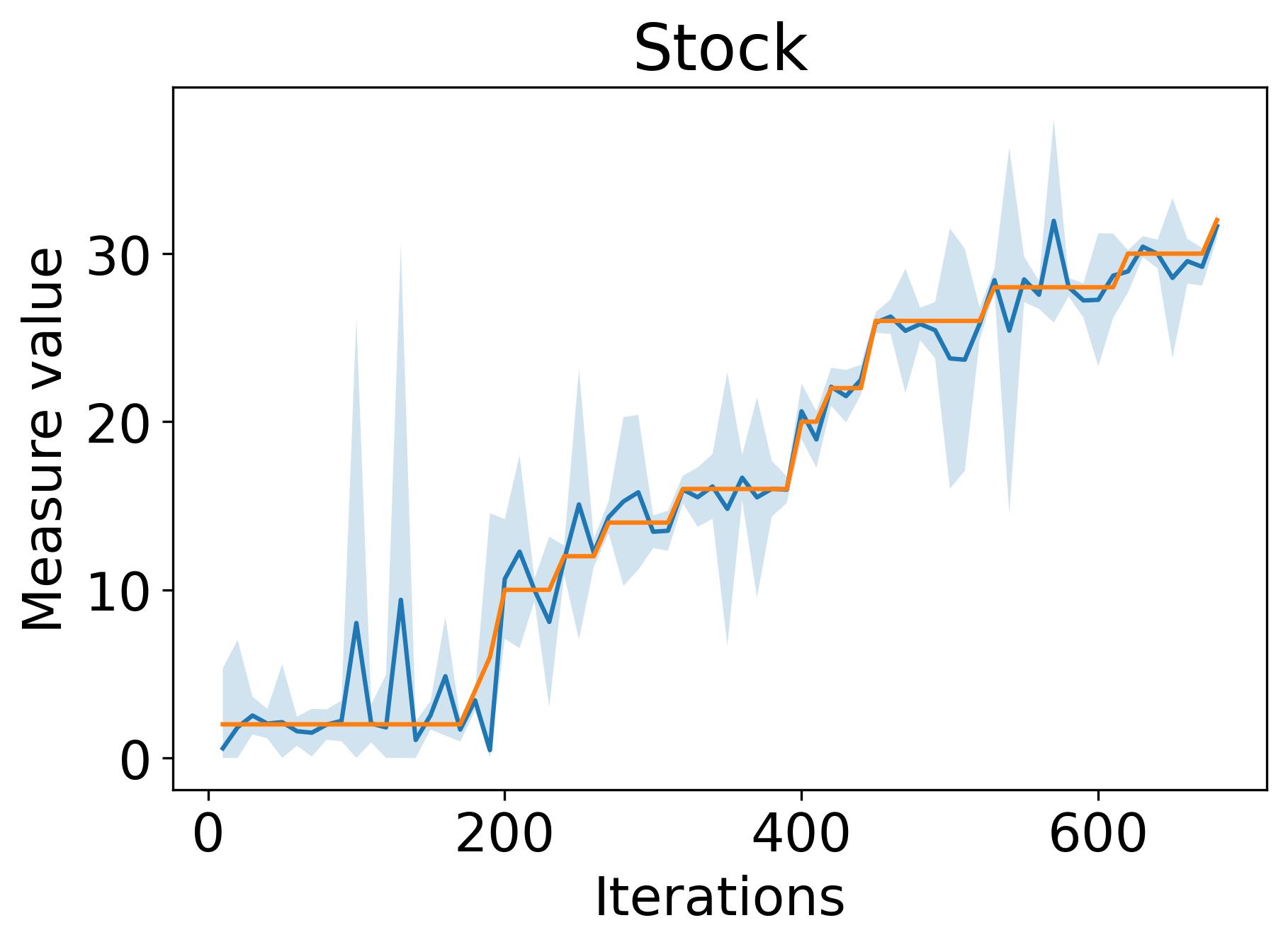}
    \hfill
    \includegraphics[width=0.19\textwidth]{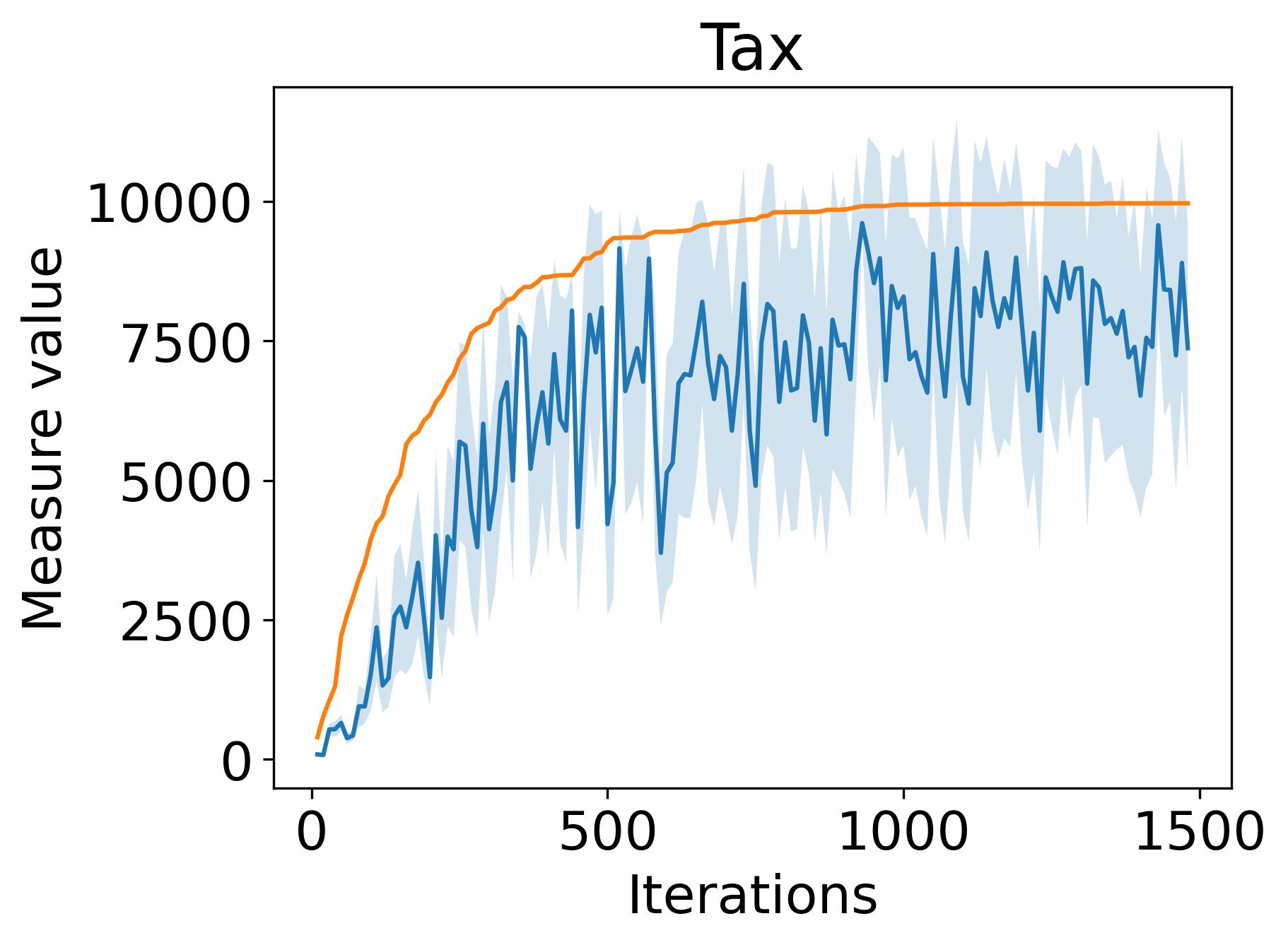}
    \hfill
    \includegraphics[width=0.19\textwidth]{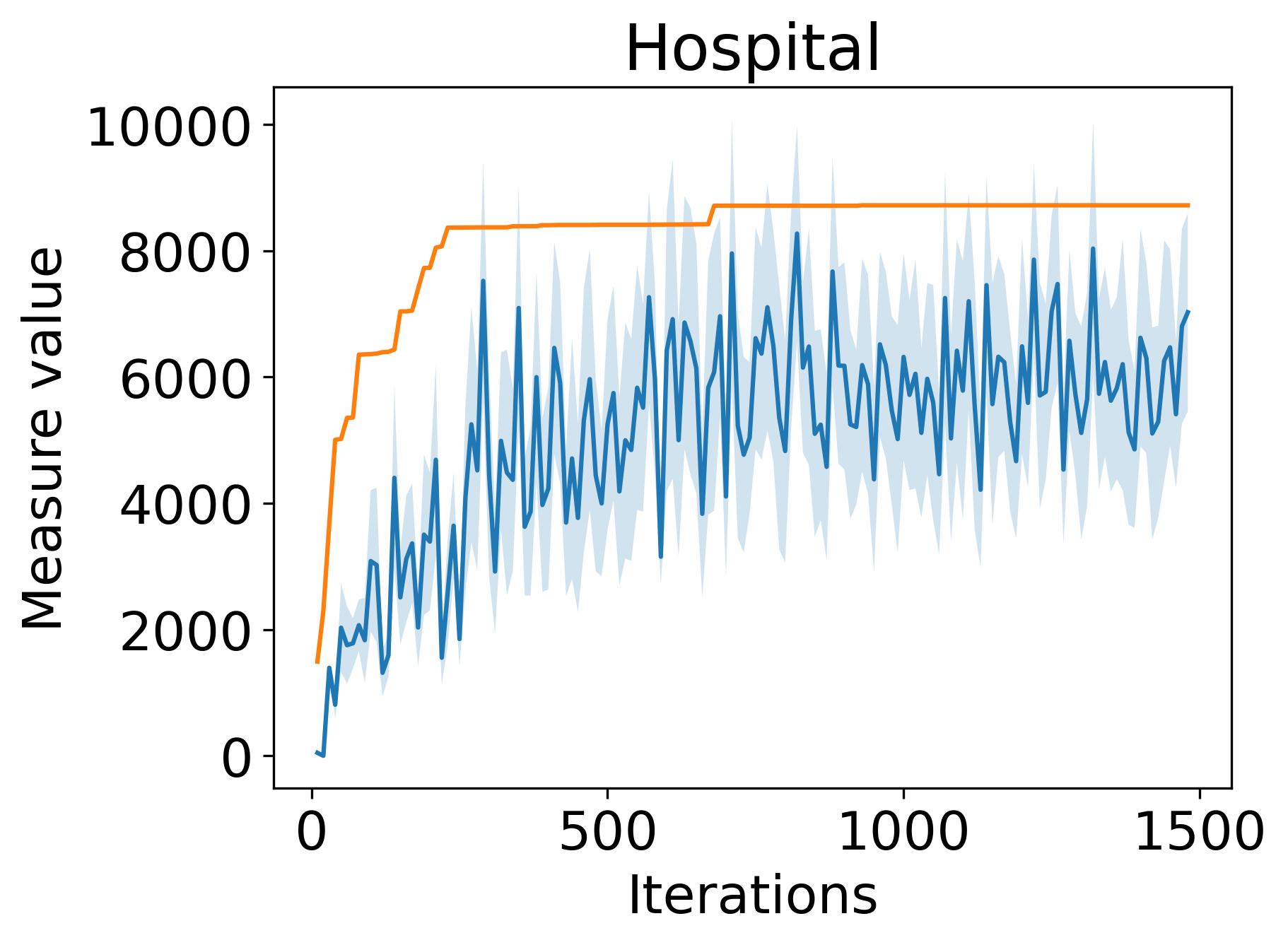}
    \hfill
    \includegraphics[width=0.19\textwidth]{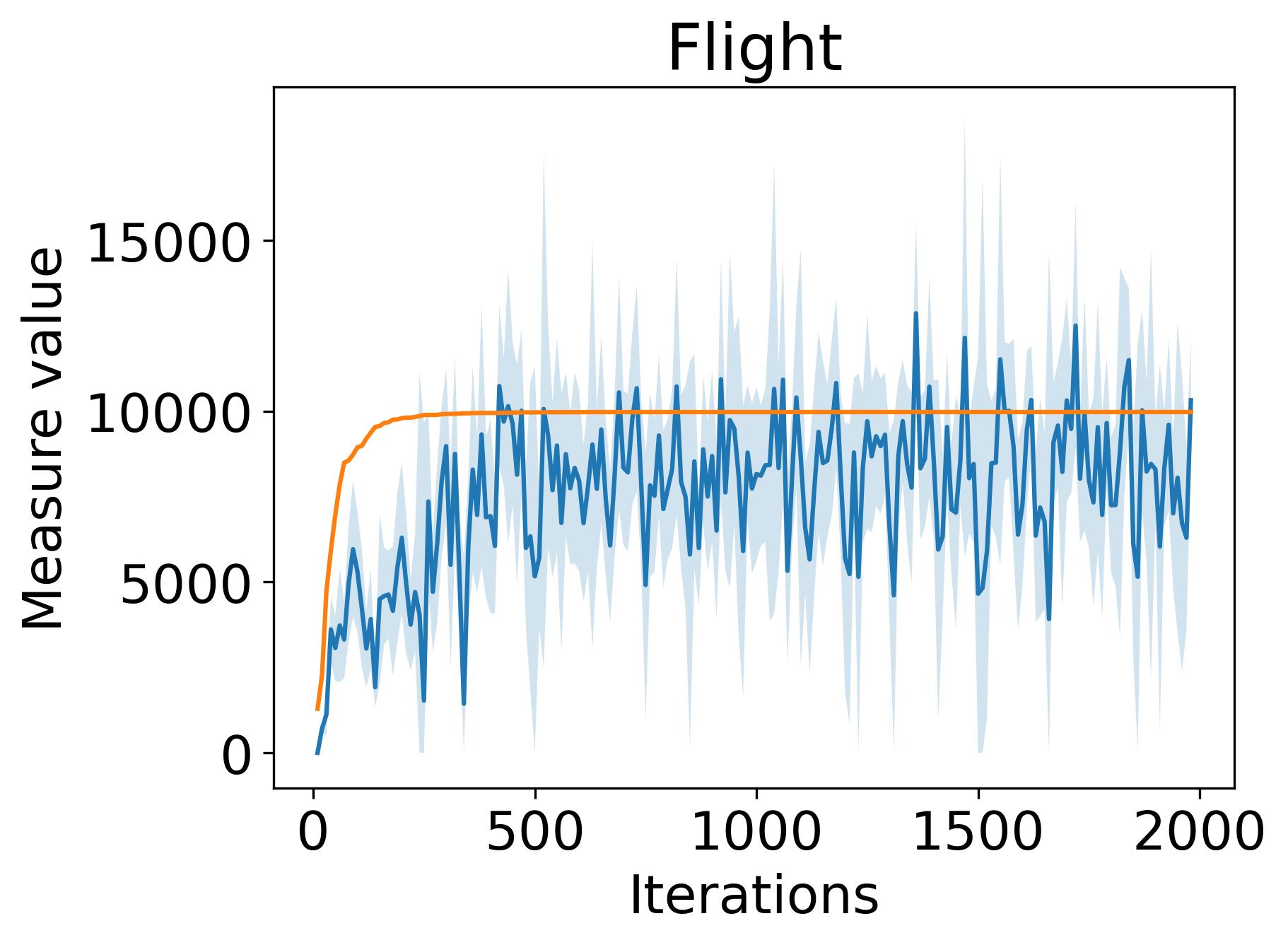}
    \hfill
    \includegraphics[width=0.19\textwidth]{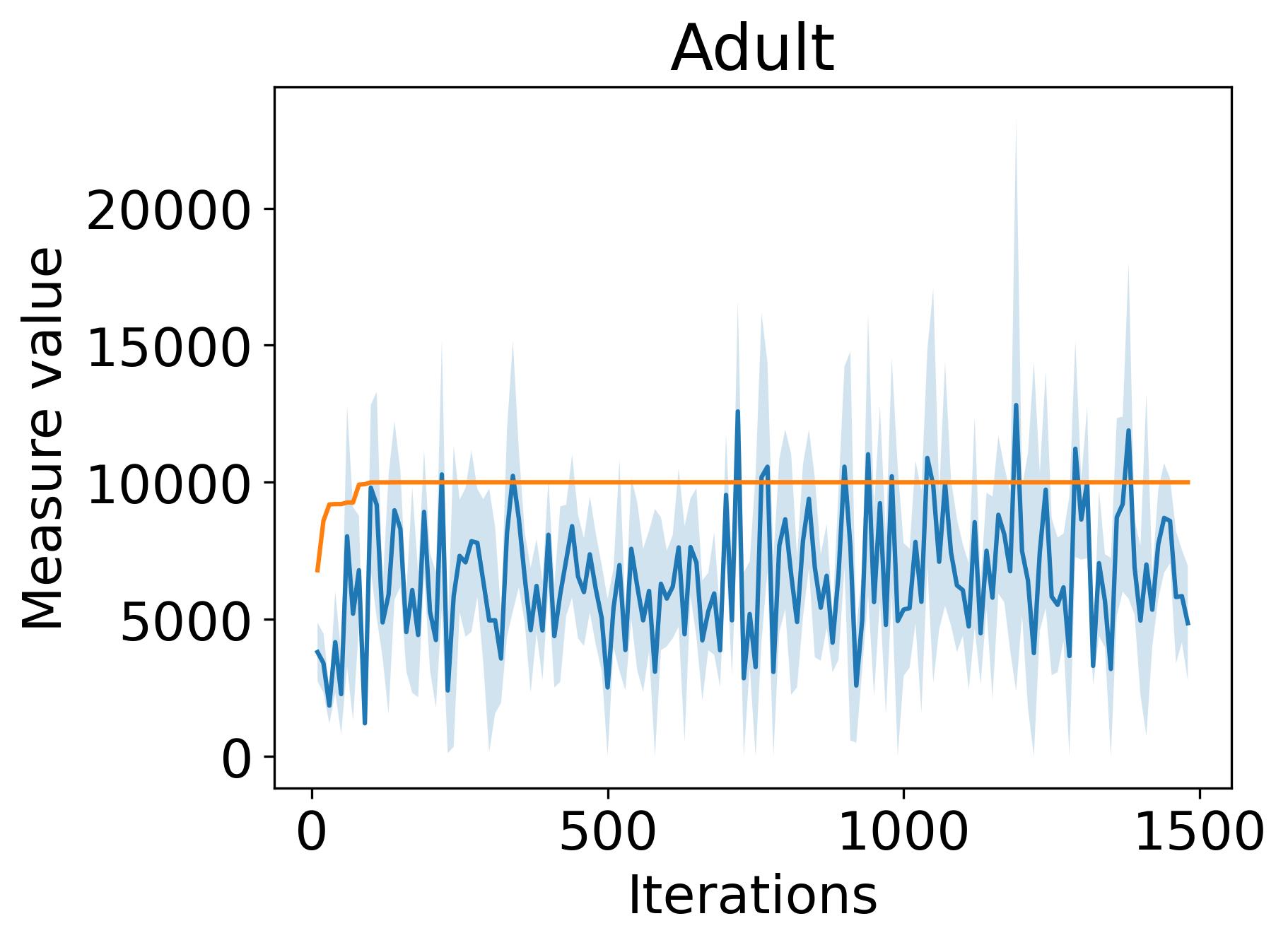}
    \includegraphics[width=0.2\textwidth]{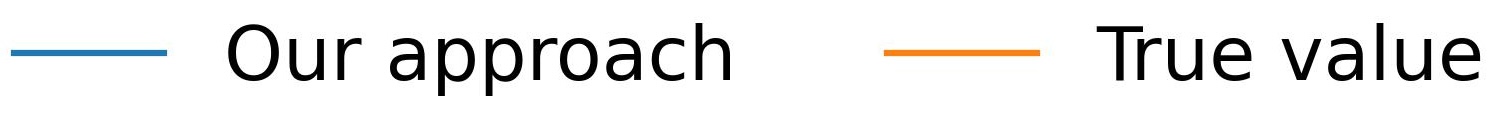}
    \caption{$\problematic$ (Positive degree nodes)}
    \label{fig:tp_rnoise_pdedges}
    \end{subfigure}
    \begin{subfigure}[b]{\textwidth}
    \centering
    \includegraphics[width=0.19\textwidth]{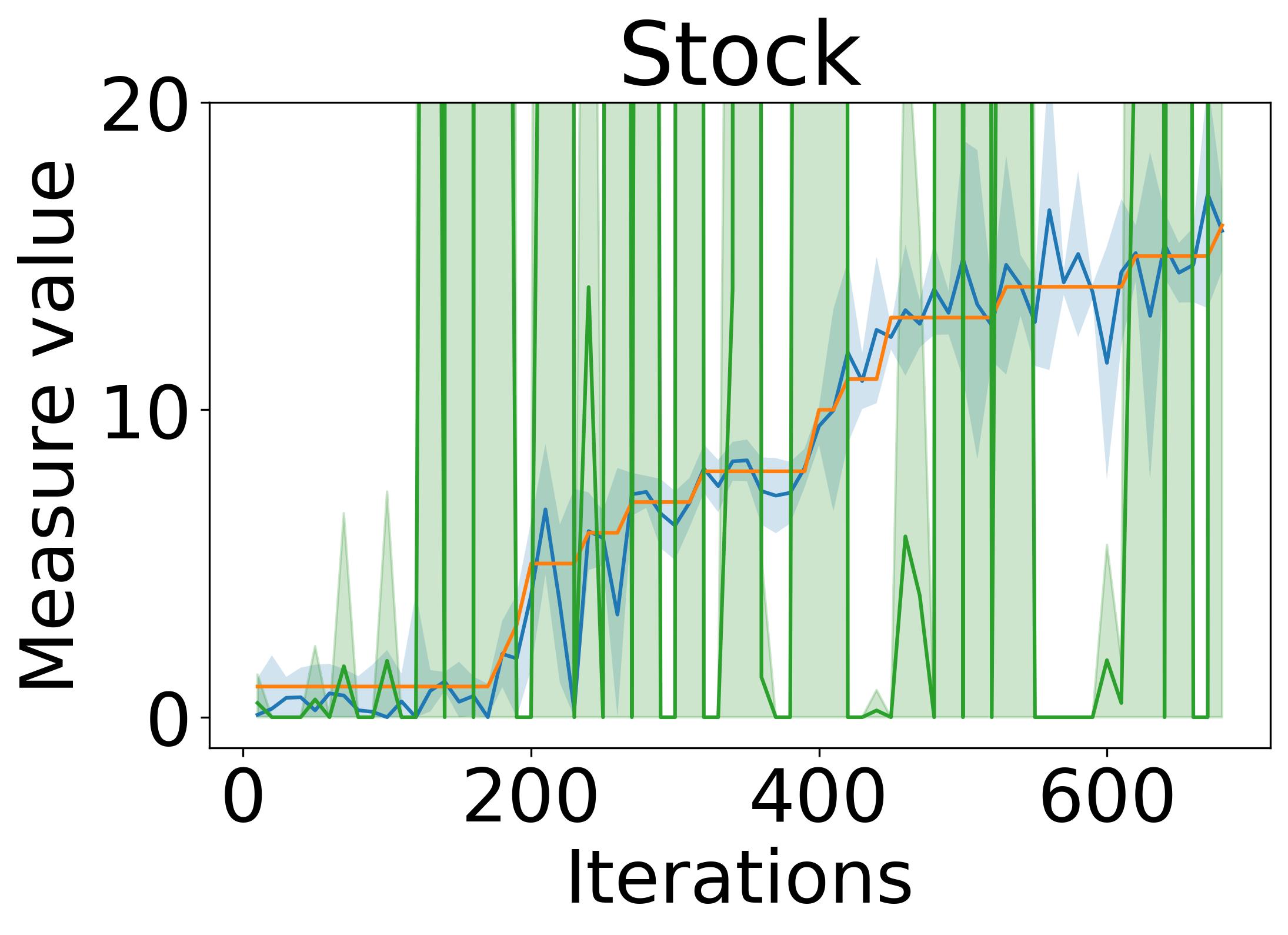}
    \hfill
    \includegraphics[width=0.19\textwidth]{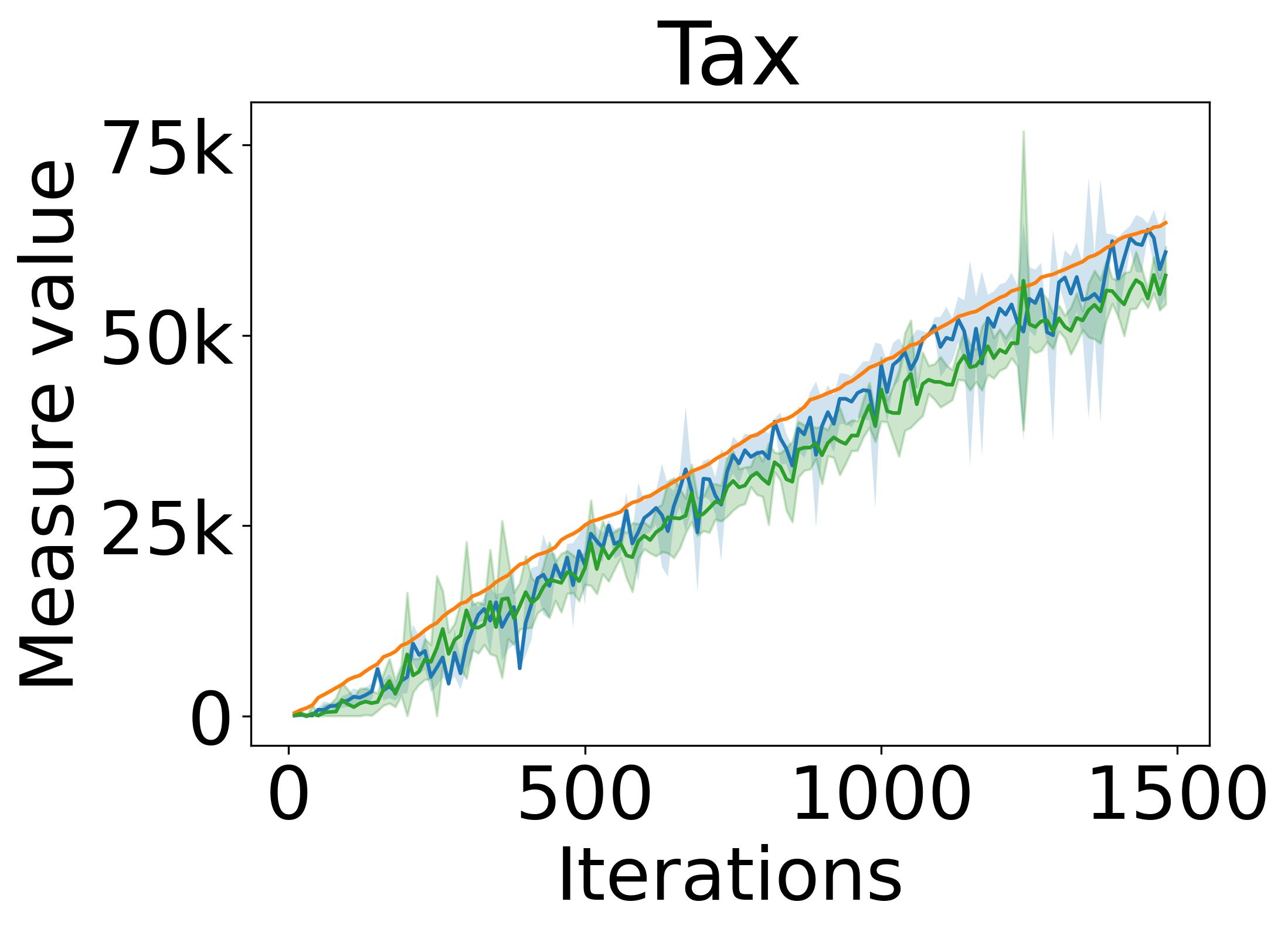}
    \hfill
    \includegraphics[width=0.19\textwidth]{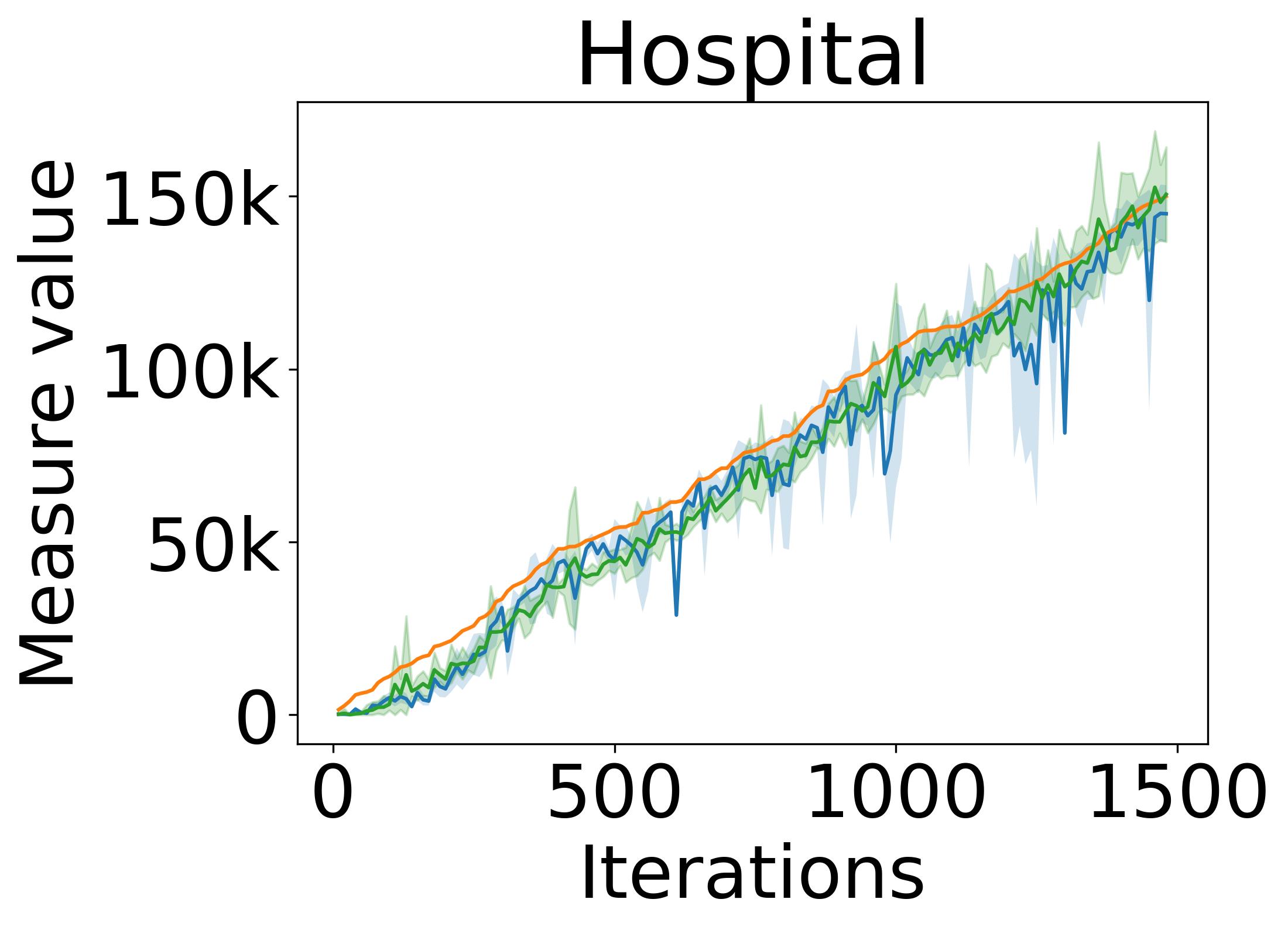}
    \hfill
    \includegraphics[width=0.19\textwidth]{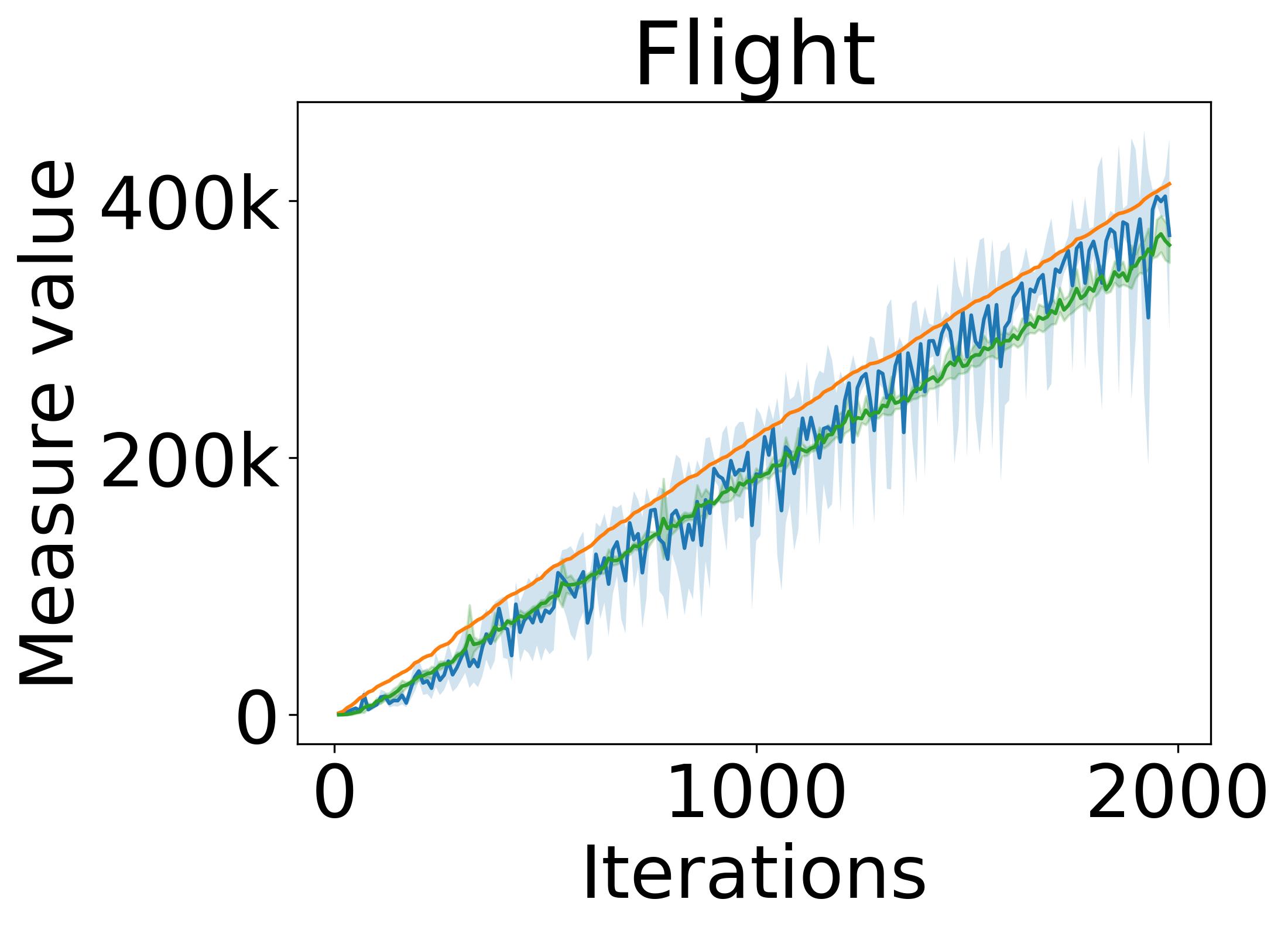}
    \hfill
    \includegraphics[width=0.19\textwidth]{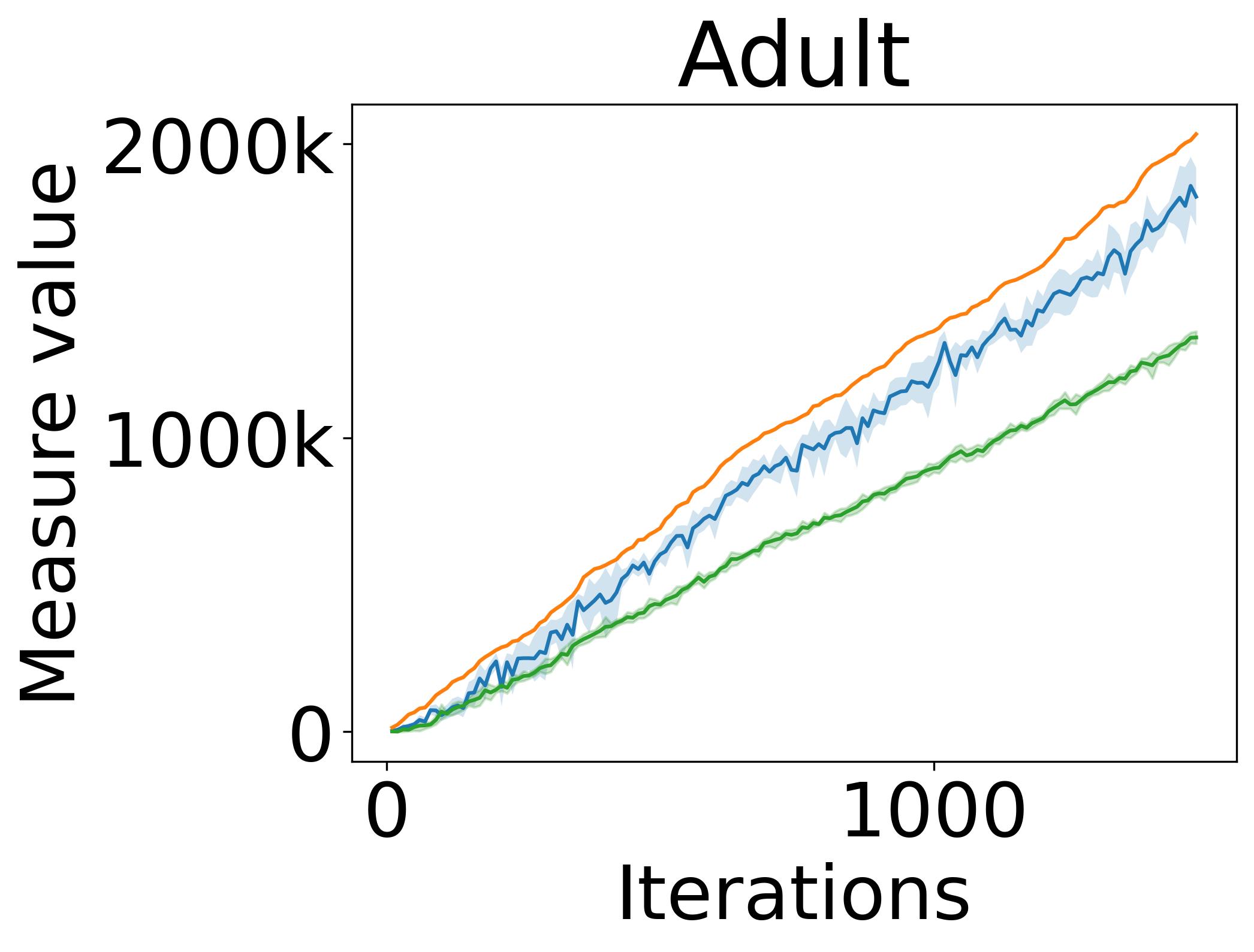}
    \includegraphics[width=0.3\textwidth]{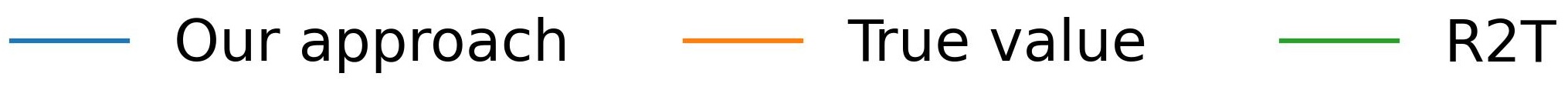}
    \caption{$\mininconsistency$ (Number of edges)}
    \label{fig:tp_rnoise_nedges}
    \end{subfigure}
      \begin{subfigure}[b]{\textwidth}
         \centering
         \includegraphics[width=0.19\textwidth]{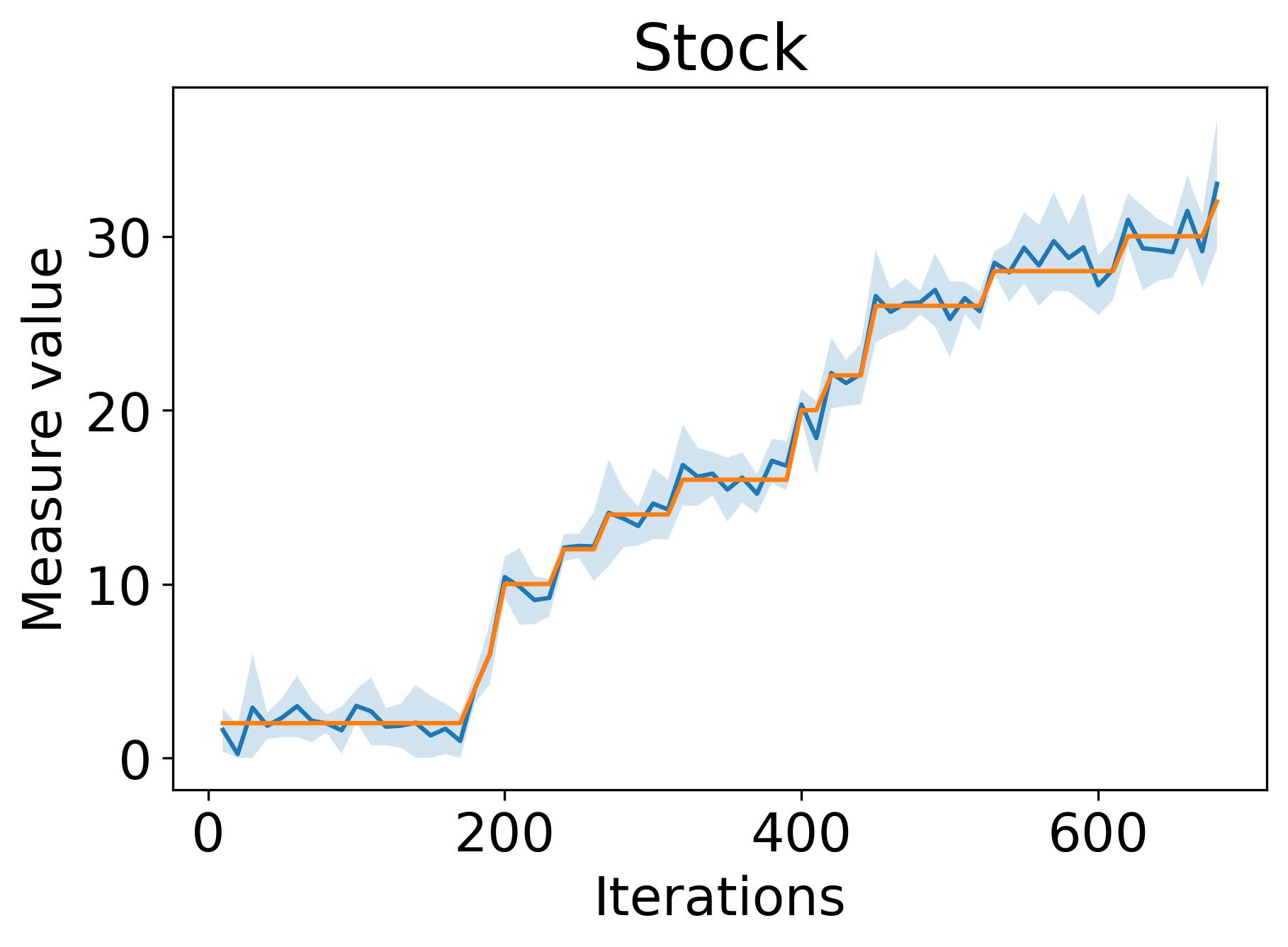}
         \hfill
         \includegraphics[width=0.19\textwidth]{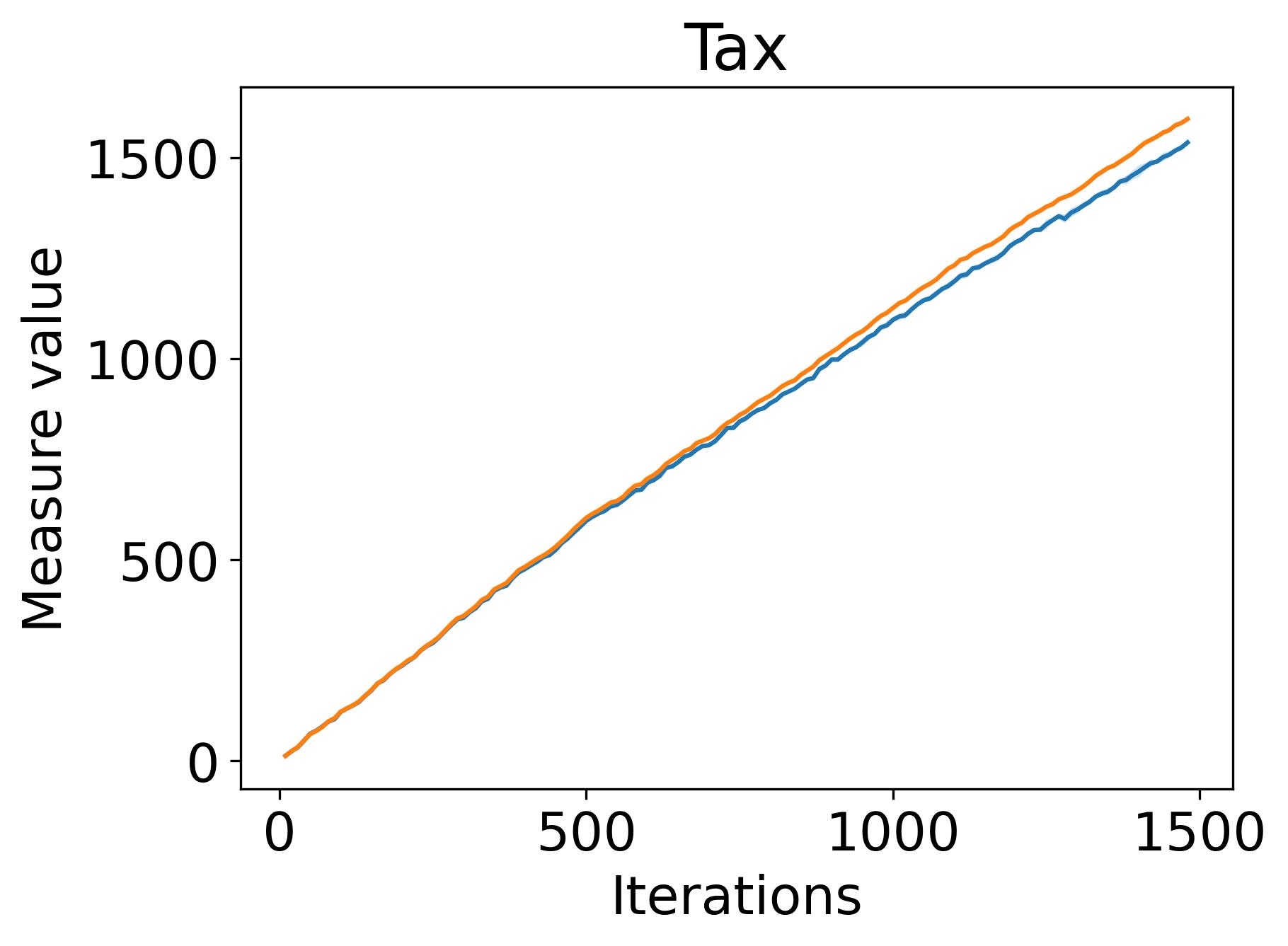}
         \hfill
         \includegraphics[width=0.19\textwidth]{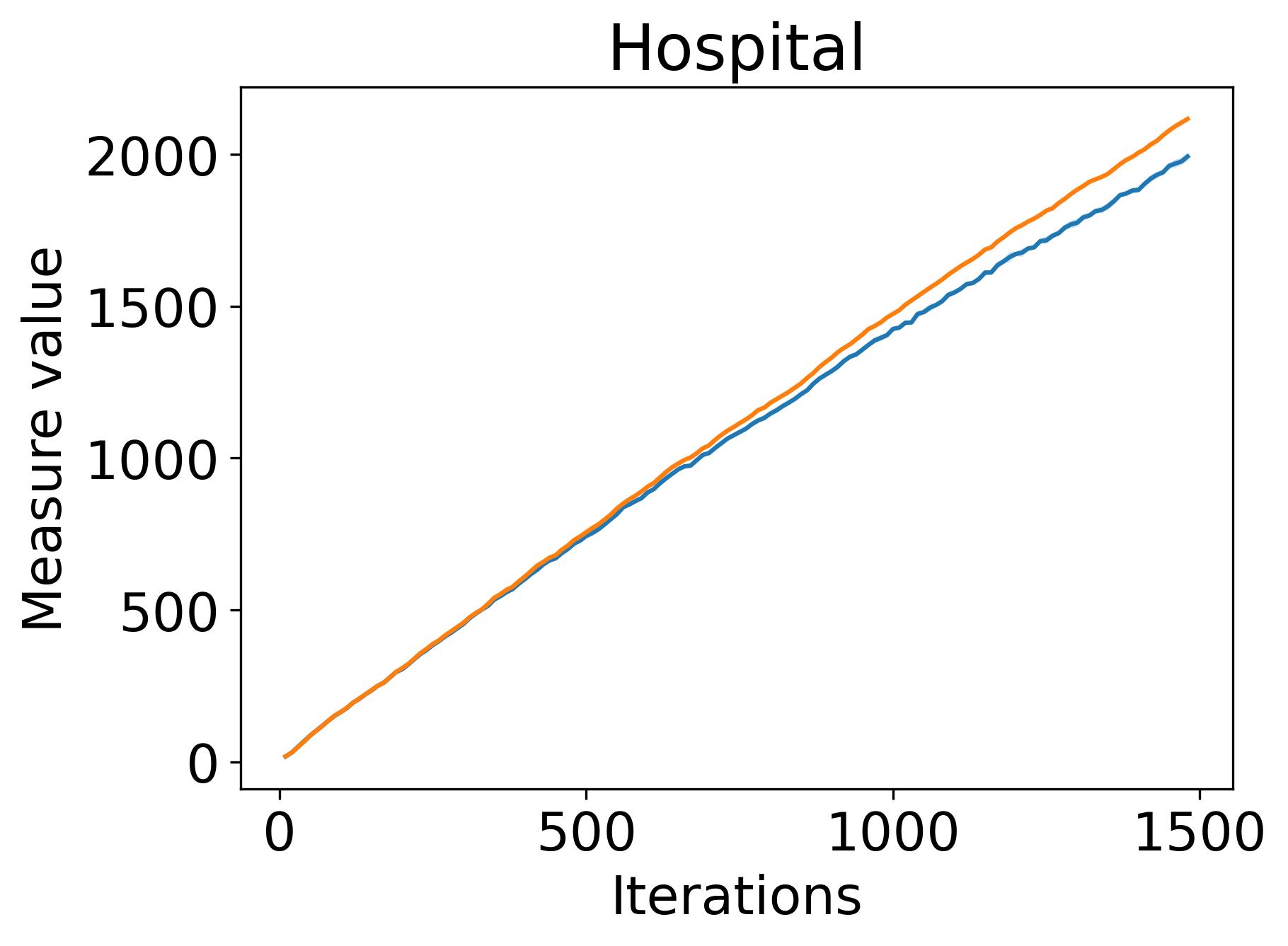}
         \hfill
         \includegraphics[width=0.19\textwidth]{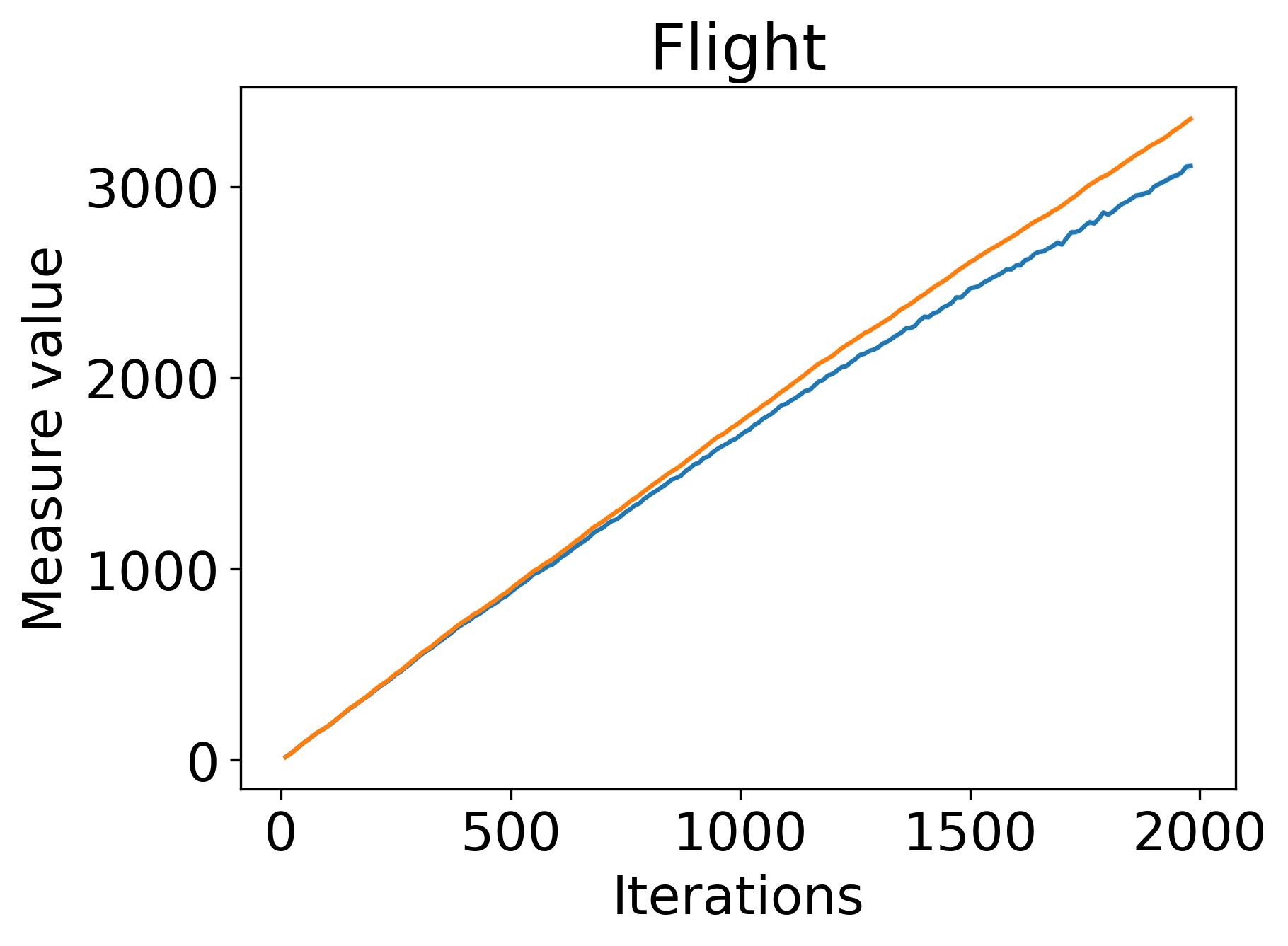}
         \hfill
         \includegraphics[width=0.19\textwidth]{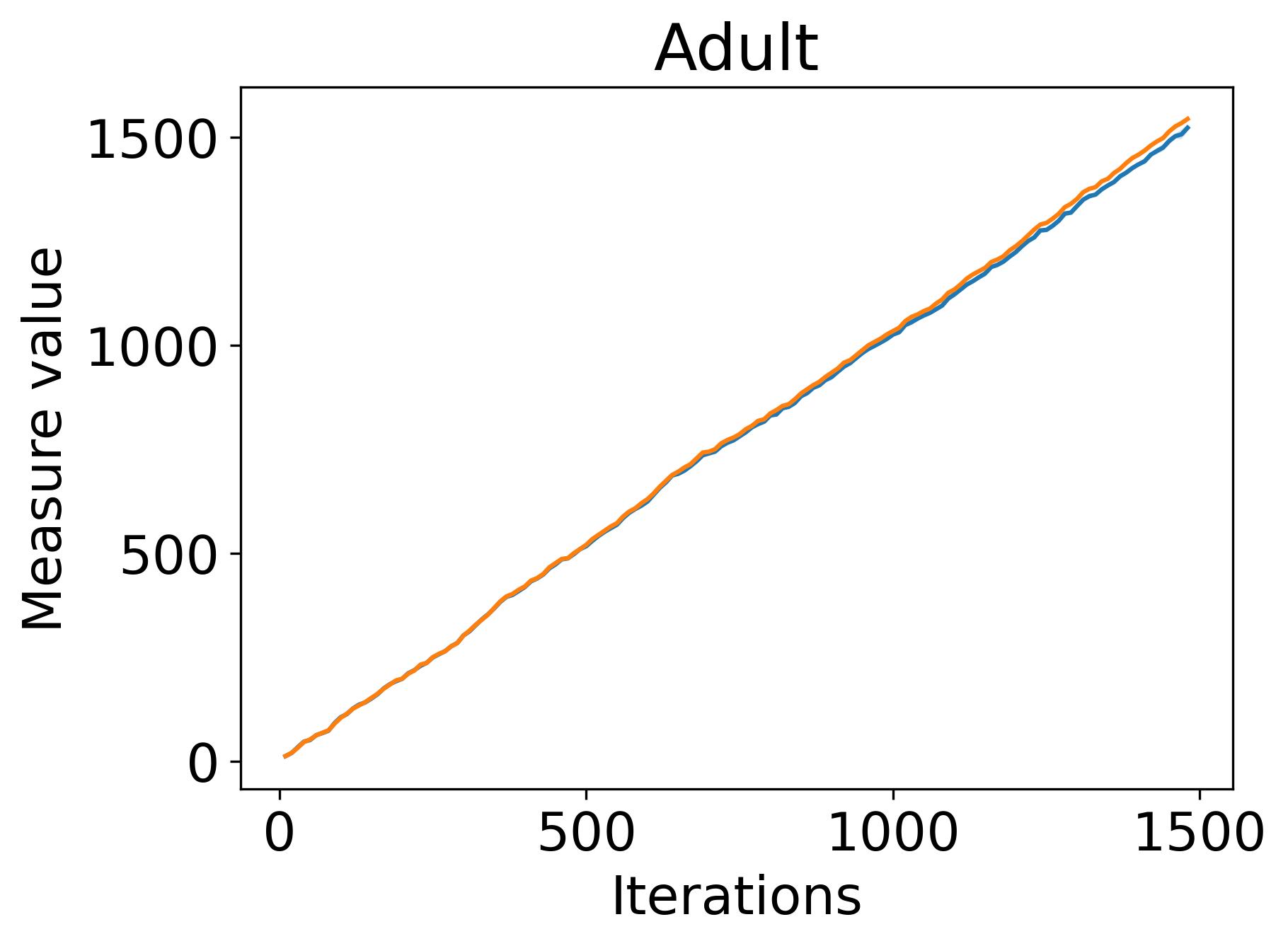}
         \includegraphics[width=0.2\textwidth]{images/legend_2.png}
         \caption{$\repair$ (Size of vertex cover)}
         \label{fig:tp_rnoise_vcover}
     \end{subfigure}
     \caption{True vs Private estimates for all dataset with RNoise $\alpha = 0.01$ at $\epsilon=1$. The $\problematic$ measure (figure a) and $\mininconsistency$ measure (figure b) are computed using our graph projection approach, and $\repair$ measure using our private vertex cover size approach. }
     \label{fig:tp_RNoise}
\end{figure*}

\underline{Induction}: At step $i+1$, lets assume an edge $e_{i+1} = \{u, v\}$ is chosen. Depending on the $i^{th}$ step, we can have 2 cases as stated in the hypothesis.

\begin{itemize}
    \item Case 1 (When $E'_i$ has the extra edges of $v^*$): We can have the following subcases at step $i+1$ depending on $e_{i+1}$.
        \begin{enumerate}[label=\alph*),ref=\alph*]
            \item If the edge is part of $E'_{i}$ but not of $E_i$ ($e_{i+1} \in E'_{i} \setminus E_{i}$): Then $e_{i+1} = \{u, v\}$ should not exist in $E_i$ (according to the hypothesis at the $i$ step) and one of $u$ or $v$ must be $v^*$. Let's assume without loss of generality that $v$ is $v^*$. The algorithm will add $(u, v)$ to $C'$ and update $c'_{i+1} = c_i + 2$. Hence, we have either $c'_{i+1} = c_{i+1}$ or $c'_{i+1} = c + 2$.

            In addition, all edges of $u$ and $v/v^*$ will be removed from $E'_{i+1}$. Thus, we have $E_{i+1} = E'_{i+1} + \{u\}$, where $\{u\}$ represent edges of $u$. Now $u$ becomes the new $v^*$ and moves to Case 2 for the $i+1$ step.  
            
            \item If the edge is part of both $E'_i$ and $E_i$($e_{i+1} \in E'_i$ and $e_{i+1} \in E_i$): In this case $(u,v)$ will be added to both $C$ and $C'$ and the vertex sizes with be updated as $c_{i+1} = c_i + 2$ and $c'_{i+1} = c' + 2$. 

            Also, the edges adjacent to $u$ and $v$ will be removed from $E_i$ and $E'_i$. We still have $E'_i = E + {v^*}$ (the extra edges of $v^*$ and remain in case 1 for step i+1. 

            \item If the edge is part of neither $E'_i$ nor $E_i$ (If $e_{i+1} \in E'_i$ and $e_{i+1} \in E_i$): the algorithm makes no change. The previous state keep constant: $E'_{i+1} = E'_i, E_{i+1} = E_i$ and $c'_{i+1} = c'_i, c_{i+1} = c_i$. The extra edges of $v^*$ are still in $E'_{i+1}$.
        \end{enumerate}
        
    \item Case 2 (When $E_i$ has the extra edges of $v^*$) : This case is symmetrical to Case 1. There will be three subcases similar to Case 1 -- a) in which after the update, the state of the algorithm switches to Case 1, b) in which the state remains in Case 2, and c) where no update takes place.  

    \item Case 3 (When $E_i = E'_i$): In this case, the algorithm progresses similarly for both the graphs, and remains in case 3 with equal vertex covers, $c_{i+1} = c'_{i+1}$.
\end{itemize}

Our induction proves that our hypothesis is true. The algorithm starts with Case 1, either in the same case or oscillates between Case 1 and Case 2. Hence, as per our hypothesis statement, the difference between the vertex cover sizes is upper-bounded by 2.
\qed
}
\fi

\begin{example}
    Let us consider our running example in Figure~\ref{fig:examplegraph} as input to Algorithm~\ref{algo:dp_vertexcover} and use it to understand the proof. We have two graphs -- $\mathcal{G}$ which has $6$ vertices $V = [{\tt A, B, C, D, F, G}]$ and edges $E = [e_1, e_2, e_3, e_4, e_7]$ and $\mathcal{G}'$ has $7$ vertices $V' = [{\tt A, B, C, D, E, F, G}]$ and edges $E' = [e_1, e_2, \dots, e_7]$. The total possible number of edges is $\binom{7}{2} = 21$, and we can have a global stable ordering of the edges $\Lambda$ depending on the lexicographical ordering of the vertices as $e_1, e_2, e_3, \dots, e_{21}$. When the algorithm starts, both vertex cover sizes are initialized to $c=0, c'=0$, and the algorithm's state is in Case 1 with $v^* = {\tt E}$. We delineate the next steps of the algorithm below:
    \begin{itemize*}
        \item Iteration 1 (Subcase 1b) : $e_1 ({\tt A,B})$ is chosen. {\tt A} and {\tt B} are both in $E_0$ and $E'_0$. Hence $c = 2, c'= 2$.
        \item Iteration 2, 3 (Subcase 1c) : $e_2 ({\tt A,C})$ and $e_3 ({\tt B,C})$ are chosen. Both are removed in iteration 1. Hence $c = 2, c'= 2$.
        \item Iteration 4 (Subcase 1b) : $e_4 ({\tt C, D})$ is chosen. {\tt C} and {\tt D} are both in $E_3$ and $E'_3$. Hence $c = 4, c'= 4$.
        \item Iteration 5 (Subcase 1c) : $e_5 ({\tt D,E})$ is chosen, removed from $E'_4$ in Iteration 4, and was never present in $E$. Hence $c = 4, c'= 4$.
        \item Iteration 6 (Subcase 1a) : $e_6 ({\tt E, F})$ is chosen. It is in $E'_5$ but not in $E_5$. Hence, $c = 4, c'= 6$ and the new $v^* = {\tt E}$.
        \item Iteration 7 (Subcase 2a) : $e_7 ({\tt F,G})$ is chosen. It is in $E_6$ but removed from $E_6$ in Iteration 6. Hence, $c = 6, c'= 6$, and the algorithm is complete.
    \end{itemize*}
\end{example}

\paratitle{Privacy and utility analysis}
We now show the privacy and utility analysis of Algorithm~\ref{algo:dp_vertexcover} using Theorem~\ref{thm:vertex_cover_priv_util_analysis} below.

\begin{theorem}~\label{thm:vertex_cover_priv_util_analysis}
    \reva{Algorithm~\ref{algo:dp_vertexcover} satisfies $\epsilon$-node DP and, prior to adding noise in line 7, obtains a 2-approximation vertex cover size.} 
\end{theorem}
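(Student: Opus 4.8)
The plan is to treat the two assertions in the theorem separately, since the privacy claim reduces to an already-established sensitivity bound fed into a standard mechanism, whereas the 2-approximation claim is a purely combinatorial property of lines 1--6 that is independent of the added noise.

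For the privacy guarantee I would invoke Proposition~\ref{prop:vertexcover_sens}, which establishes that the map sending a graph to the vertex-cover size $c$ computed in lines 1--6 has node sensitivity $2$. Line~7 then adds to $c$ Laplace noise calibrated to sensitivity $\Delta_f=2$ and budget $\epsilon$, which is exactly the Laplace mechanism (Definition~\ref{def:LM}) and hence yields $\epsilon$-DP. Because neighboring databases correspond to conflict graphs differing by a single node together with all its incident edges (Section~\ref{sec:prelim-dp}), this guarantee is precisely $\epsilon$-node DP; the post-processing property in Proposition~\ref{prop:DP-comp-post} ensures that any subsequent clipping or rounding of the released value preserves it.

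For the 2-approximation I would run the standard maximal-matching argument, taking care of the one subtlety in the pseudocode: an edge $e_i=\{u,v\}$ triggers the insertion of $u,v$ into $C$ only when $e_i$ is still present in the current edge set $E_i$, as the worked example makes explicit (already-removed edges do nothing). First I would argue that $C$ is a valid cover: whenever an edge leaves the edge set it does so because one of its endpoints was just added to $C$, so every original edge has an endpoint in $C$. Next, let $M$ be the set of edges that actually triggered an insertion. Since adding $u,v$ immediately deletes every edge incident to $u$ or $v$, no later triggering edge can share a vertex with an earlier one; hence $M$ is a matching and $c=2|M|$. Finally, any vertex cover must contain at least one endpoint of each edge of $M$, and these endpoints are distinct across $M$, so the optimum satisfies $\opt\ge |M|$. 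Combining, $c=2|M|\le 2\,\opt$, which is the desired 2-approximation.

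The combinatorial argument is entirely standard and, crucially, insensitive to \emph{which} order the edges are processed in, so replacing the usual arbitrary order by the fixed stable ordering $\Lambda$ does not affect the matching property at all. Thus the only genuinely delicate ingredient---bounding the node sensitivity of $c$ by $2$ despite the fact that a single differing node can perturb the greedy choices downstream---has already been discharged in Proposition~\ref{prop:vertexcover_sens} through its inductive three-case analysis over $\Lambda$. What remains for this theorem is therefore just the mechanical composition of that bound with the Laplace mechanism and a verification that the matching argument survives the restriction to a stable ordering. I expect no real obstacle here; the only point requiring care is making explicit the implicit ``act only if $e_i\in E_i$'' guard, which is exactly what forces the triggered edges to form a matching rather than collapsing $C$ to all of $V$.
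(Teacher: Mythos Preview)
Your proposal is correct and follows essentially the same approach as the paper: invoke Proposition~\ref{prop:vertexcover_sens} for the sensitivity bound and the Laplace mechanism for privacy, then observe that the stable ordering $\Lambda$ is simply one particular edge ordering so the classical 2-approximation guarantee carries over unchanged. Your treatment is in fact more detailed than the paper's, which dispatches the 2-approximation in a single sentence by noting that $\Lambda$ ``can be perceived as one particular random order of the edges''; you additionally spell out the maximal-matching argument and flag the implicit ``act only if $e_i\in E_i$'' guard, both of which are helpful clarifications but not a different route.
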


\begin{proof}
The Algorithm~\ref{algo:dp_vertexcover} satisfies $\epsilon$-node DP as we calculate the private vertex cover using the Laplace mechanism with sensitivity $2$ according to Proposition~\ref{prop:vertexcover_sens}. It is also a 2-approximation as the stable ordering $\Lambda$ in Algorithm~\ref{algo:dp_vertexcover} can be perceived as one particular random order of the edges and hence has the same utility as the original 2-approximation algorithm. 
\end{proof}

\section{Experiments}\label{sec:experiments}

This section presents our experiment results on computing the three measures outlined in Section~\ref{sec:graph-algorithms-graphproj} and Section~\ref{sec:vertex_cover}. The questions that we ask through our experiments are as follows:
\begin{enumerate}
    \item How far are the private measures from the true measures?
    \item How do the different strategies for the degree truncation bound compare against each other? 
    \item How do our methods perform at different privacy budgets?
\end{enumerate}

\begin{table}[b]
\small
    \centering
    \begin{tabular}{|c|c|c|c|c|c|c|}
    \hline
Dataset & \#Tuple & \#Attrs & \#DCs(\#FDs) & 
\begin{tabular}[c]{@{}c@{}}Max Deg \\ 1\% RNoise\end{tabular}
  \\
    \hline
    Adult~\cite{misc_adult_2} & 32561 & 15 & 3 (2) & 9635\\
    Flight~\cite{flight} & 500000 & 20 & 13 (13) & 1520\\
    Hospital~\cite{hospital} & 114919 & 15 & 7 (7) & 793\\
    Stock \cite{oleh_onyshchak_2020} & 122498 & 7 & 1 (1) & 1\\
    Tax \cite{chu2013discovering} & 1000000 & 15 & 9 (7) & 373\\
    \hline
    \end{tabular}
    \caption{Description of datasets. The max deg column shows the maximum degree of any node in a $10k$ rows subset of the conflict graph of the dataset with 1\% RNoise. }
    \label{tab:datasets}
\end{table}

\begin{figure*}
    \begin{subfigure}[b]{\textwidth}
    \centering
    \includegraphics[width=0.19\textwidth]{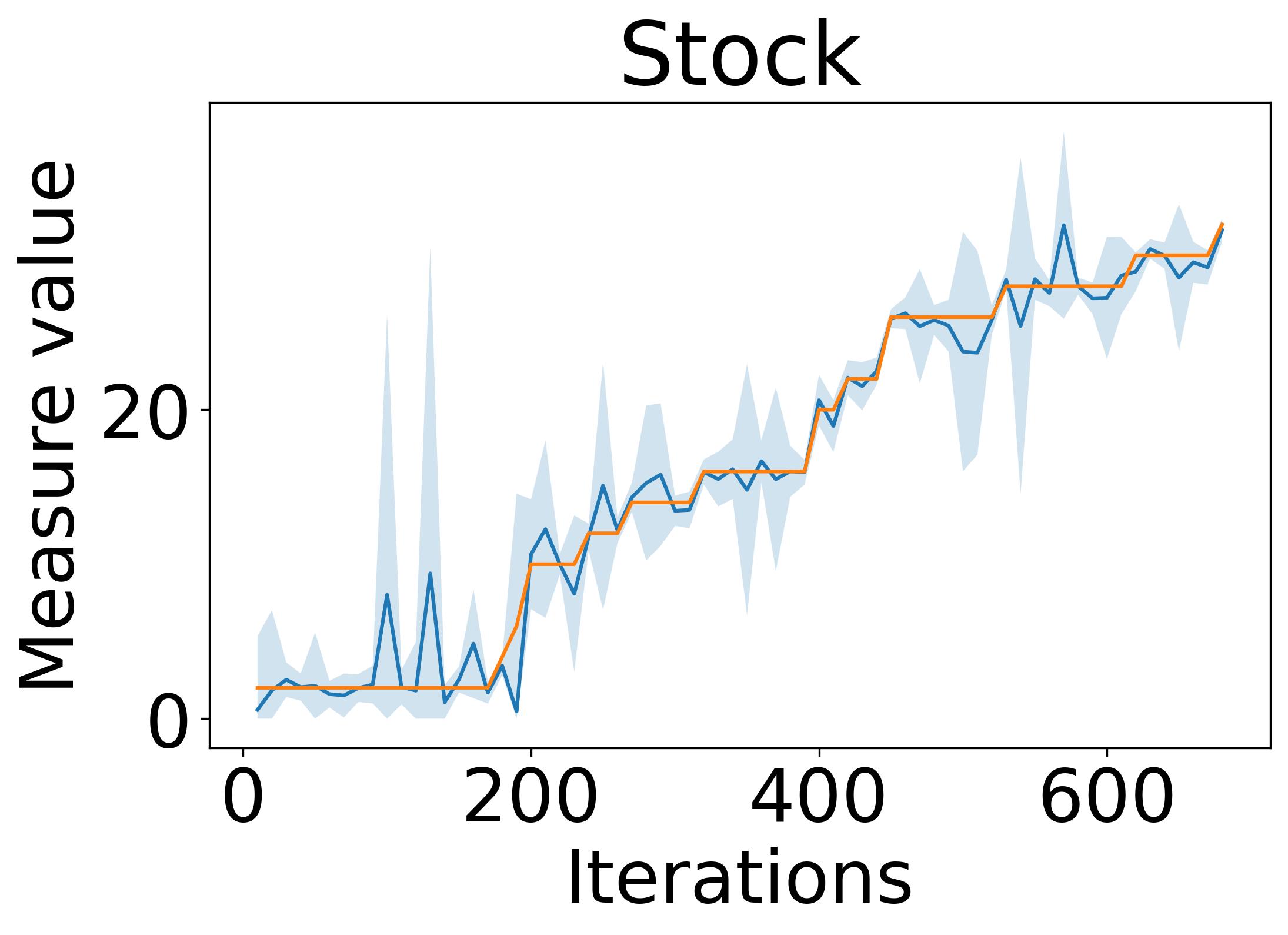}
    \hfill
    \includegraphics[width=0.19\textwidth]{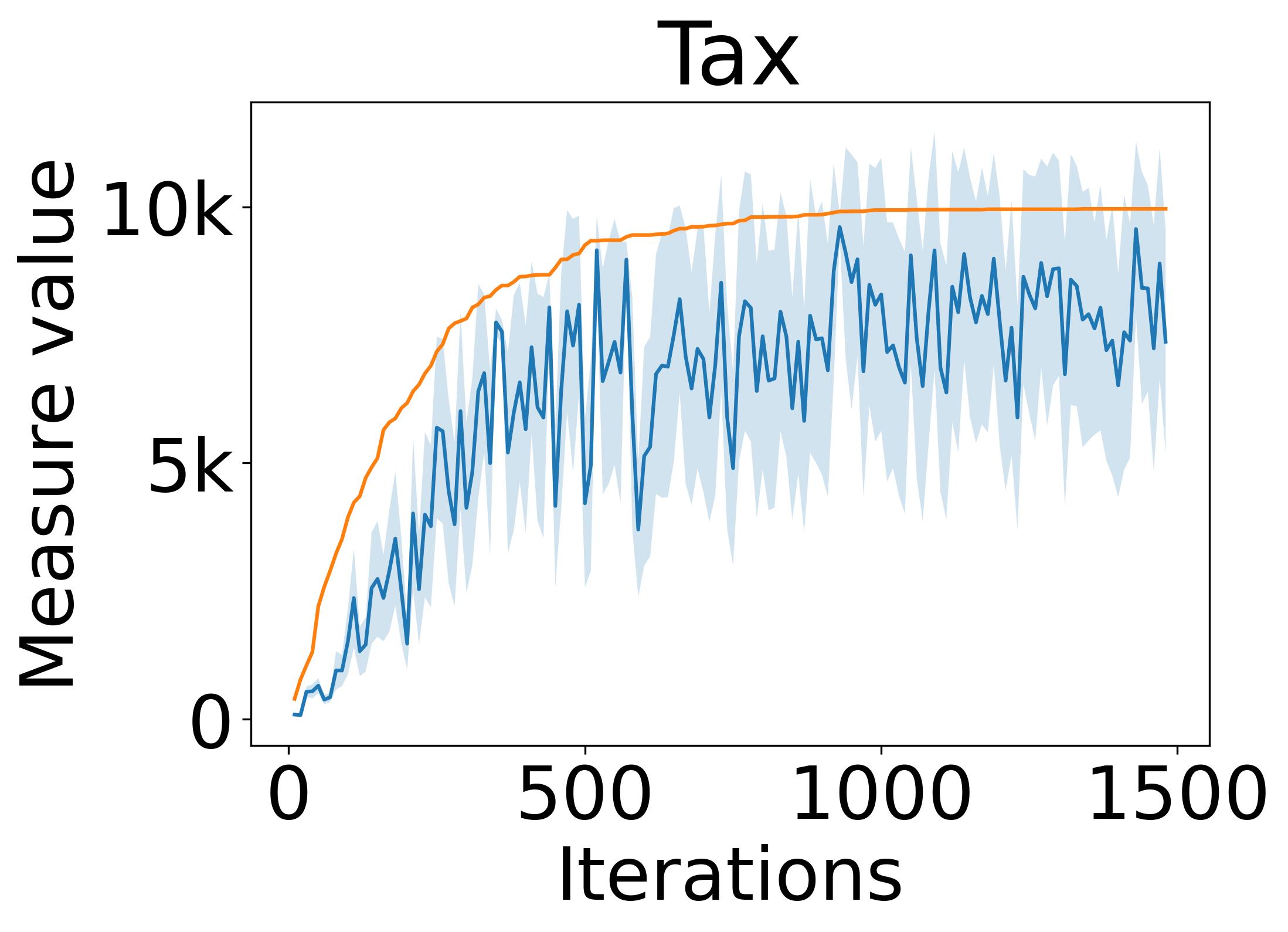}
    \hfill
    \includegraphics[width=0.19\textwidth]{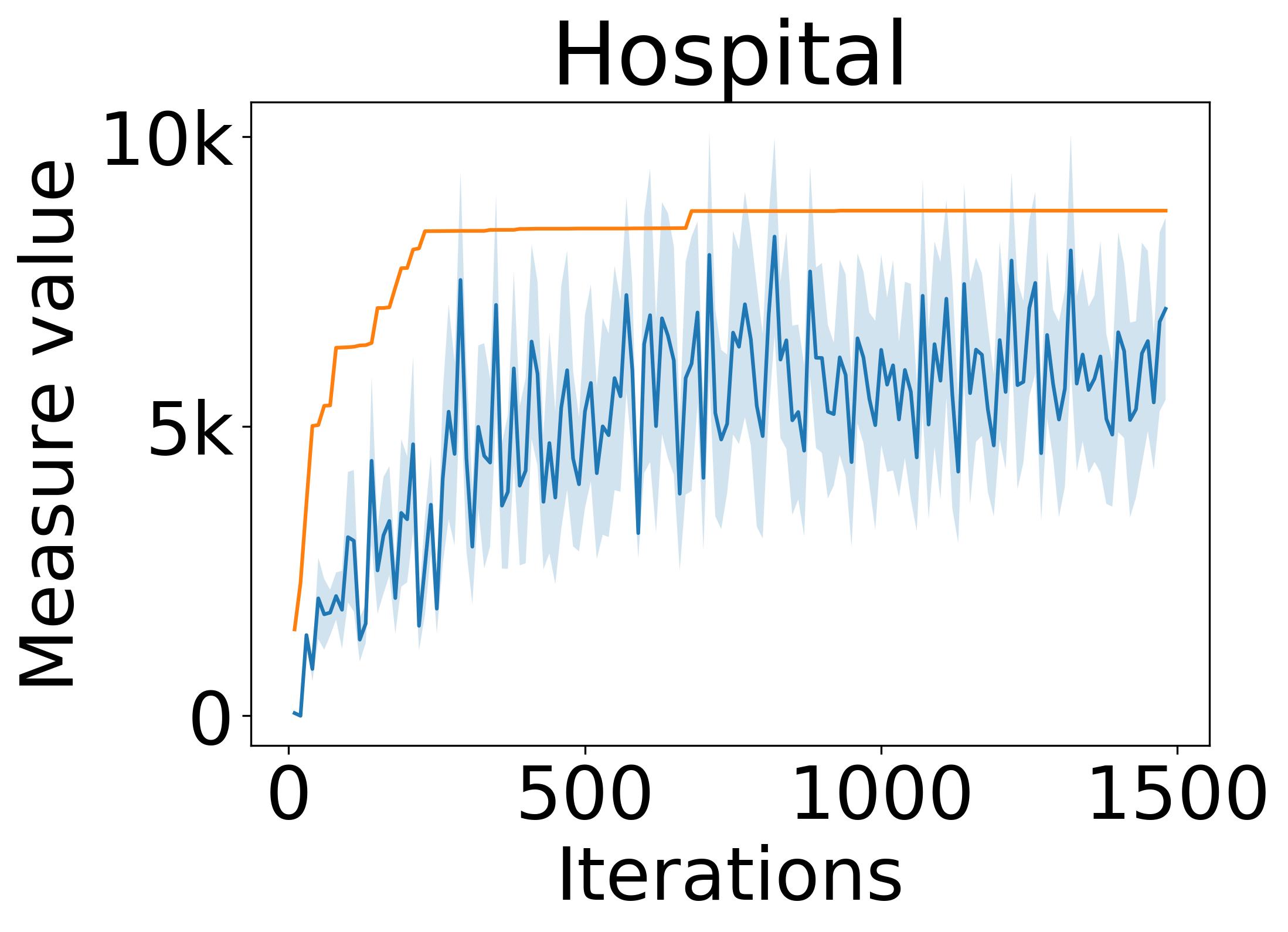}
    \hfill
    \includegraphics[width=0.19\textwidth]{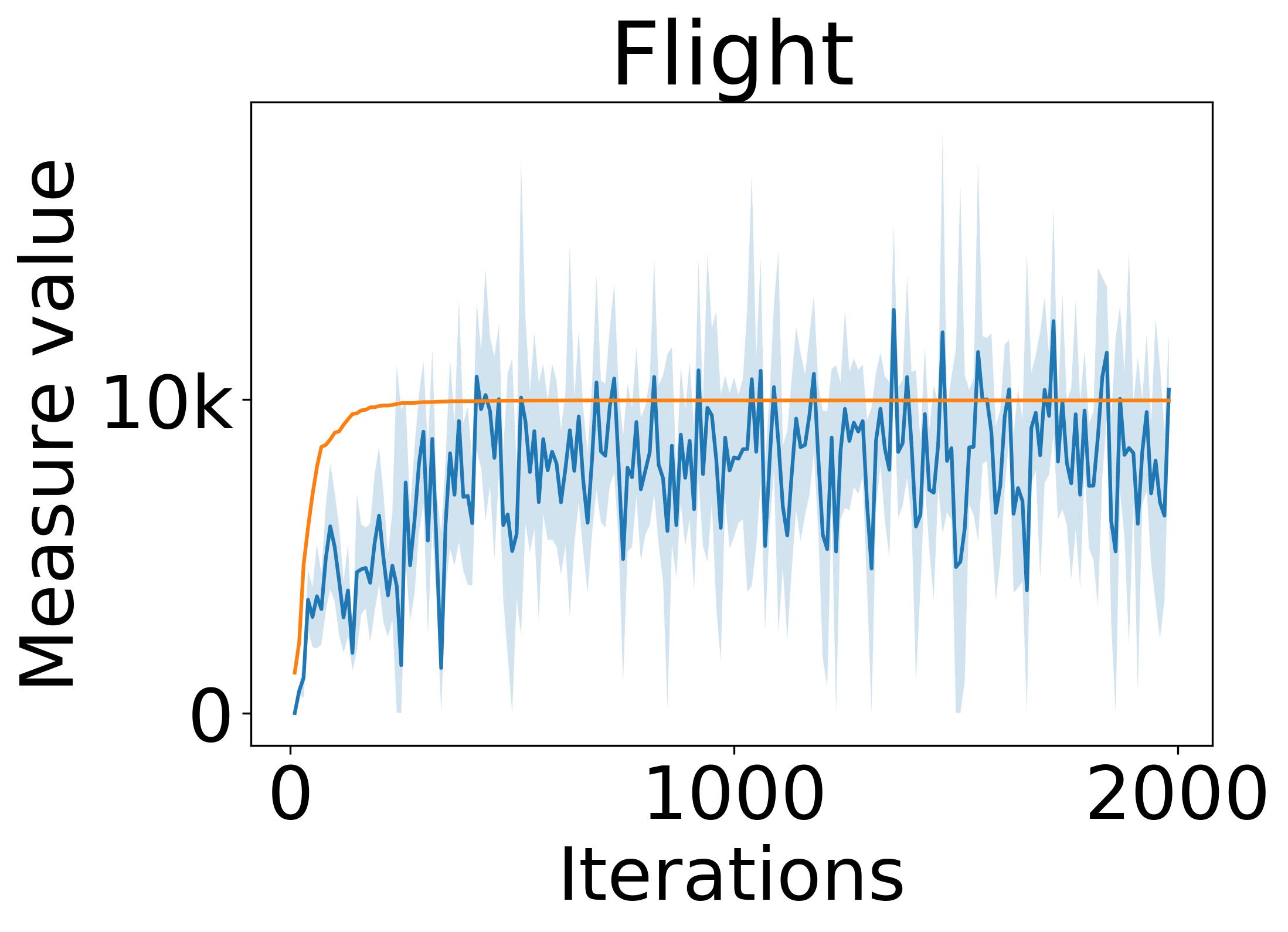}
    \hfill
    \includegraphics[width=0.19\textwidth]{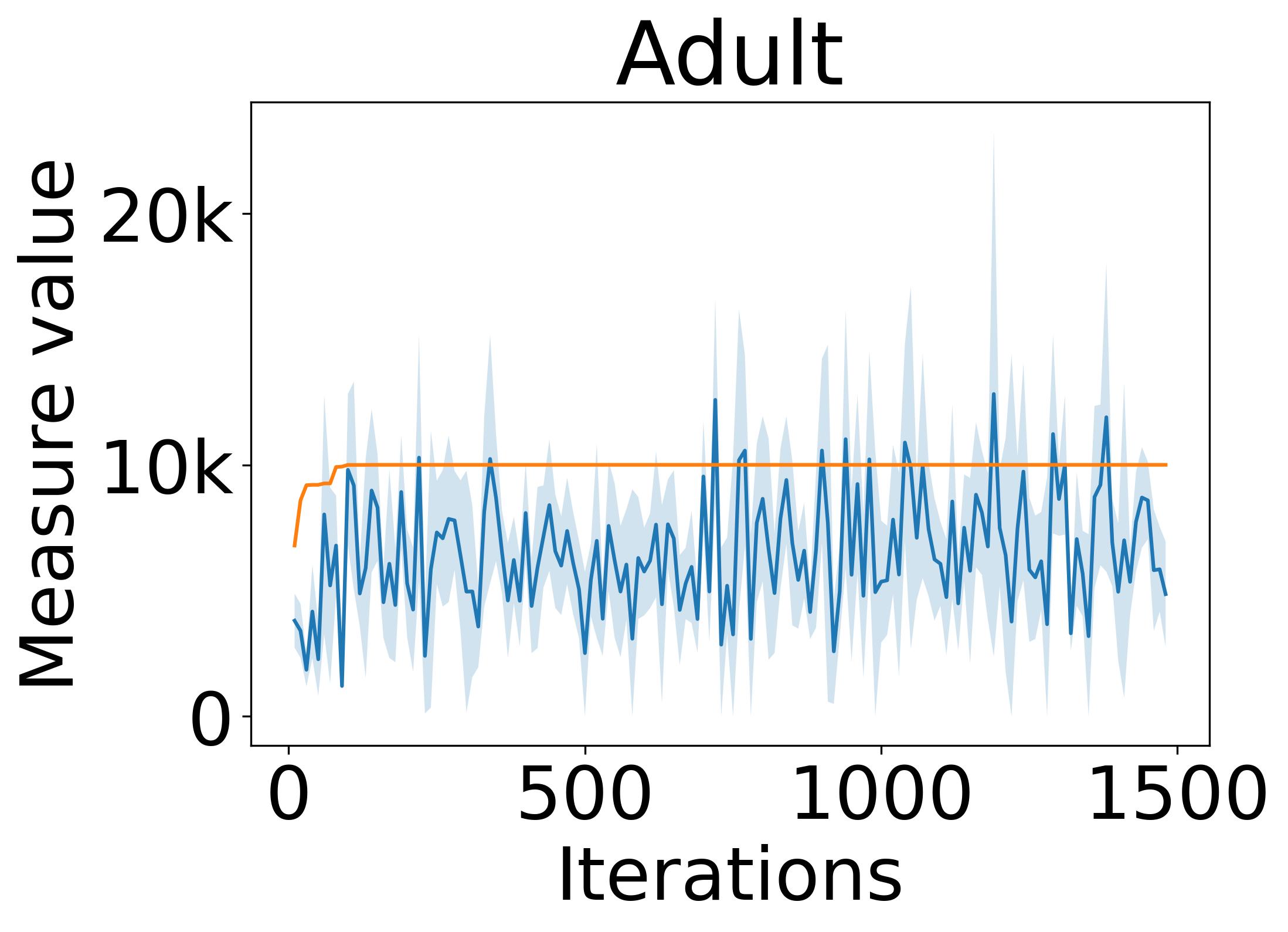}
    \includegraphics[width=0.3\textwidth]{images/legend_2.png}
    \caption{$\problematic$ (Positive degree nodes)}
    \label{fig:tp_rnoise_pdedges}
    \end{subfigure}
    \begin{subfigure}[b]{\textwidth}
    \centering
    \includegraphics[width=0.19\textwidth]{images/true_vs_private/truevsprivate_Stock_no_of_edges_samegraph_10000_rnoise_eps_1.0_r2t.jpg}
    \hfill
    \includegraphics[width=0.19\textwidth]{images/true_vs_private/truevsprivate_Tax_no_of_edges_samegraph_10000_rnoise_eps_1.0_r2t.jpg}
    \hfill
    \includegraphics[width=0.19\textwidth]{images/true_vs_private/truevsprivate_Hospital_no_of_edges_samegraph_10000_rnoise_eps_1.0_r2t.jpg}
    \hfill
    \includegraphics[width=0.19\textwidth]{images/true_vs_private/truevsprivate_Flight_no_of_edges_samegraph_10000_rnoise_eps_1.0_r2t.jpg}
    \hfill
    \includegraphics[width=0.19\textwidth]{images/true_vs_private/truevsprivate_Adult_no_of_edges_samegraph_10000_rnoise_eps_1.0_r2t.jpg}
    \includegraphics[width=0.4\textwidth]{images/legend2_r2t.png}
    \caption{$\mininconsistency$ (Number of edges)}
    \label{fig:tp_rnoise_nedges}
    \end{subfigure}
      \begin{subfigure}[b]{\textwidth}
         \centering
         \includegraphics[width=0.19\textwidth]{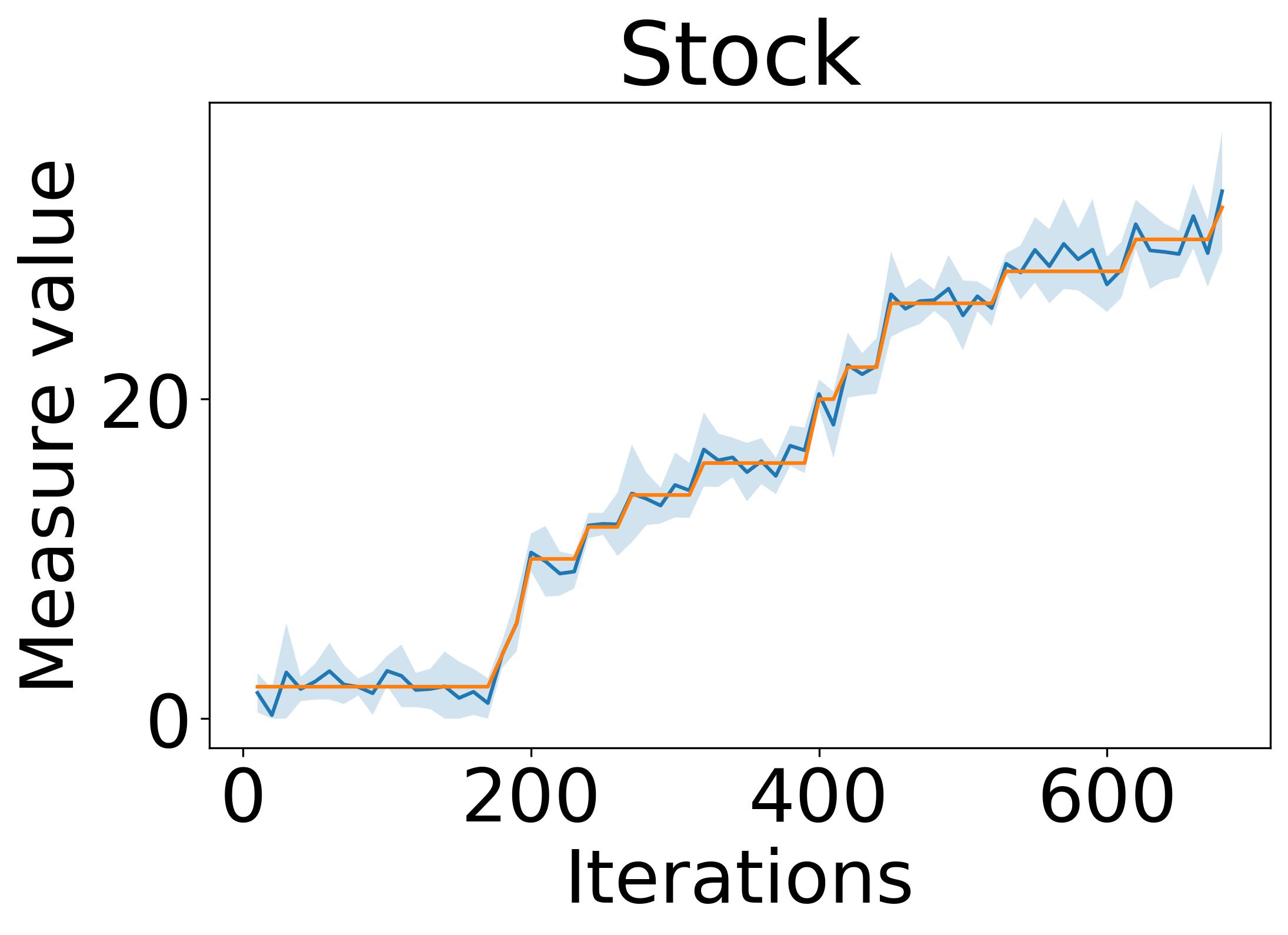}
         \hfill
         \includegraphics[width=0.19\textwidth]{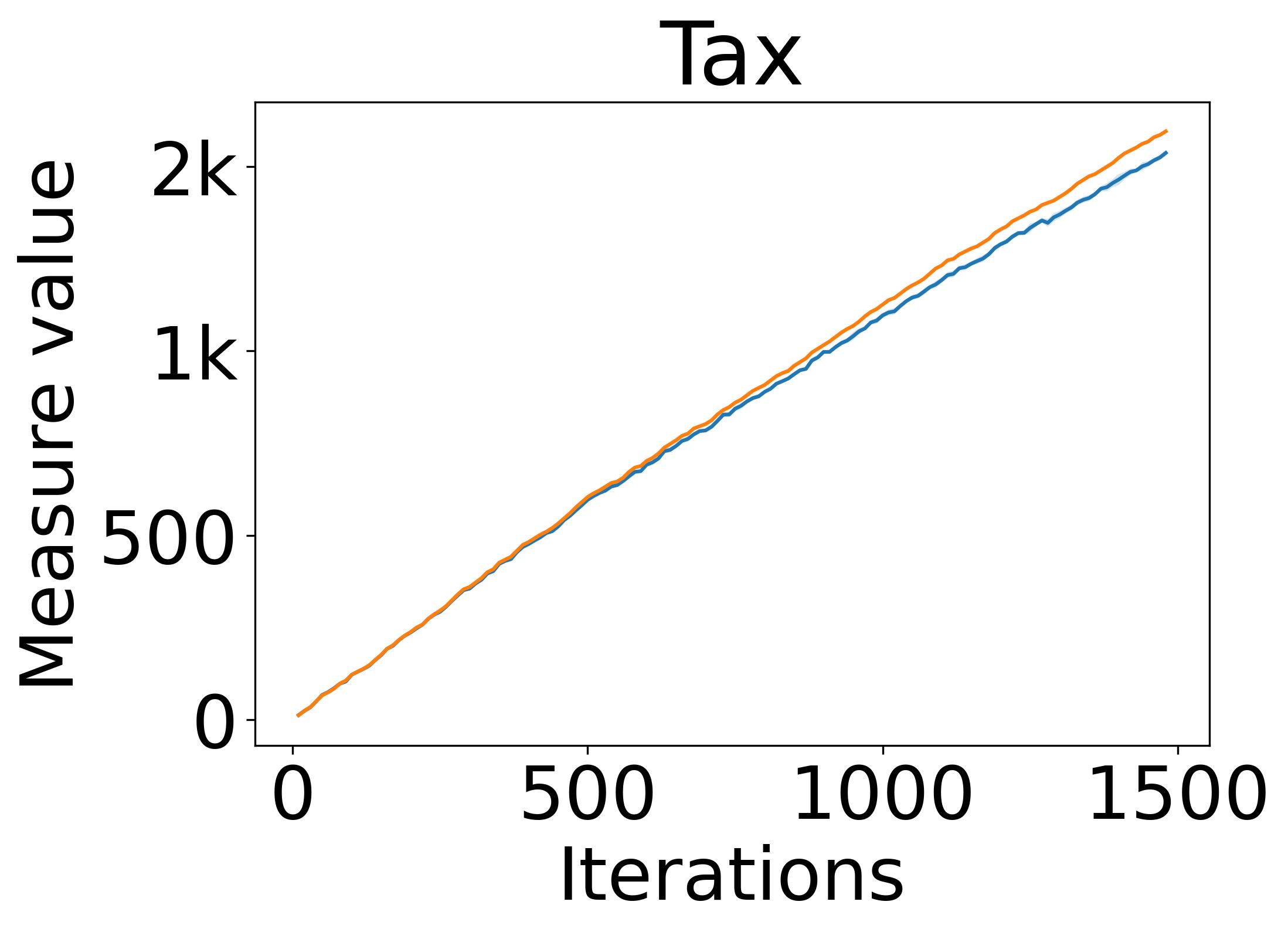}
         \hfill
         \includegraphics[width=0.19\textwidth]{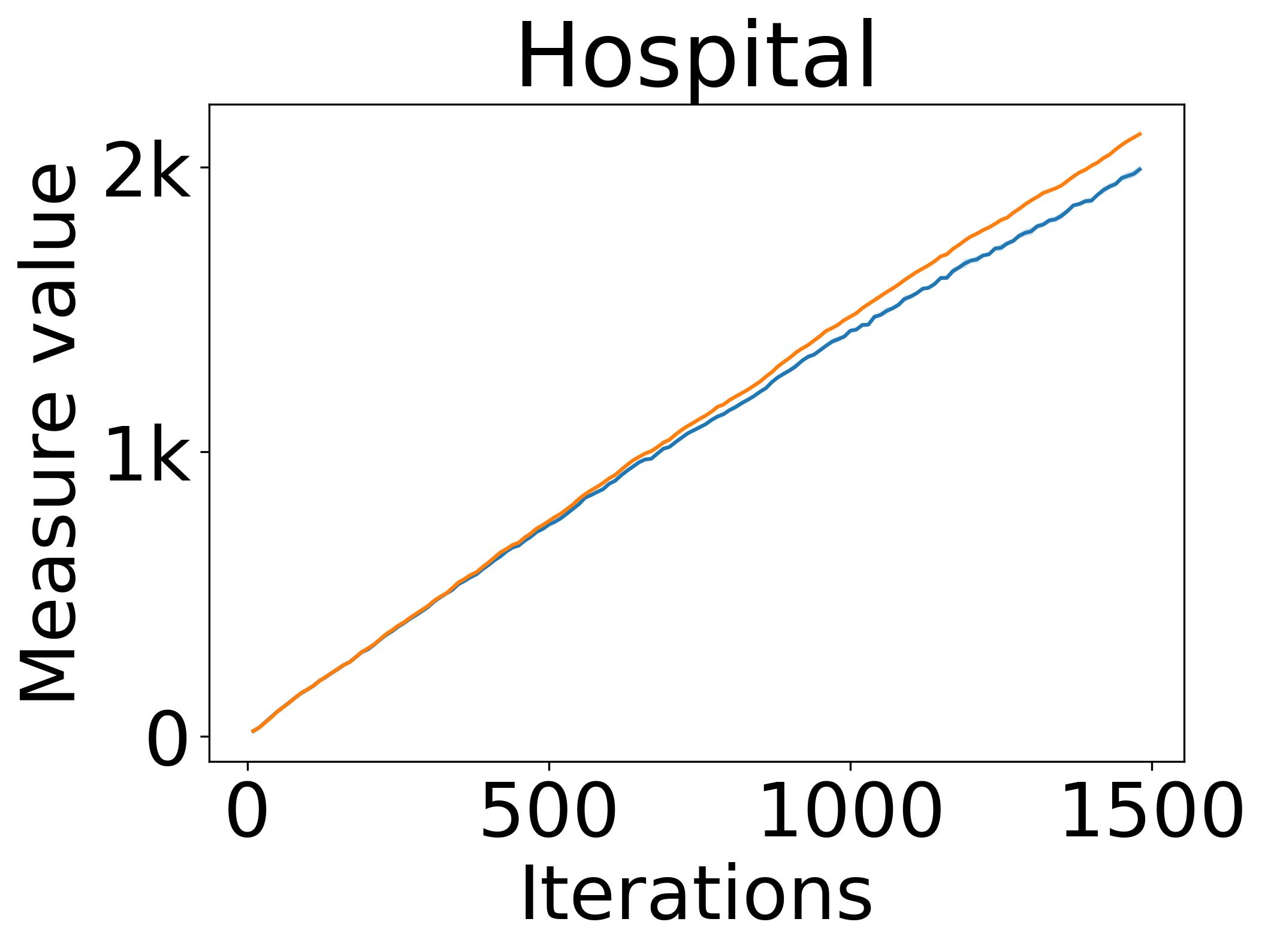}
         \hfill
         \includegraphics[width=0.19\textwidth]{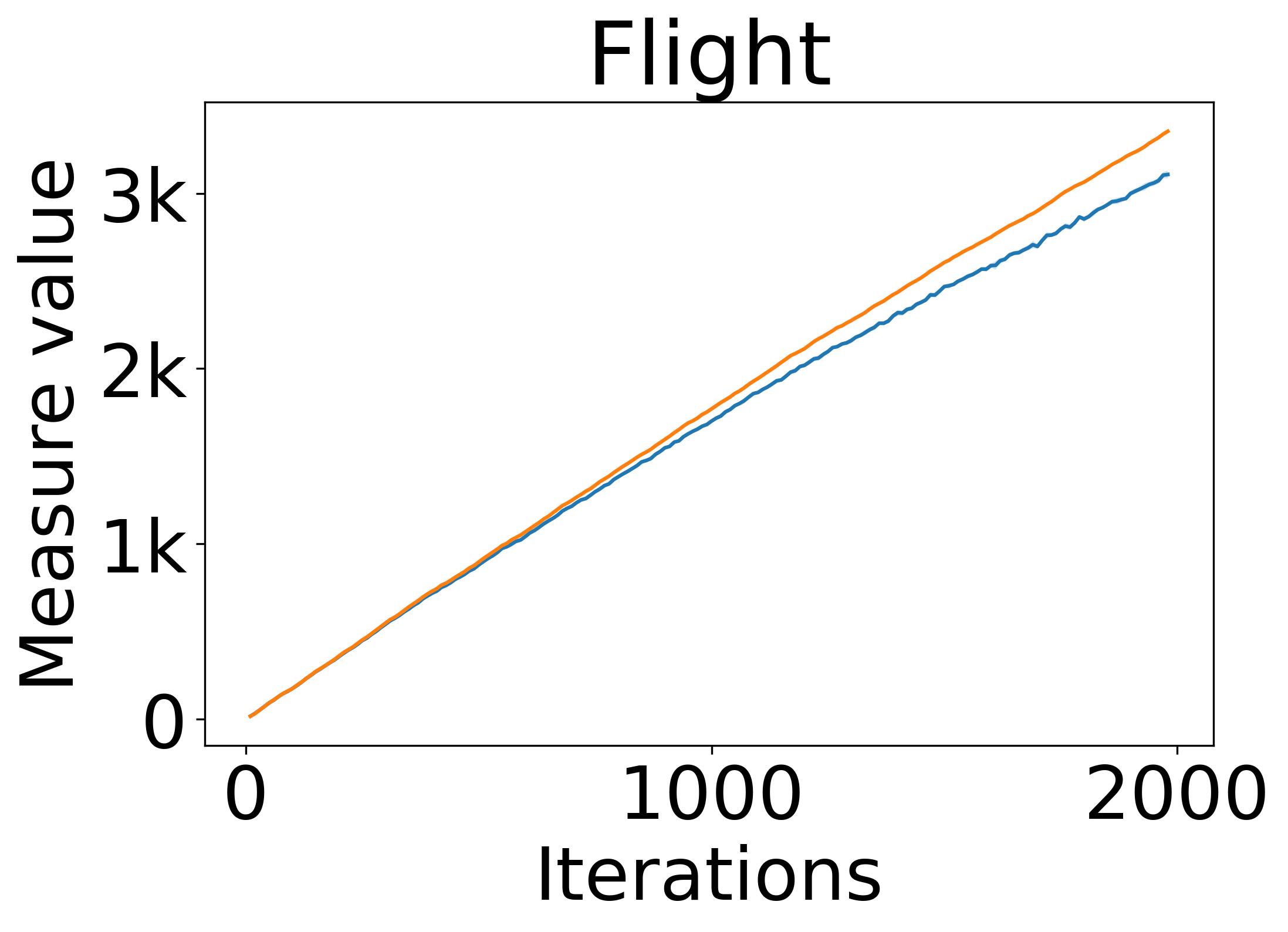}
         \hfill
         \includegraphics[width=0.19\textwidth]{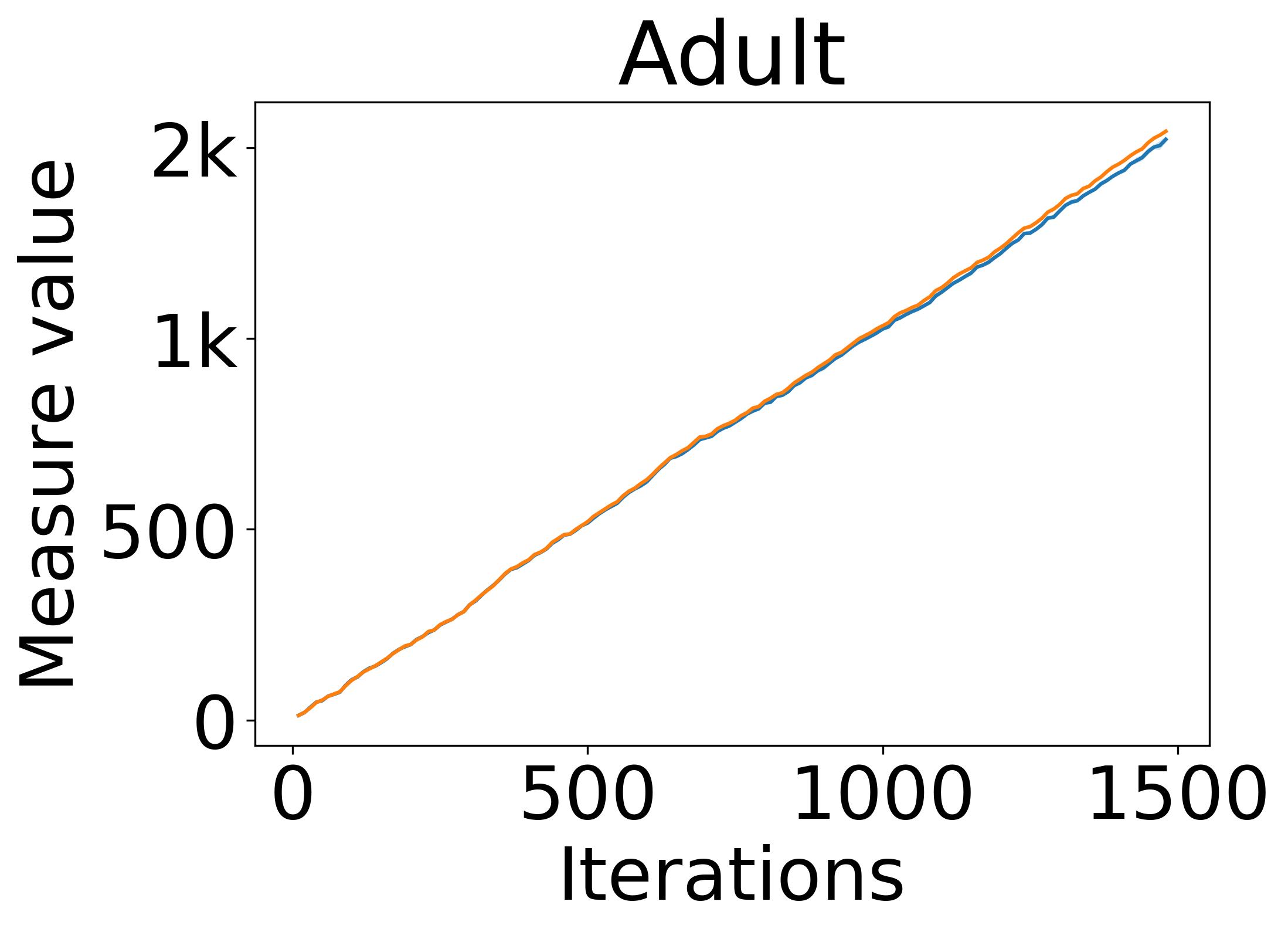}
         \includegraphics[width=0.3\textwidth]{images/legend_2.png}
         \caption{$\repair$ (Size of vertex cover)}
         \label{fig:tp_rnoise_vcover}
     \end{subfigure}
     \caption{True vs Private estimates for all dataset with RNoise $\alpha = 0.01$ at $\epsilon=1$. The $\problematic$ measure (figure a) and $\mininconsistency$ measure (figure b) are computed using our graph projection approach, and $\repair$ measure using our private vertex cover size approach. }
     \label{fig:tp_RNoise}
\end{figure*}

\subsection{Experimental Setup}

All our experiments are performed on a server with Intel Xeon Platinum 8358 CPUs (2.60GHz) and 1 TB RAM. Our code is in Python 3.11 and can be found in the artifact submission.
All experiments are repeated for $10$ runs, and the mean error value is reported. 

\paratitle{Datasets and violation generation}
We replicate the exact setup as \citet{LivshitsBKS20} for experimentation. We conduct experiments on five real-life datasets and their corresponding DCs as described in Table~\ref{tab:datasets}.
\ifpaper
\else
\begin{itemize}
    \item Adult~\cite{misc_adult_2}: Annual income results from various factors.
    \item Flight~\cite{flight}: Flight information across the US.
    \item Hospital~\cite{hospital}: Information about different hospitals across the US and their services.
    \item Stock~\cite{oleh_onyshchak_2020}: Trading stock information  on various dates
    \item Tax~\cite{chu2013discovering}: Personal tax infomation.
\end{itemize}
\fi

These datasets are initially consistent with the constraints.  All experiments are done on a subset of $10k$ rows, and violations are added similarly using both their proposed algorithms, namely CONoise (for Constraint-Oriented Noise) and RNoise (for Random Noise). 
\ifpaper
CONoise, in every iteration, randomly selects a constraint and two tuples and changes the values of the attributes in those tuples participating in the constraint. In our experiments, we run 200 iterations of the CONoise violations. Similarly, RNoise parameterized by $\alpha$ selects $\alpha$ values in the dataset and changes its value to another value from the active domain of the corresponding attribute (with probability 0.5) or to a typo. 
\else
CONoise introduces random violations of the constraints by running 200 iterations of the following procedure:
\begin{enumerate}
    \item Randomly select a constraint $\sigma$ from the constraint set $\constraintset$.
    \item Randomly select two tuples $t_i$ and $t_j$ from the database.
    \item For every predicate $\phi = (t_i[a_1] \circ t_j[a_2])$ of $\sigma$:
    \begin{itemize}
        \item If $t_i$ and $t_j$ jointly satisfy $\phi$, continue to the next predicate.
        \item If $\circ \in \{=, \leq, \geq\}$, change either $t_i[a_1]$ or $t_j[a_2]$ or vice versa (the choice is random).
        \item If $\circ \in \{<, >, \neq\}$, change either $t_i[a_1]$ or $t_2[a_2]$ (the choice is again random) to another value from the active domain of the attribute such that $\phi$ is satisfied, if such a value exists, or a random value in the appropriate range otherwise.
    \end{itemize}
\end{enumerate}
The second algorithm, RNoise, is parameterized by the parameter $\alpha$ that controls the noise level by modifying $\alpha$ of the values in the dataset. At each iteration of RNoise, we randomly select a database cell corresponding to an attribute that occurs in at least one constraint. Then, we change its value to another value from the active domain of the corresponding attribute (with probability 0.5) or a typo. 
\fi
The datasets vary immensely in the density of their conflict graphs as described in the max degree column of Table~\ref{tab:datasets}. For example, the Adult $10k$ nodes subset has a maximum degree of 9635, whereas the Stock dataset has a maximum of 1 with the same amount of conflict addition. 

\paratitle{Metrics} Following Livshits et al.~\cite{LivshitsBKS20}, we randomly select a subset of 10k rows of each dataset, add violations to the subset, and compute the inconsistency measures on the dataset with violations.   To measure performance, we utilize the normalized $\ell_1$ distance error~\cite{dwork2006calibrating}, $|\mathcal{I}(D,\Sigma)-\mathcal{M}(D,\Sigma,\epsilon)| /\mathcal{I}(D,\Sigma)$, where $\mathcal{M}(D,\Sigma,\epsilon)$ represents the estimated private value of the measure and $\mathcal{I}(D,\Sigma)$ denotes the true value. For $\repair$, we use the linear approximation algorithm from Livshits et al.~\cite{LivshitsBKS20} to estimate the non-private value.

\begin{figure*}
    \begin{subfigure}[b]{\textwidth}
         \centering
         \includegraphics[width=0.19\textwidth]{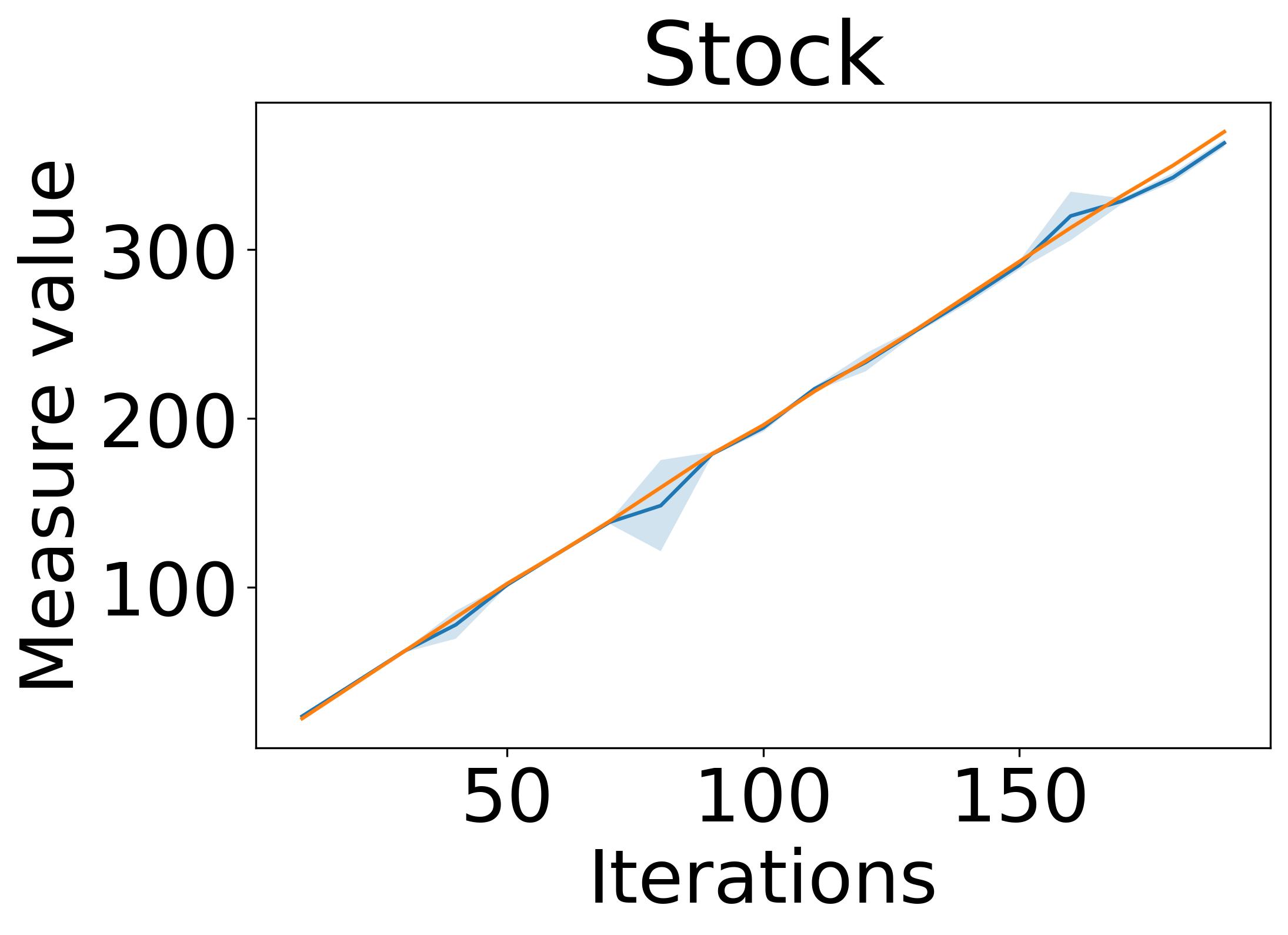}
         \hfill
         \includegraphics[width=0.19\textwidth]{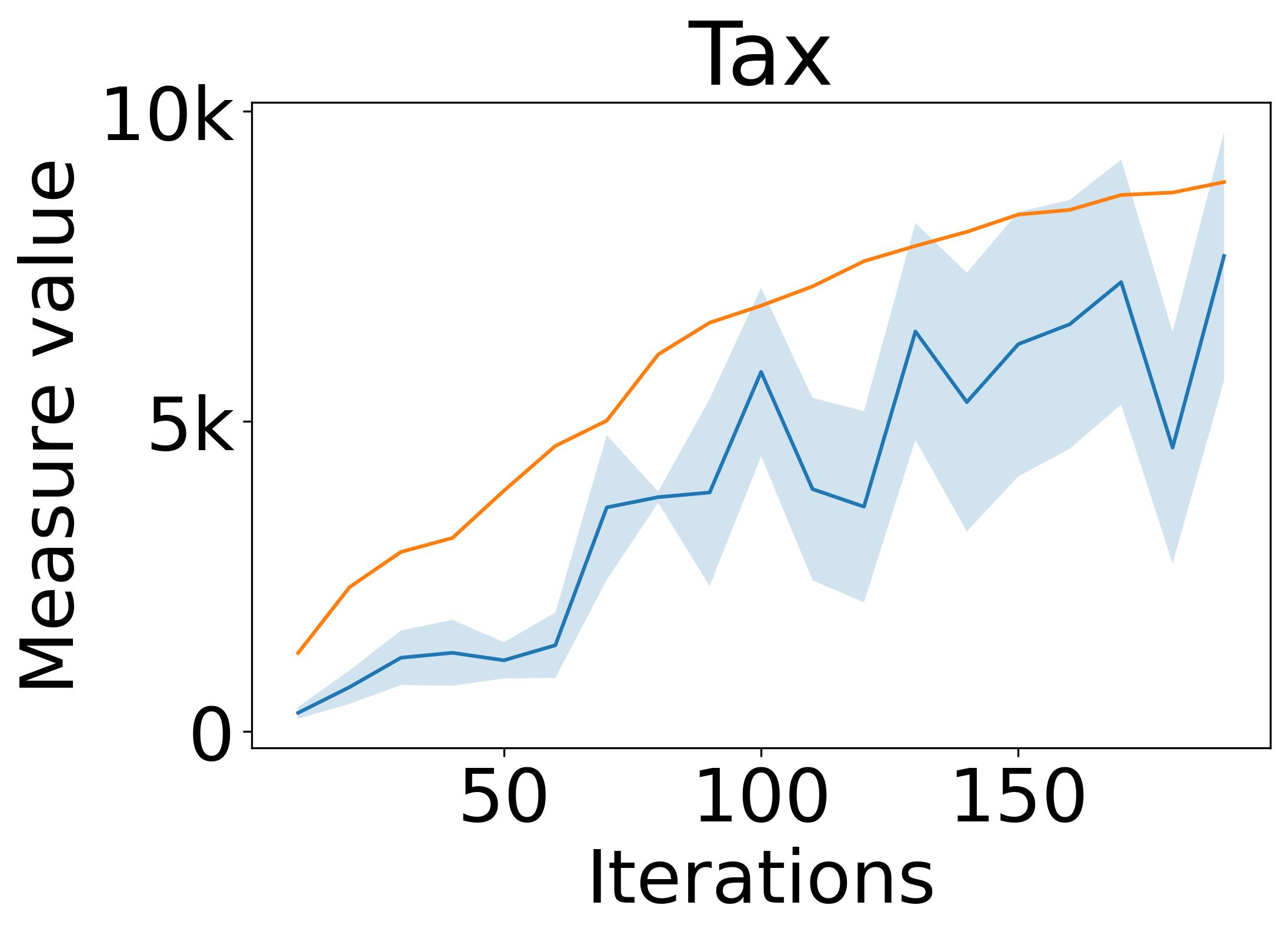}
         \hfill
         \includegraphics[width=0.19\textwidth]{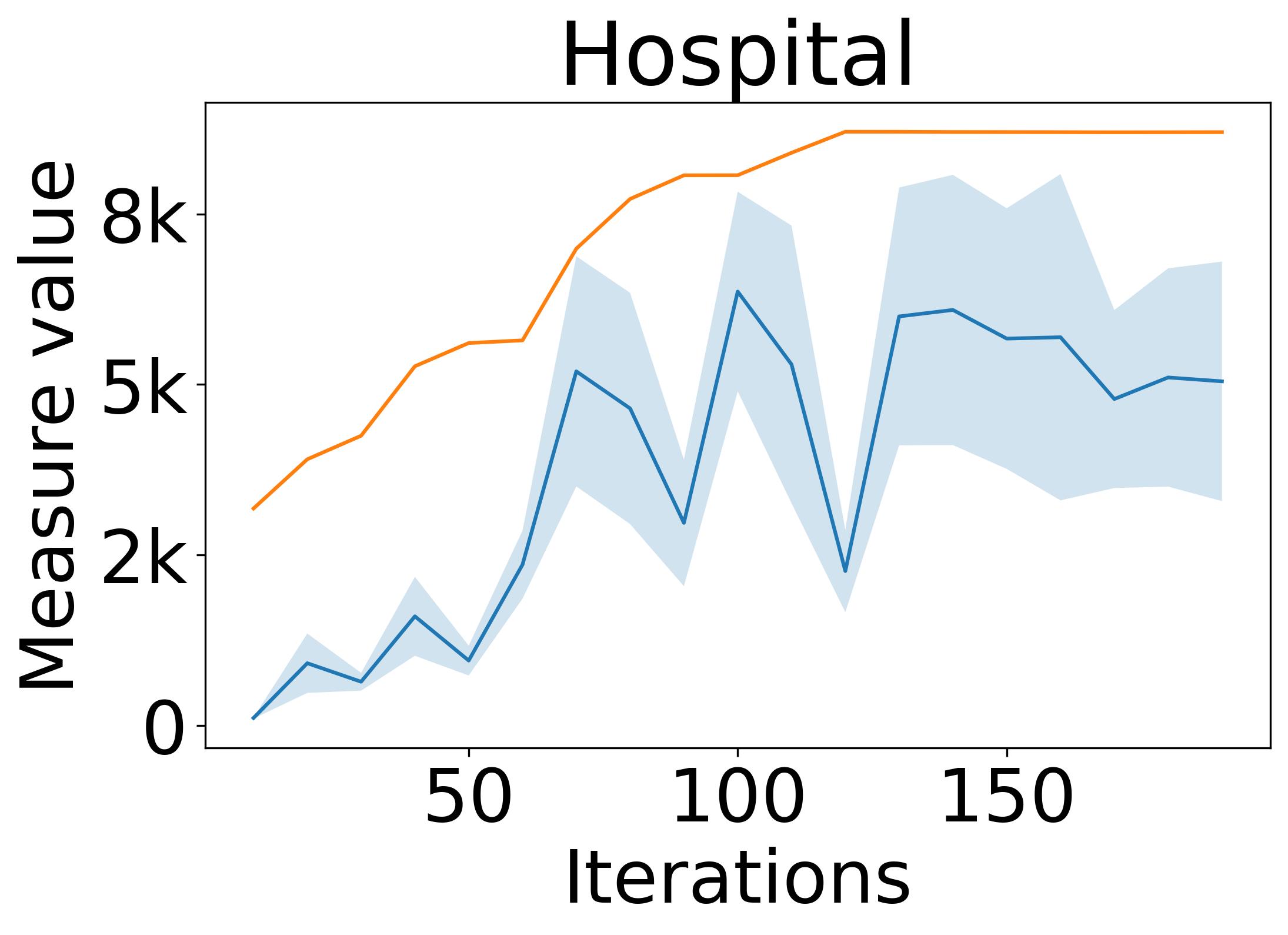}
         \hfill
         \includegraphics[width=0.19\textwidth]{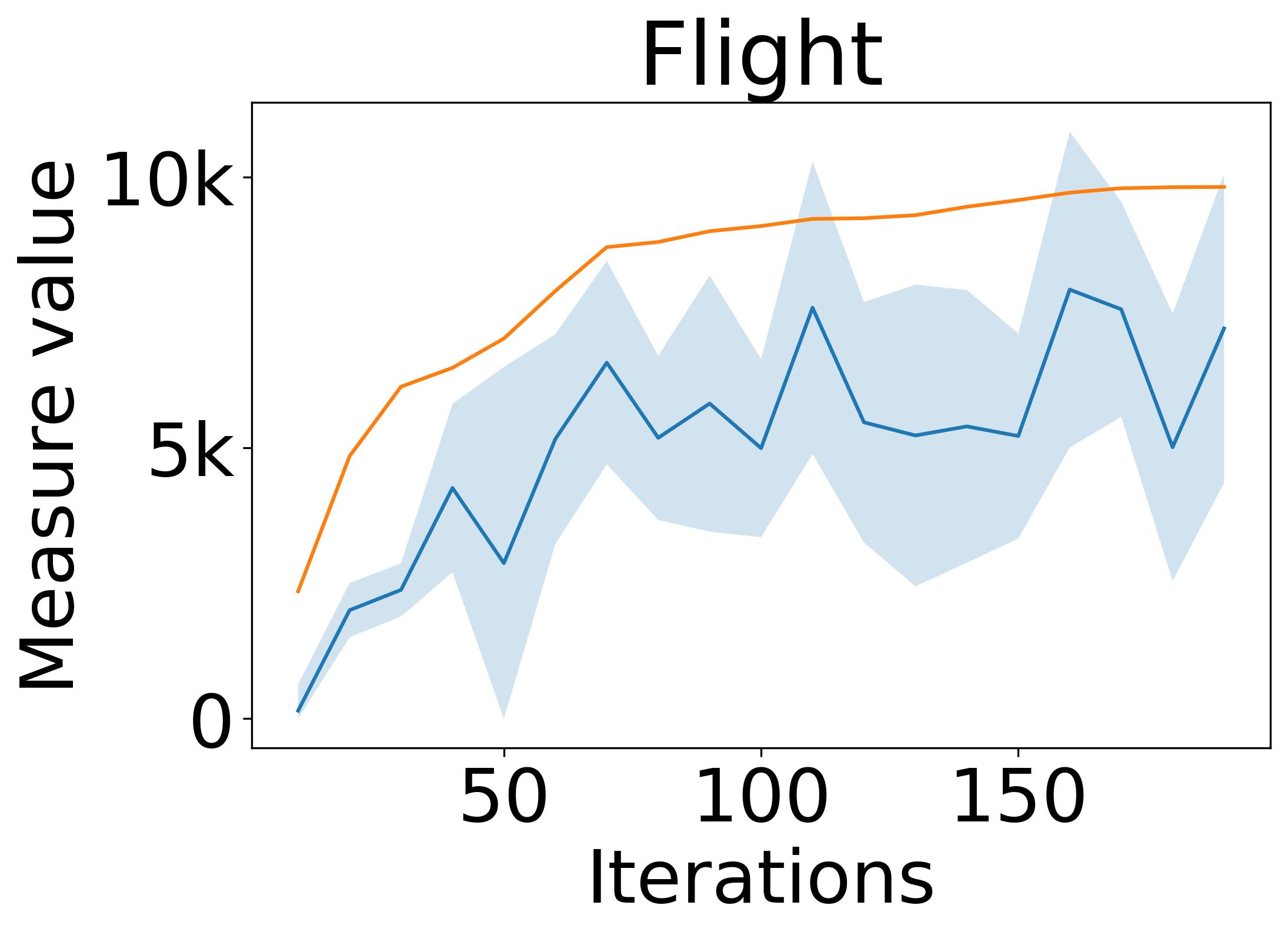}
         \hfill
         \includegraphics[width=0.19\textwidth]{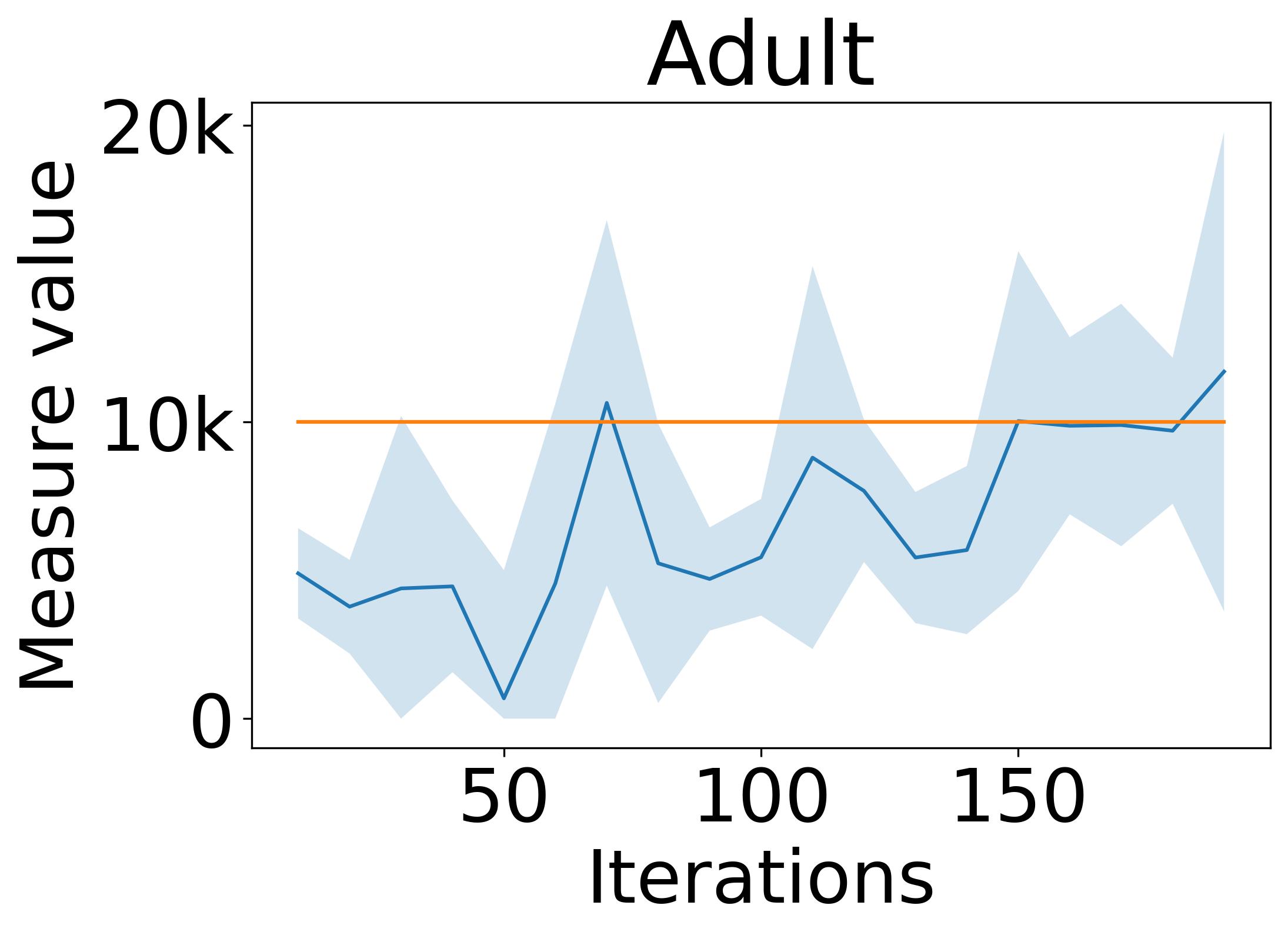}
         \includegraphics[width=0.3\textwidth]{images/legend_2.png}
         \caption{$\problematic$ (Positive degree nodes)}
         \label{fig:tp_conoise_pdedges}
     \end{subfigure}
    \begin{subfigure}[b]{\textwidth}
         \centering
         \includegraphics[width=0.19\textwidth]{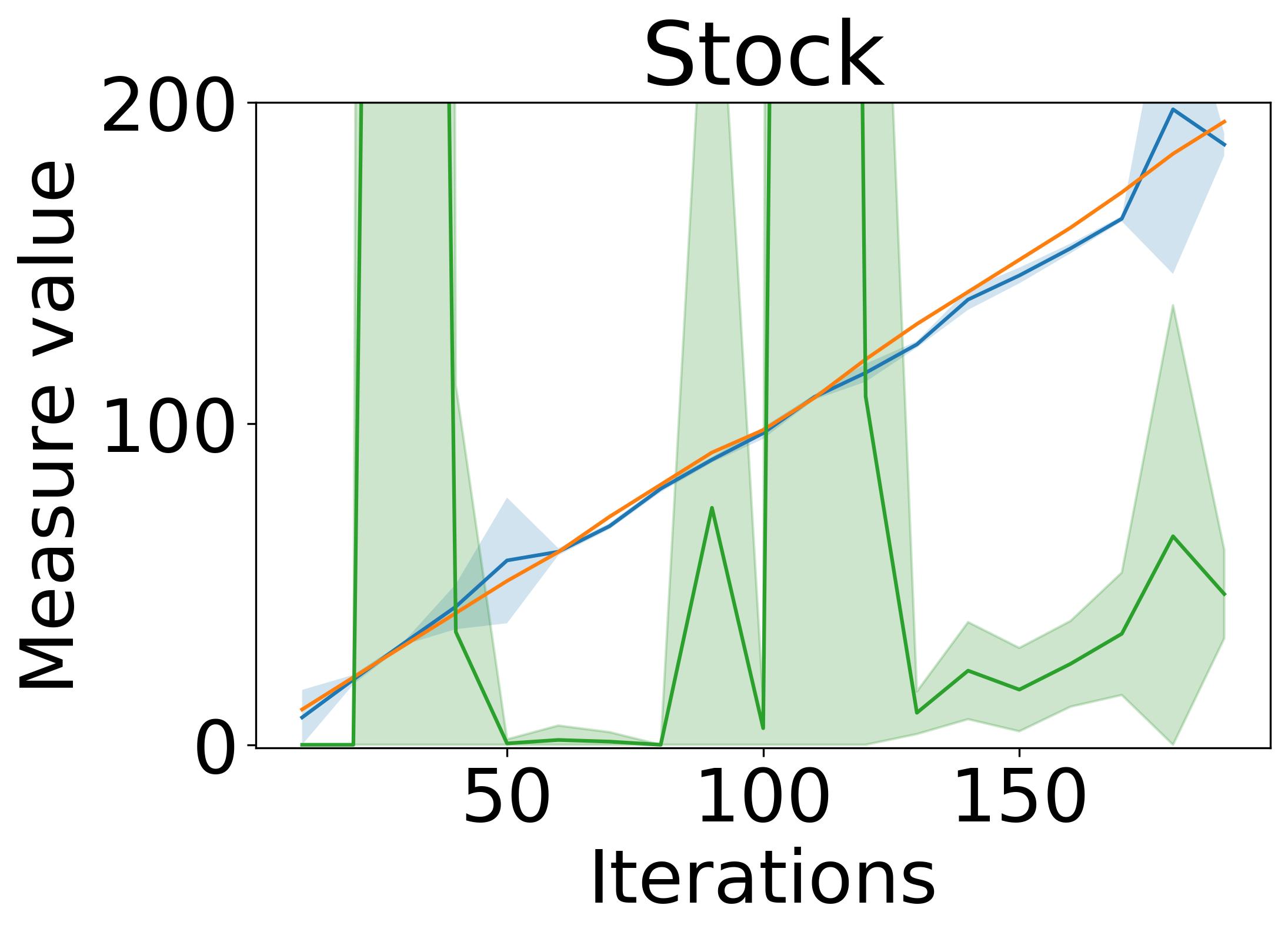}
         \hfill
         \includegraphics[width=0.19\textwidth]{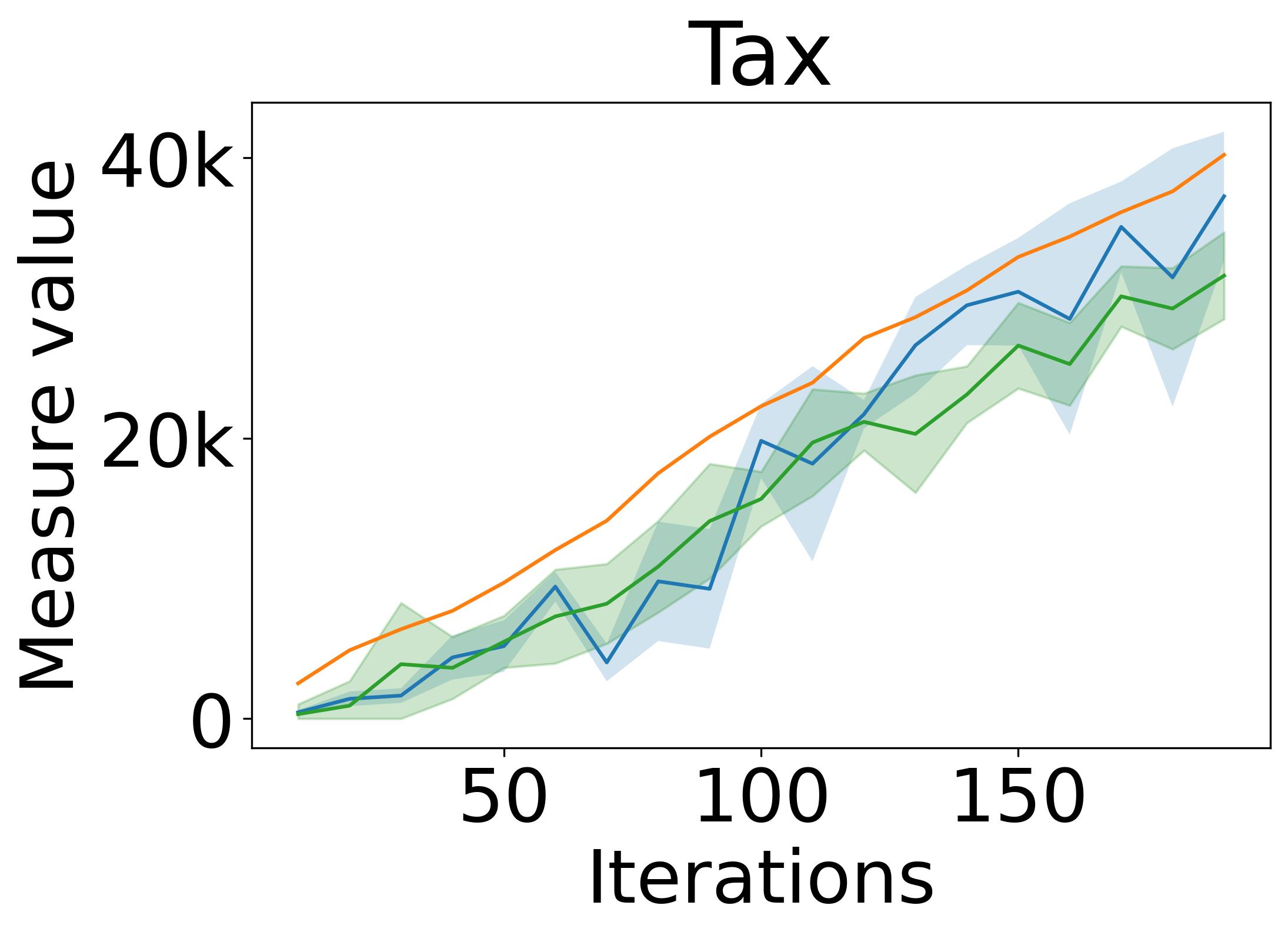}
         \hfill
         \includegraphics[width=0.19\textwidth]{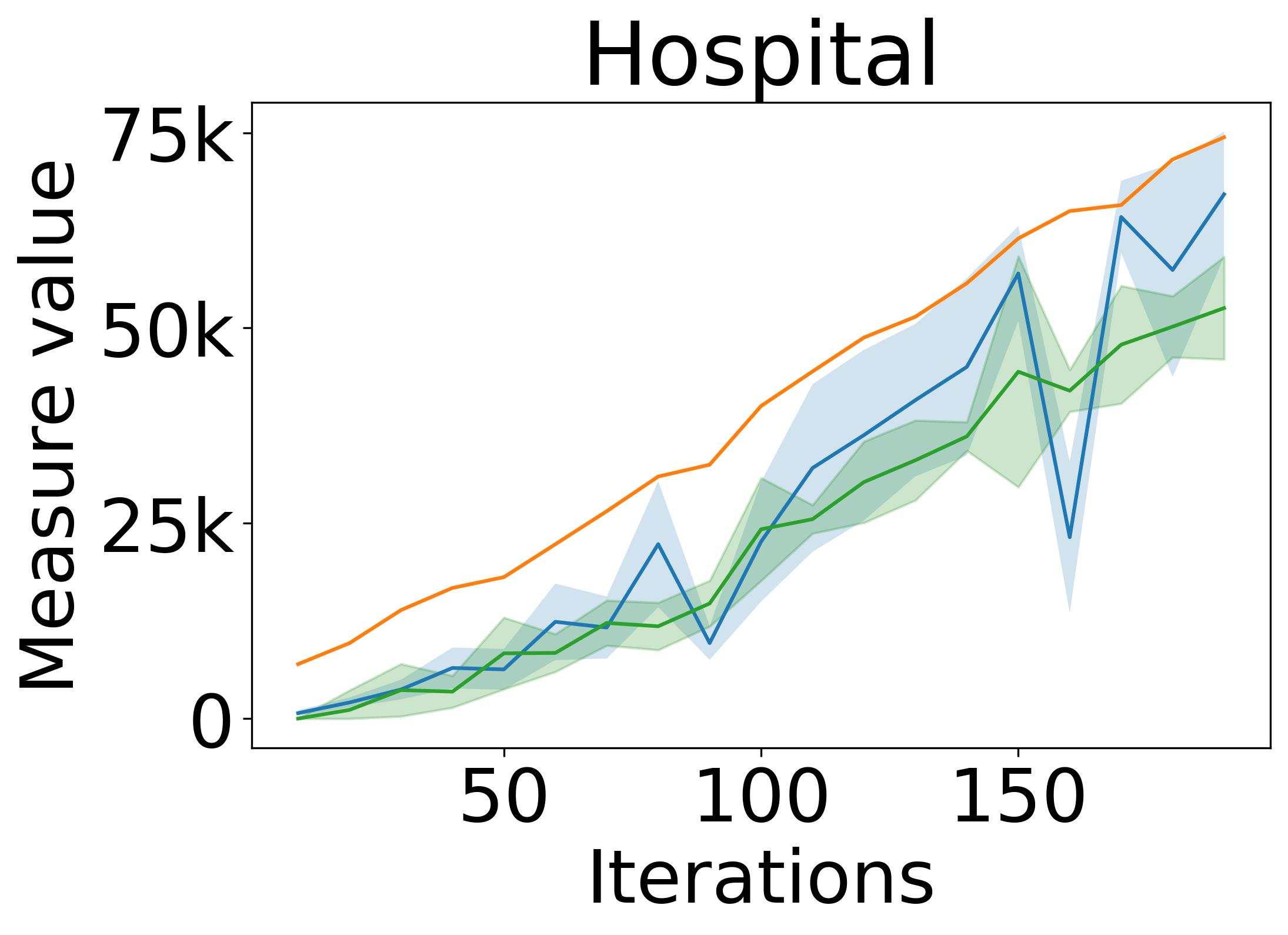}
         \hfill
         \includegraphics[width=0.19\textwidth]{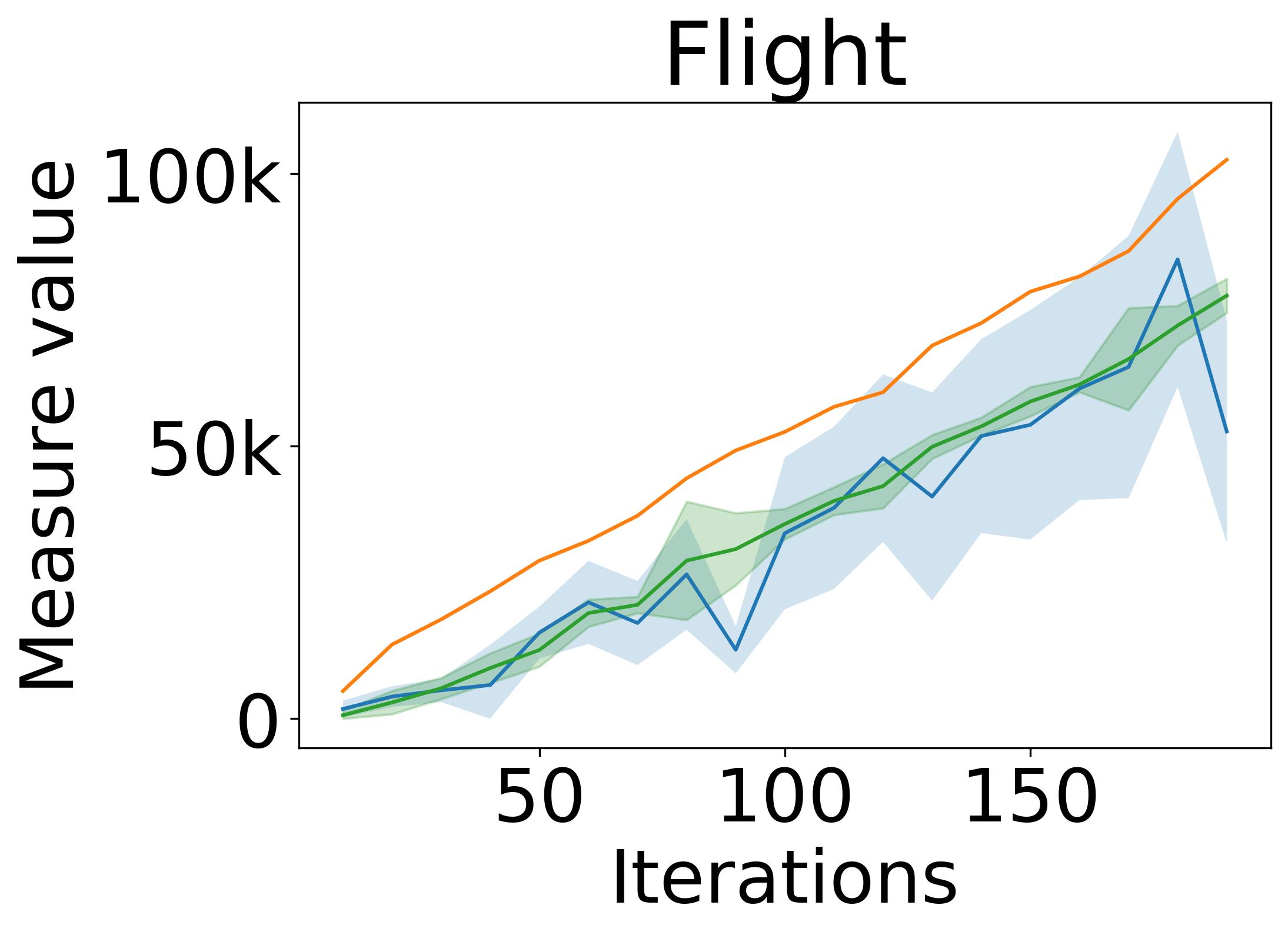}
         \hfill
         \includegraphics[width=0.19\textwidth]{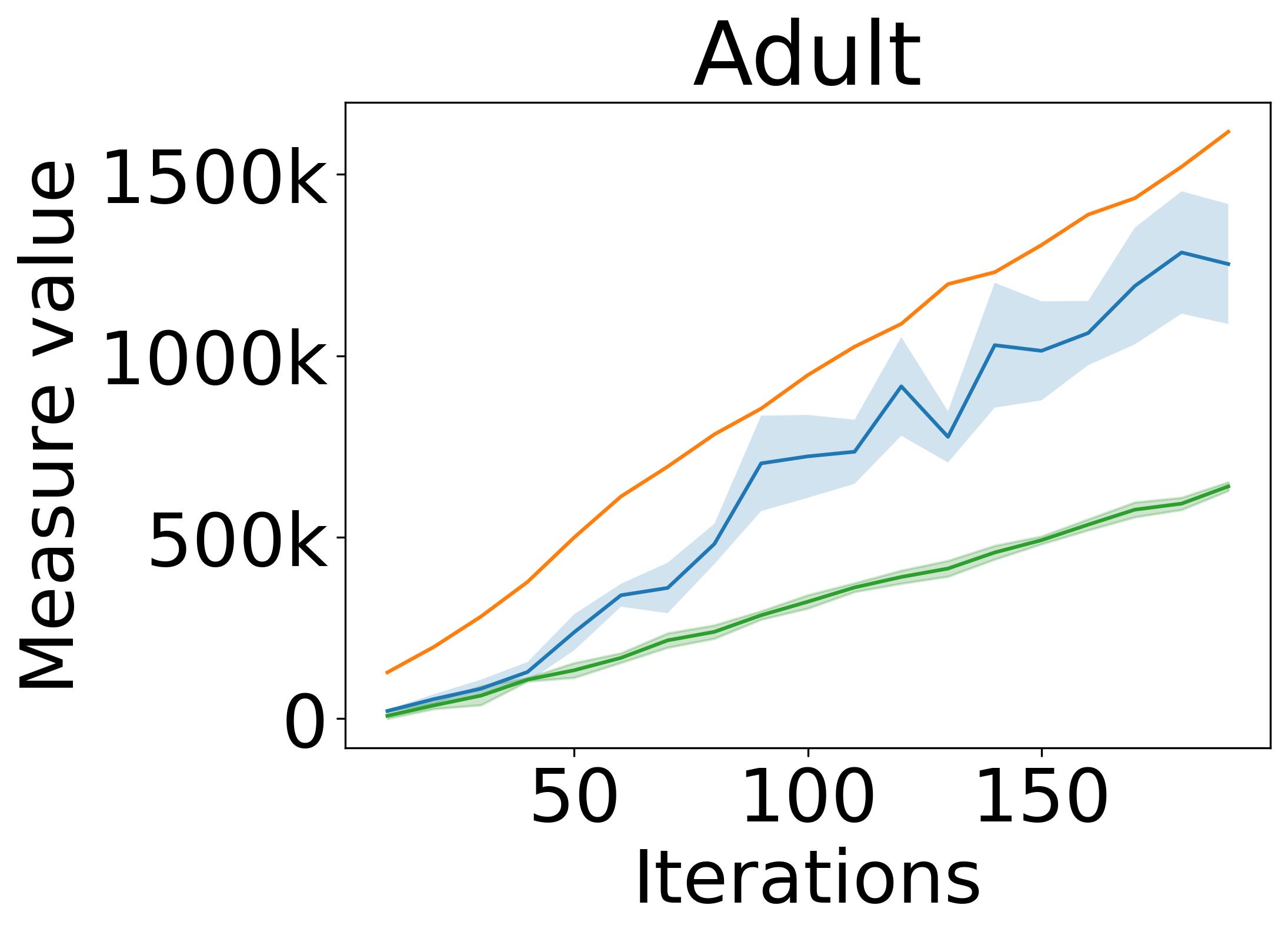}
         \includegraphics[width=0.4\textwidth]{images/legend2_r2t.png}
         \caption{$\mininconsistency$ (Number of edges)}
         \label{fig:tp_conoise_nedges}
     \end{subfigure}
     \begin{subfigure}[b]{\textwidth}
    \centering
    \includegraphics[width=0.19\textwidth]{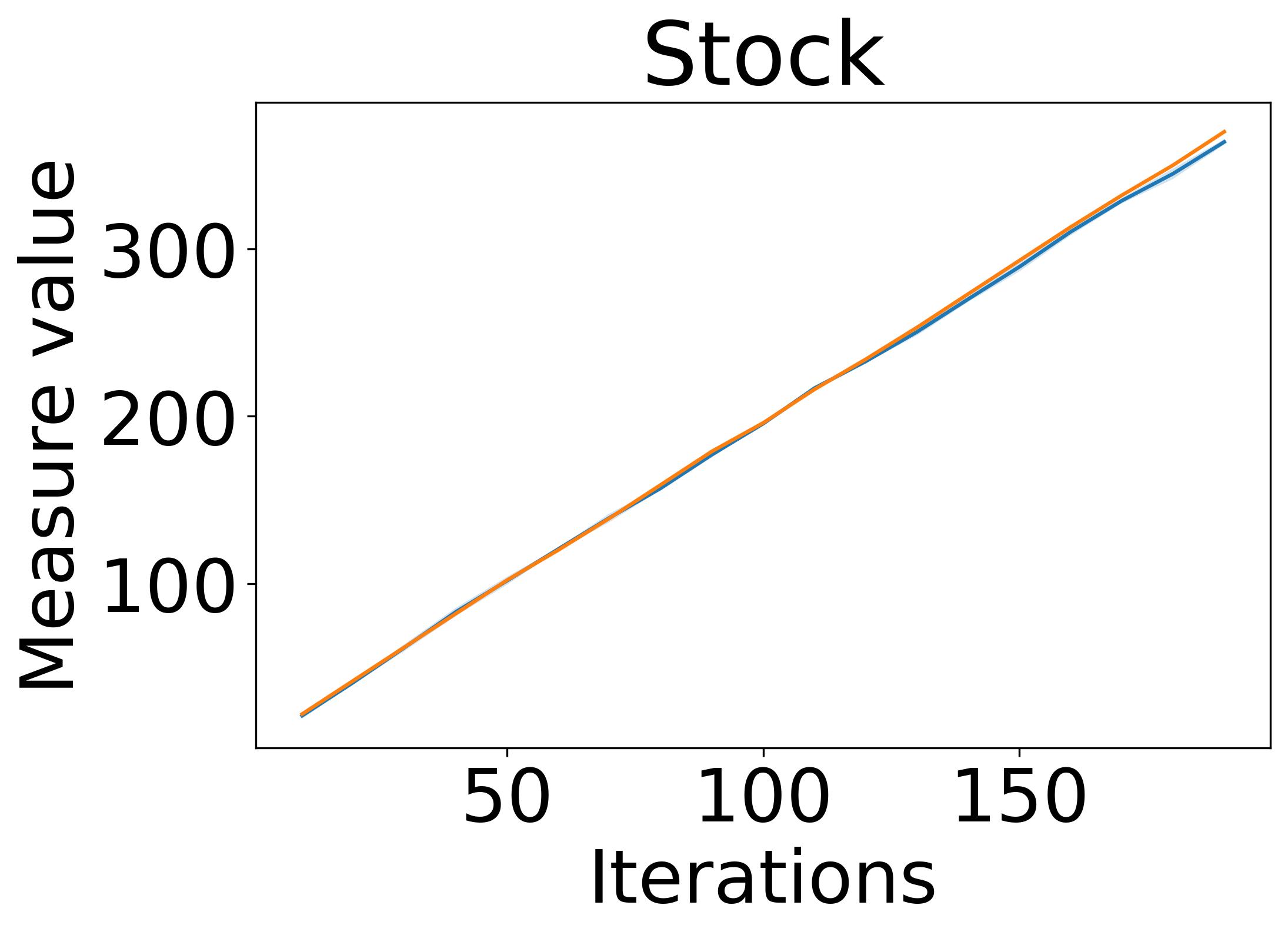}
    \hfill
    \includegraphics[width=0.19\textwidth]{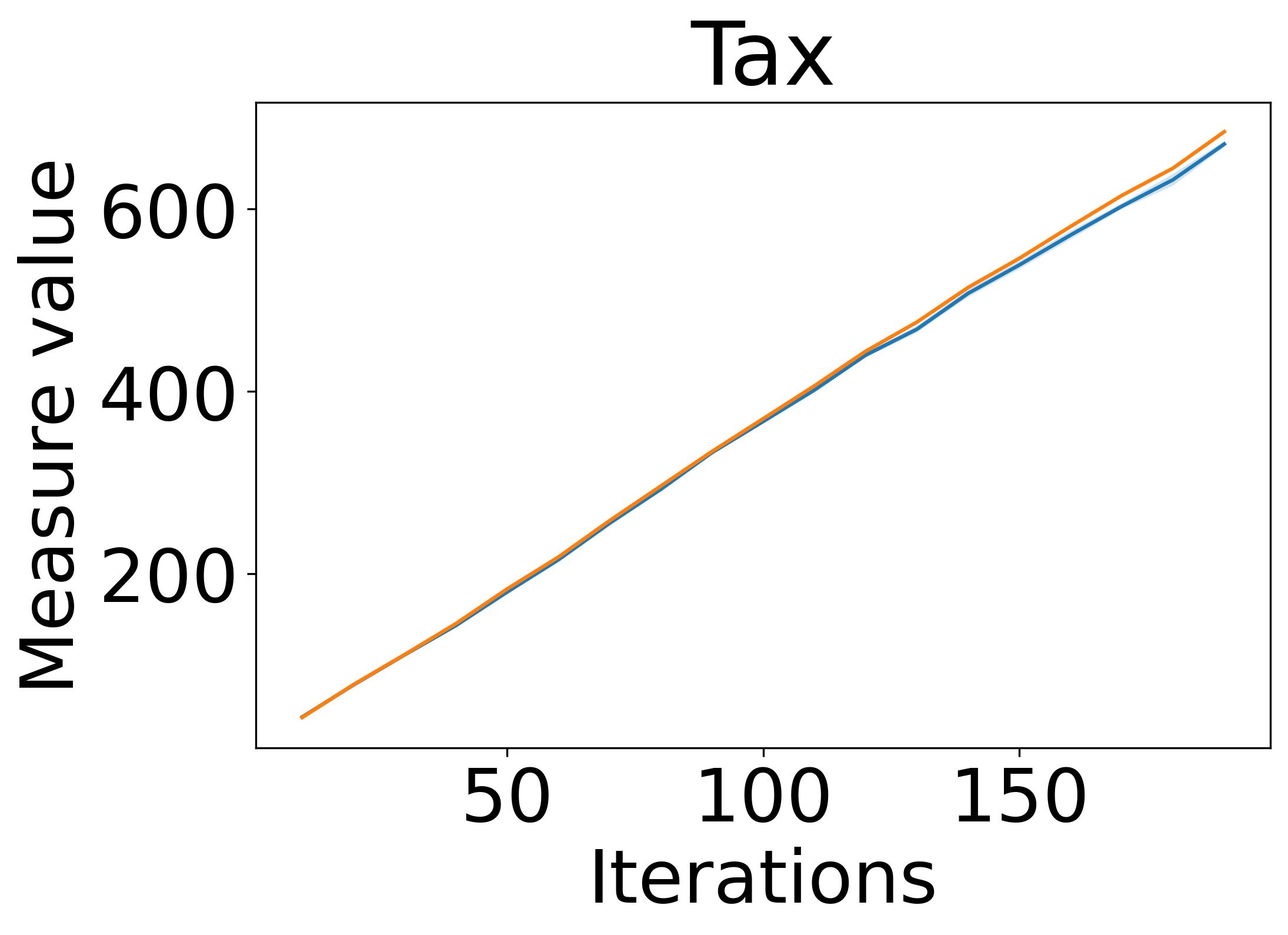}
    \hfill
    \includegraphics[width=0.19\textwidth]{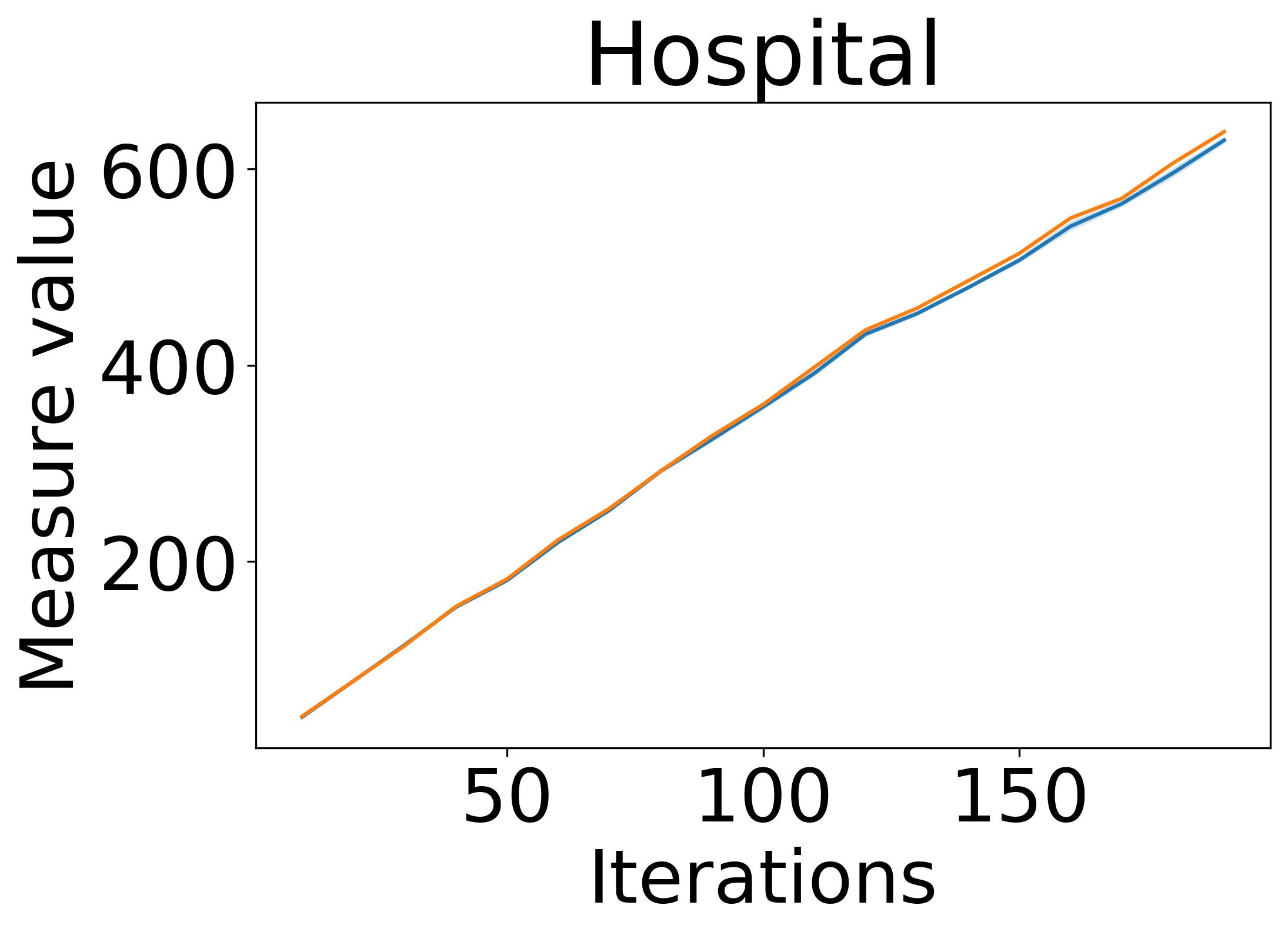}
    \hfill
    \includegraphics[width=0.19\textwidth]{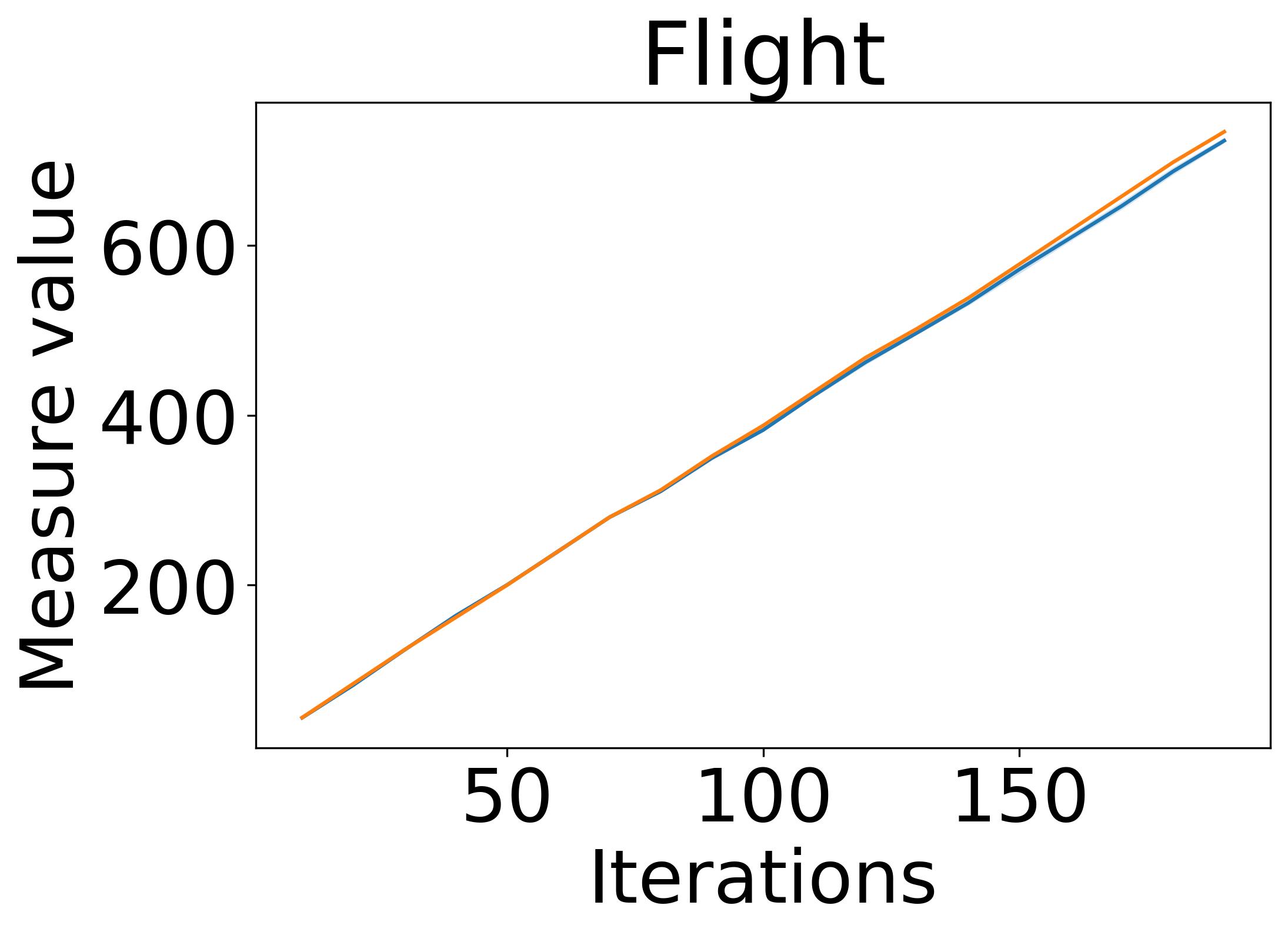}
    \hfill
    \includegraphics[width=0.19\textwidth]{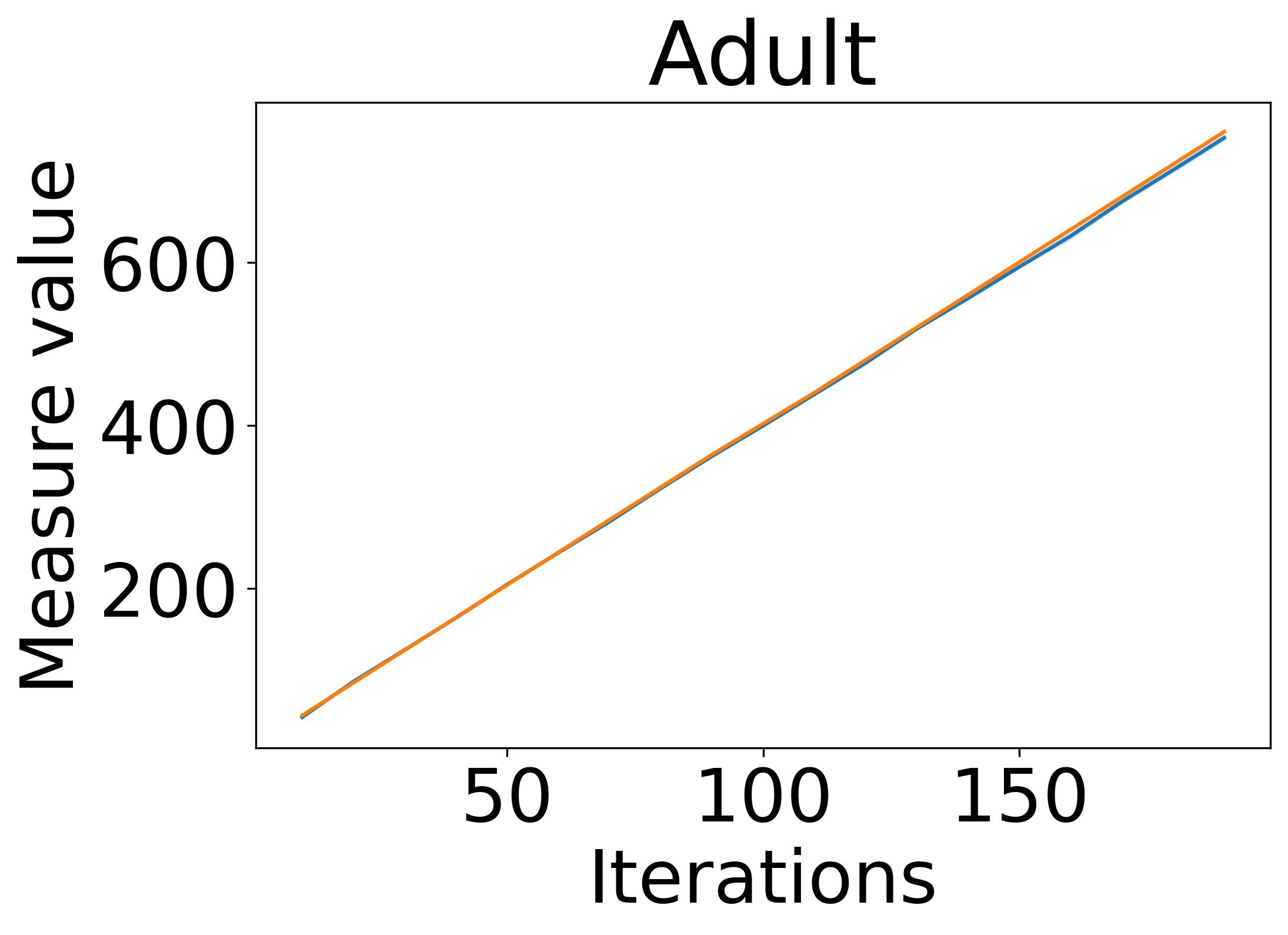}
    \includegraphics[width=0.3\textwidth]{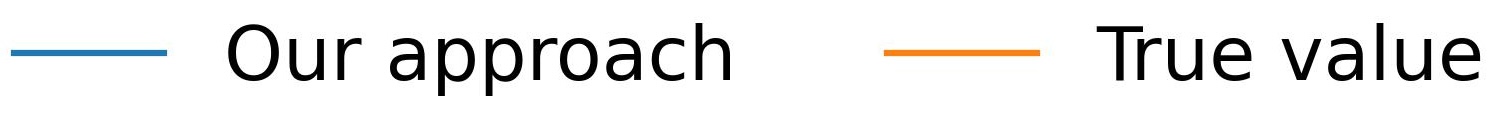}
    \caption{$\repair$ (Size of vertex cover)}
    \label{fig:tp_conoise_vcover}
\end{subfigure}
     \caption{True vs.~private estimates for all dataset with CONoise at $\epsilon = 1$ for 200 iterations. The $\problematic$ measure (a) and $\mininconsistency$ measure (b) are computed using our graph projection approach, and the $\repair$ measure (c) using our private vertex cover size approach.}
     \label{fig:tp_conoise}
\end{figure*}

\paratitle{Algorithm variations}  We experiment with multiple different variations of Algorithm~\ref{algo:graph_general} for $\mininconsistency$ and $\problematic$. The initial candidate set for the degree bound is $\Theta = [1, 5, 10, 100, 500, 1000, 2000, 3000, \dots, 10000]$ with multiples of $1000$ along with some small candidates.
\squishlist
    \item \textit{Baseline 1}: naively sets the bound $\theta^*$ to the maximum possible degree $|V|$ in Algorithm~\ref{algo:graph_general} by skipping line 2 and the unused privacy budget $\epsilon_1$ is used for the final noise addition step. 

    \item \textit{Baseline 2}: sets the bound $\theta^*$ to the actual maximum degree of the conflict graph  $\degree_{\max}(\graph)$. Note that this is a non-private baseline that only acts as an upper bound and is one of the best values that can be achieved without privacy constraints. 
    \item \textit{Exponential mechanism}: choose $\theta^*$ over the complete candidate set $\Theta$ using the basic EM in  Algorithm~\ref{algo:expo_mech_basic}.
    \item \textit{Hierarchical exponential mechanism}: chooses $\theta^*$ using a two-step EM with an equal budget for each step in Algorithm~\ref{algo:em_opt}, but skipping Lines 1-5 of the upper bound computation step.  
    \item \textit{Upper bound + hierarchical exponential mechanism (our approach)}: encompasses both the optimization strategies, including 
    the upper bound computation and the hierarchical exponential mechanism
    discussed in Section~\ref{sec:dc_aware} (the full Algorithm~\ref{algo:em_opt}). 
\squishend
By default, we experiment with a total privacy budget of $\epsilon=1$ unless specified otherwise. 
\subsection{Results}\label{sec:results}

\begin{figure*}
    \begin{subfigure}[b]{\textwidth}
         \centering
         \includegraphics[width=0.19\textwidth]{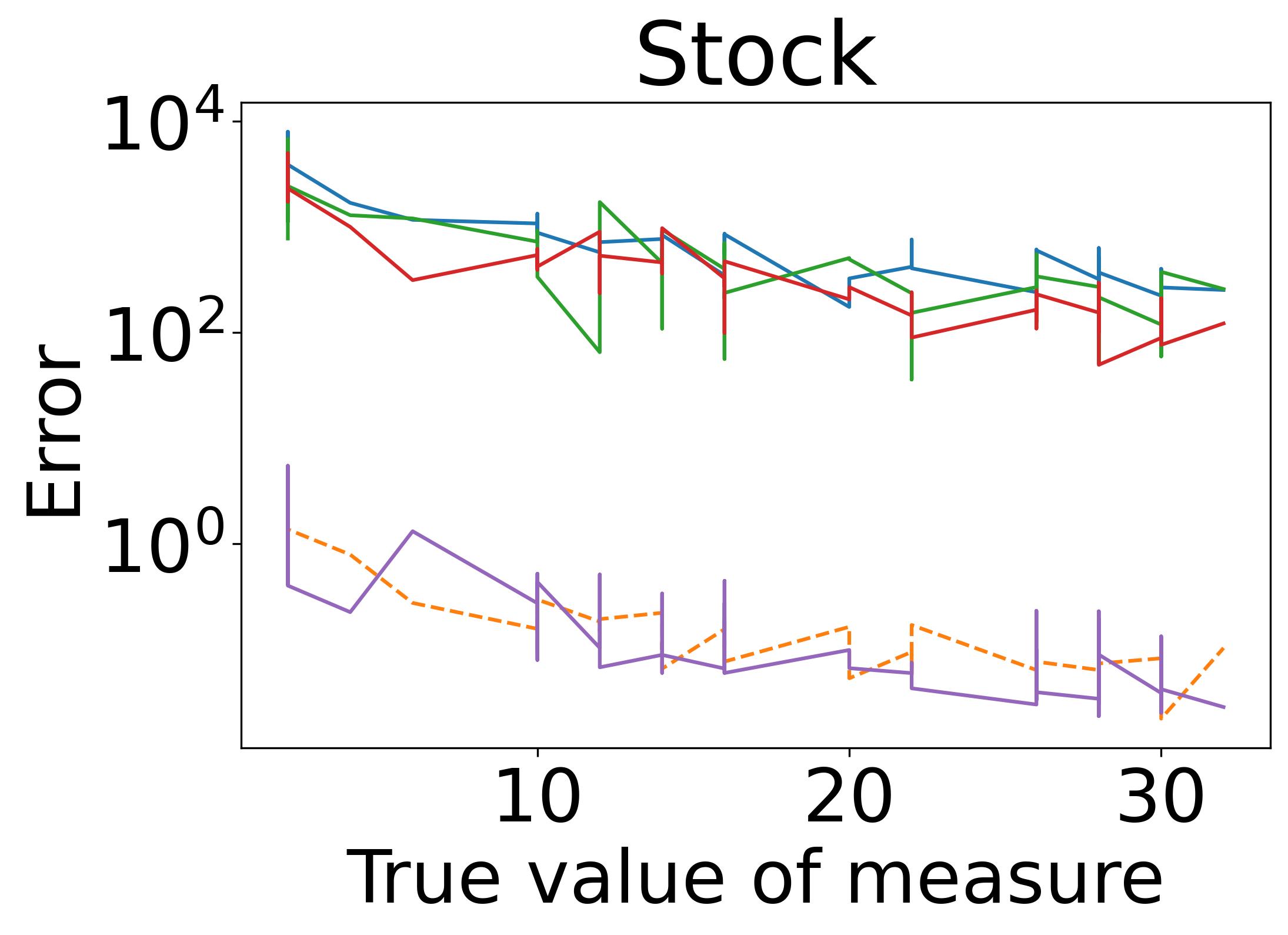}
         \hfill
         \includegraphics[width=0.19\textwidth]{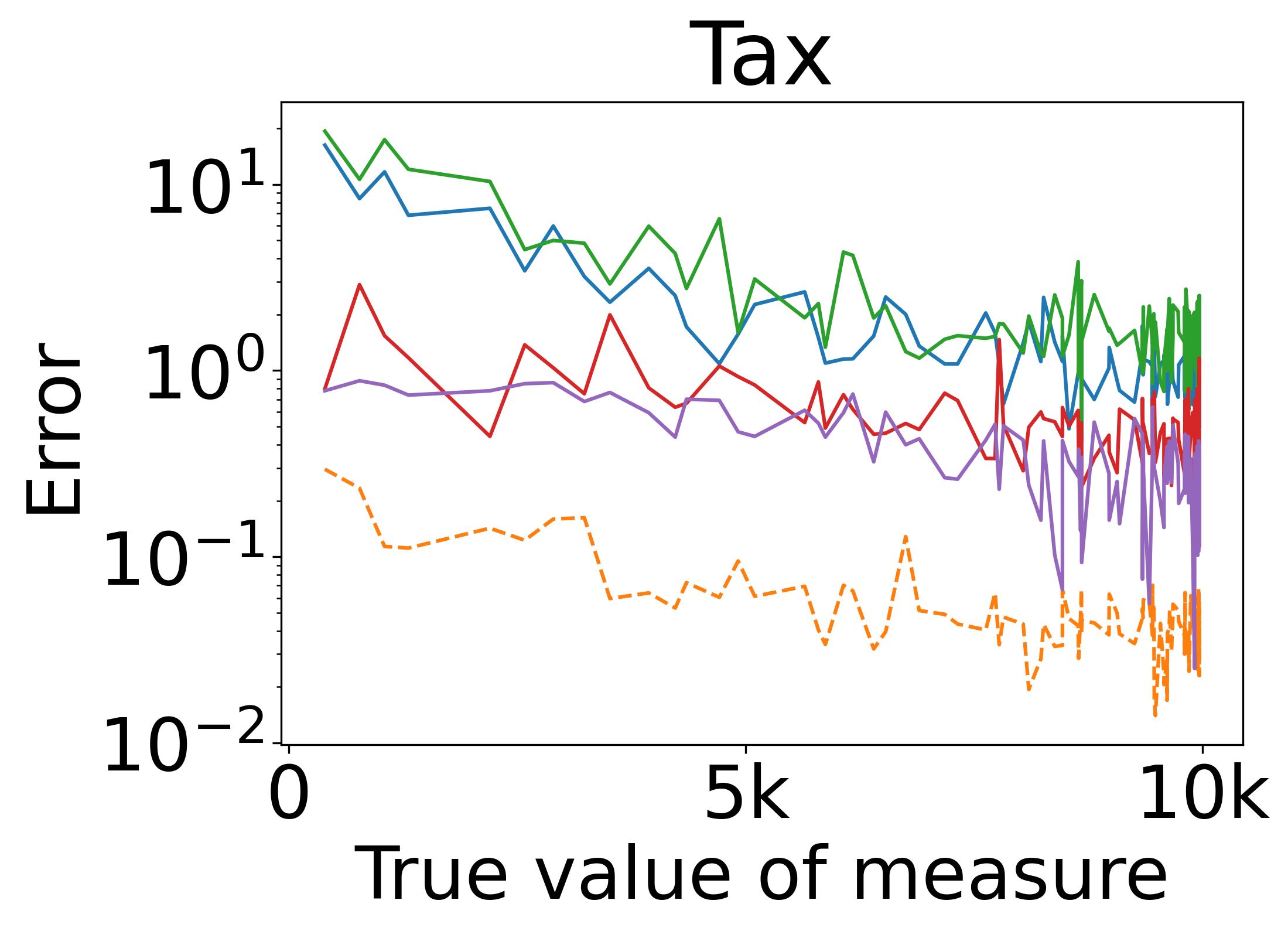}
         \hfill
         \includegraphics[width=0.19\textwidth]{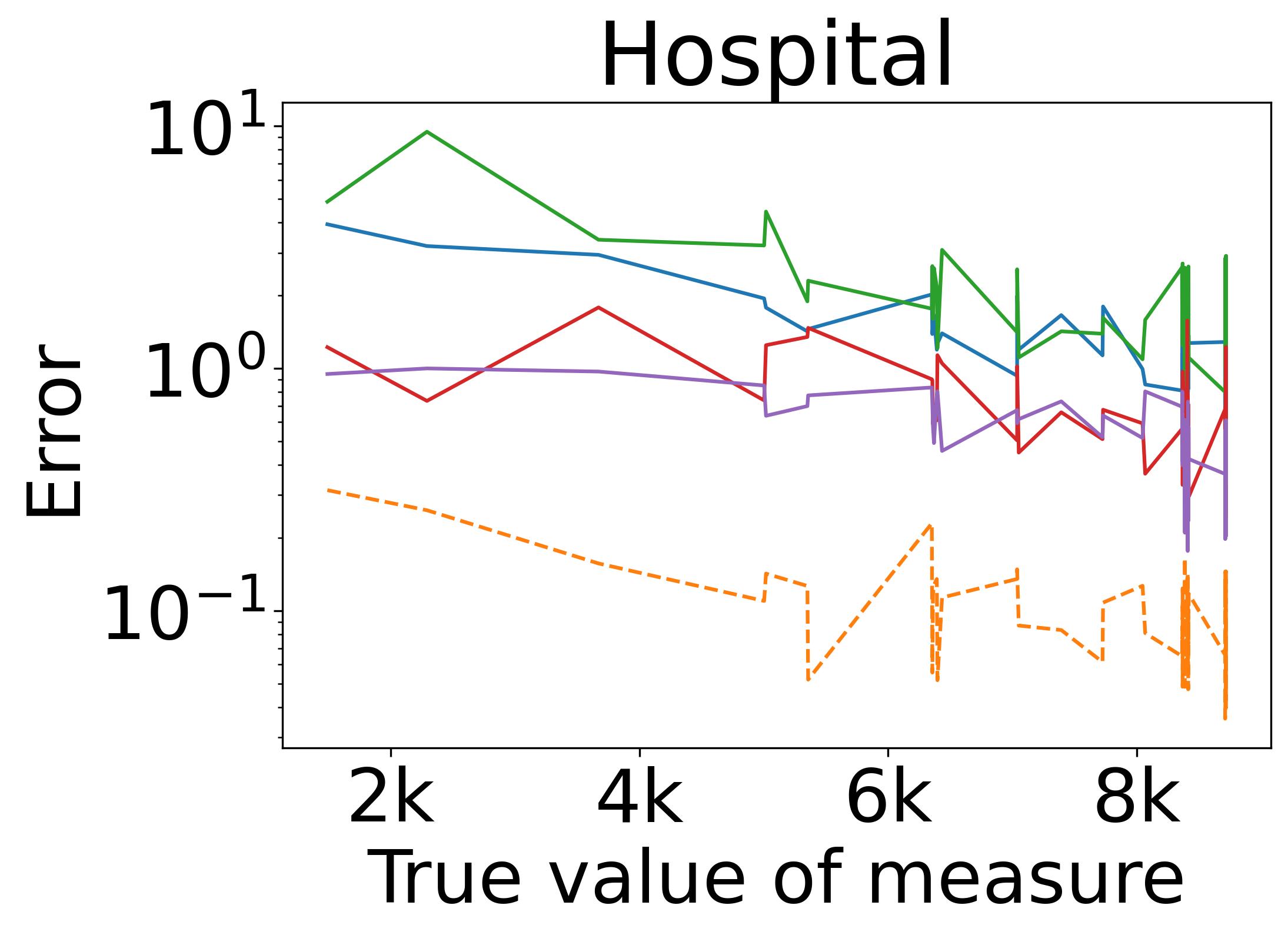}
         \hfill
         \includegraphics[width=0.19\textwidth]{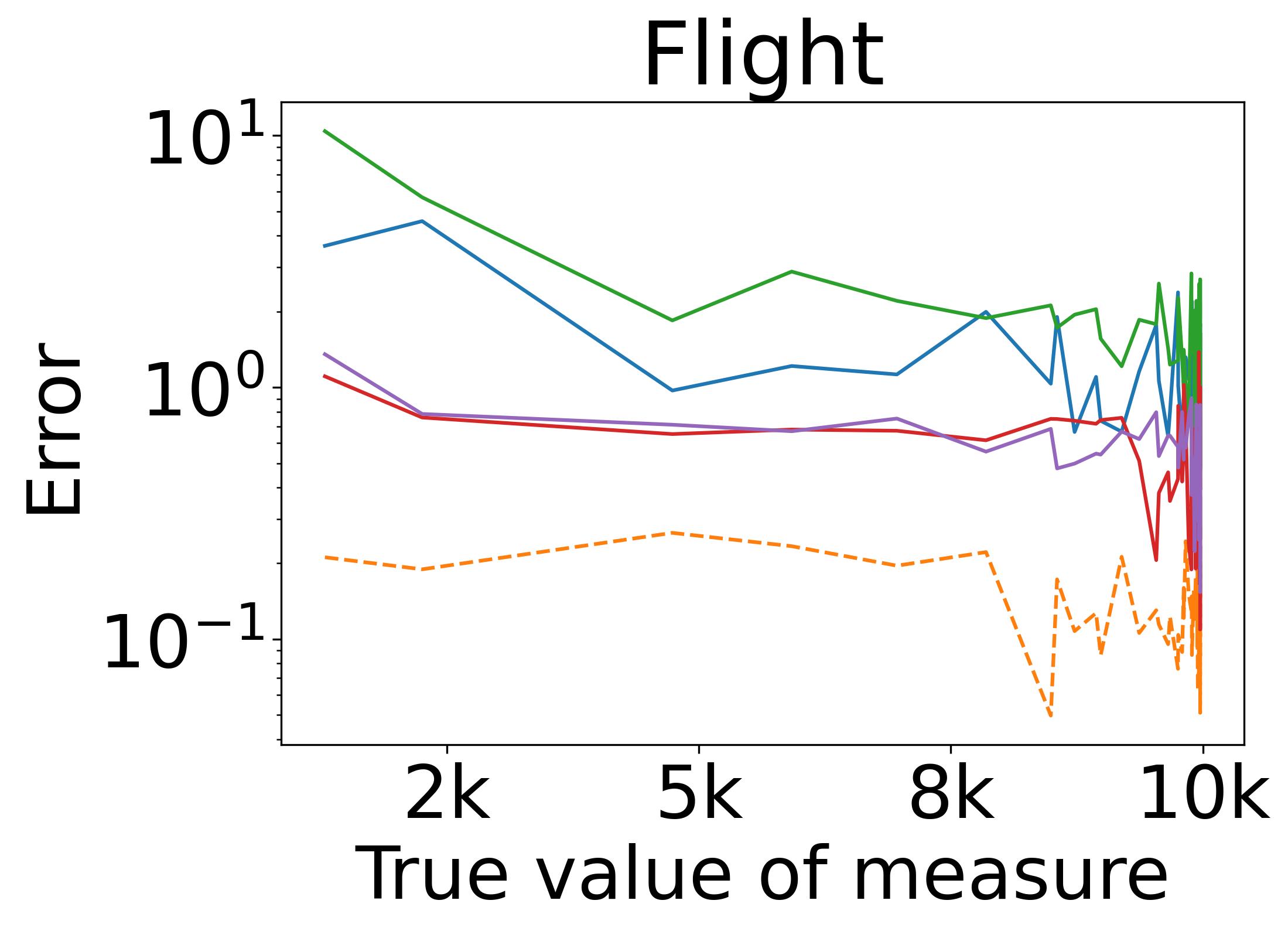}
        \hfill
         \includegraphics[width=0.19\textwidth]{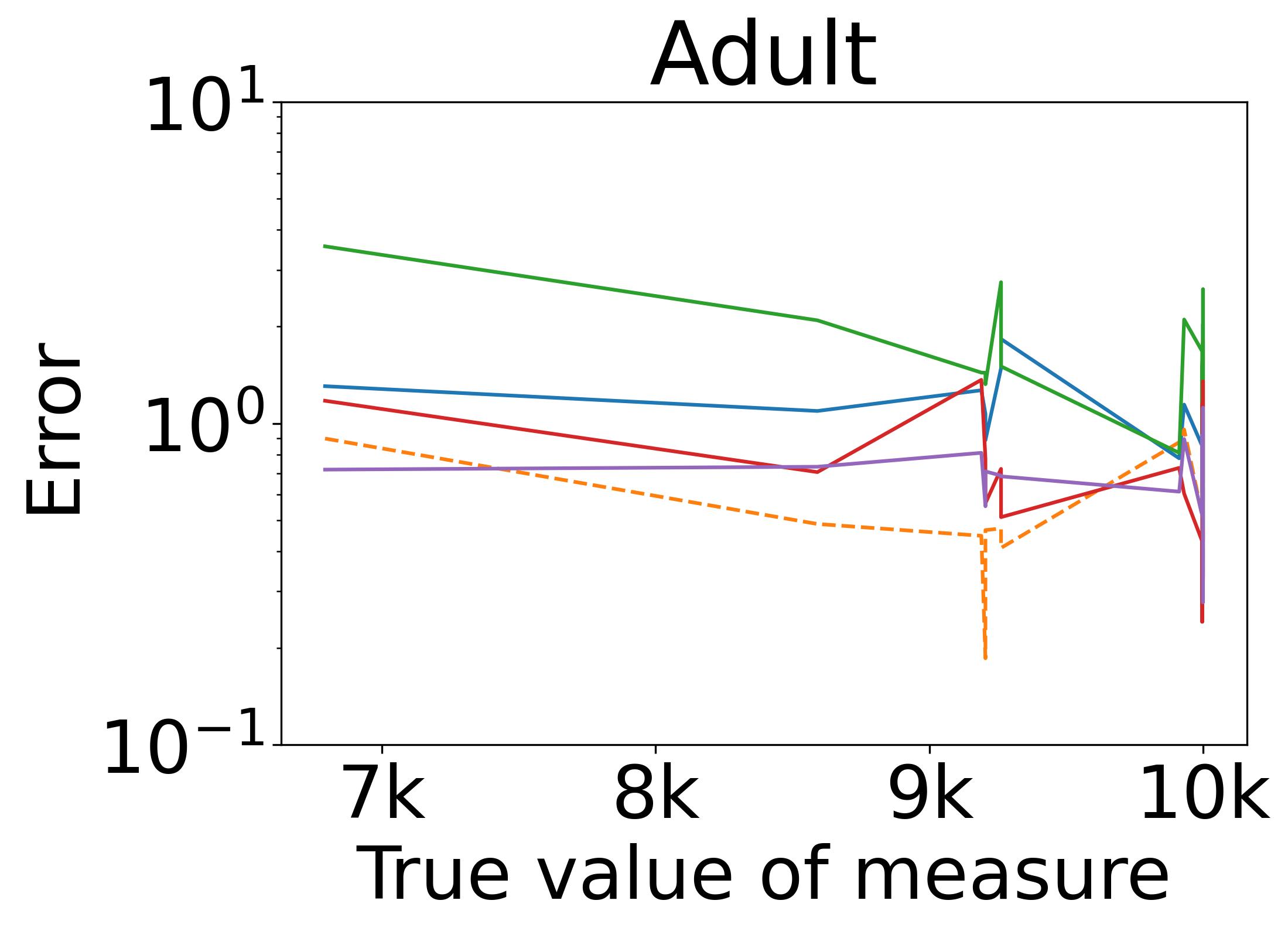}
        \includegraphics[width=\textwidth]{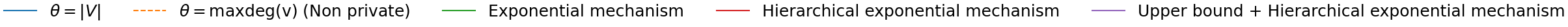}
         \caption{$\problematic$ (Positive degree nodes)}
         \label{fig:comparing_strategies_pdnodes}
     \end{subfigure}
     \begin{subfigure}[b]{\textwidth}
         \centering
         \includegraphics[width=0.19\textwidth]{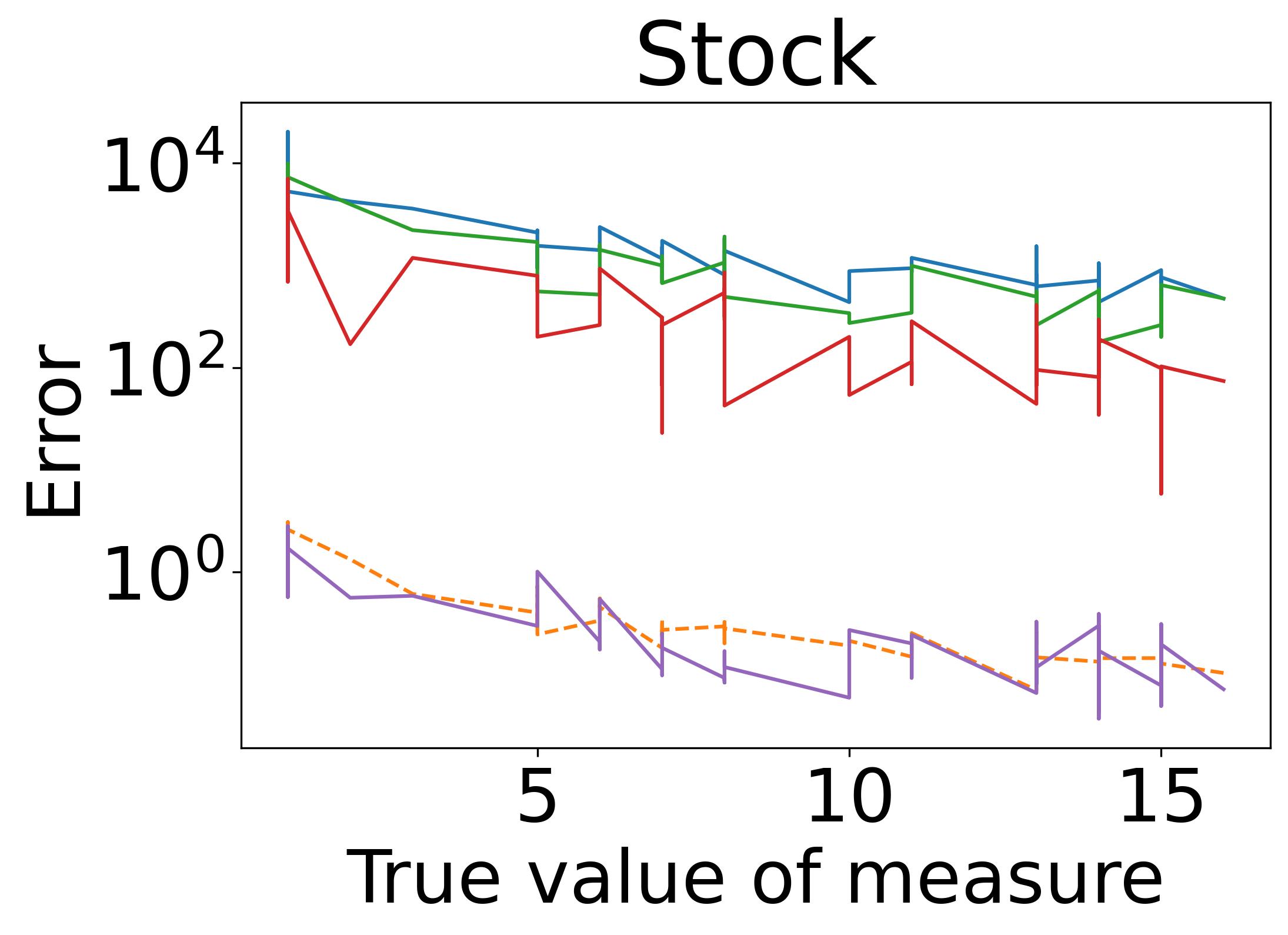}
         \hfill
         \includegraphics[width=0.19\textwidth]{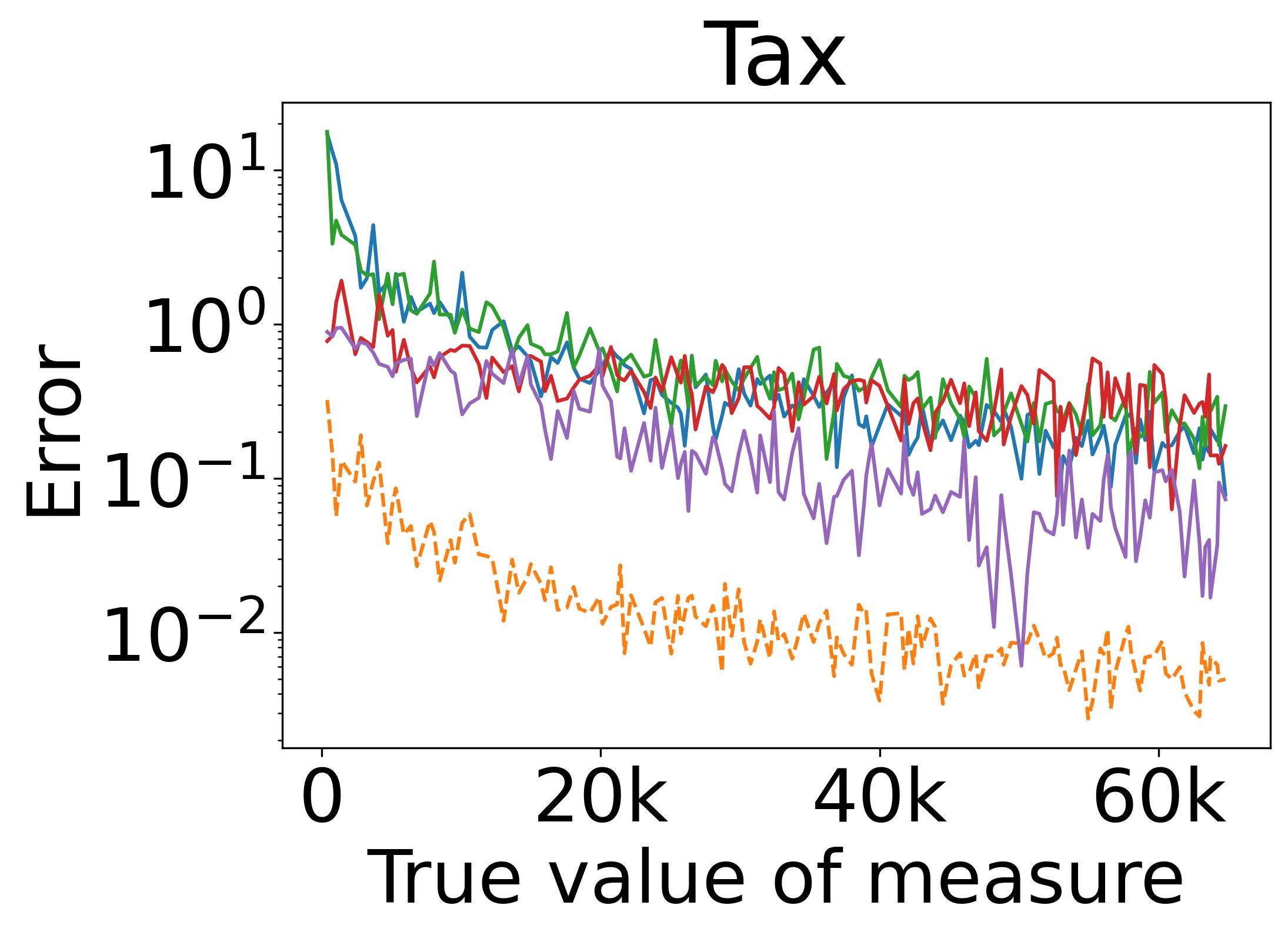}
         \hfill
         \includegraphics[width=0.19\textwidth]{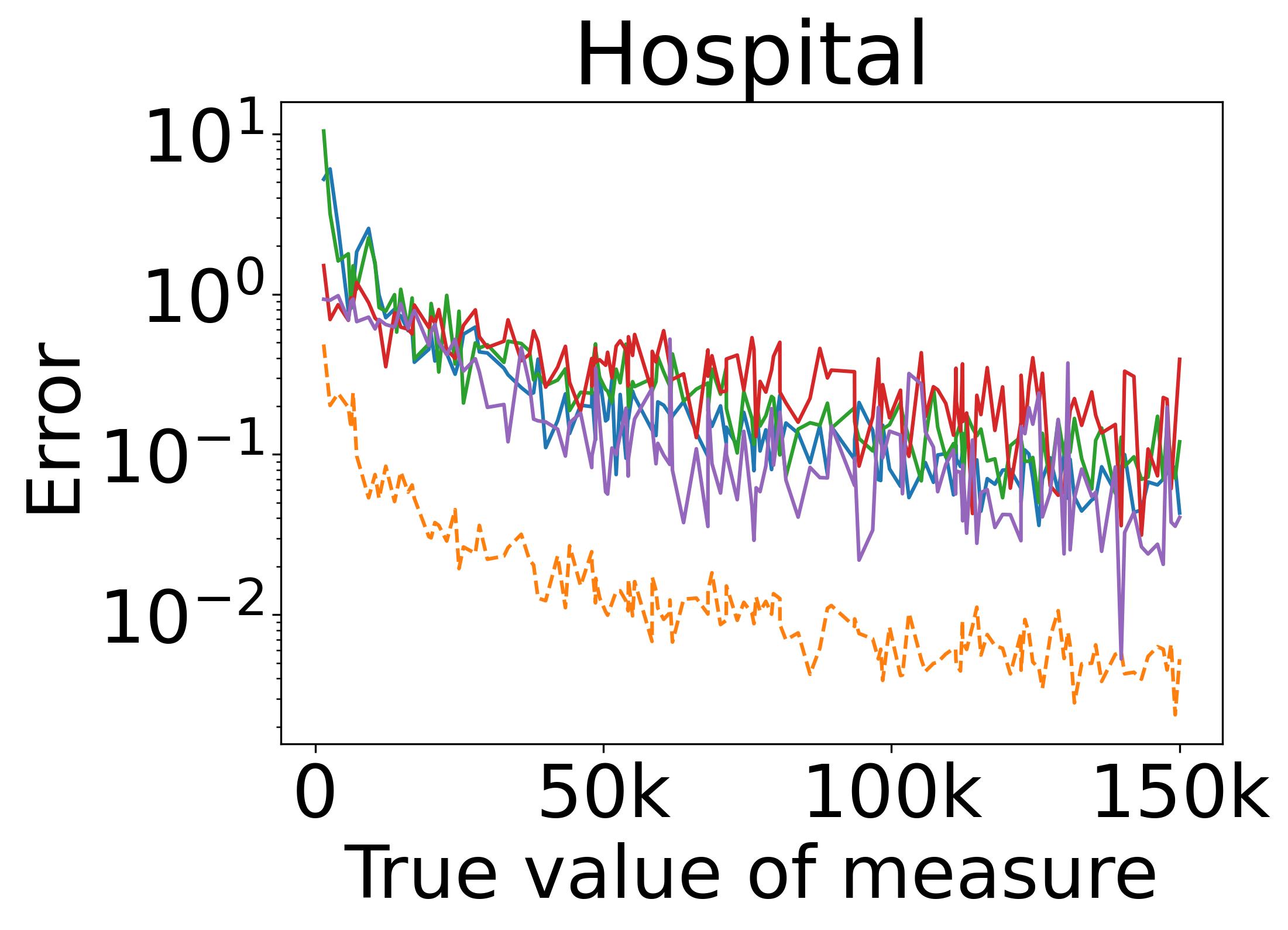}
         \hfill
         \includegraphics[width=0.19\textwidth]{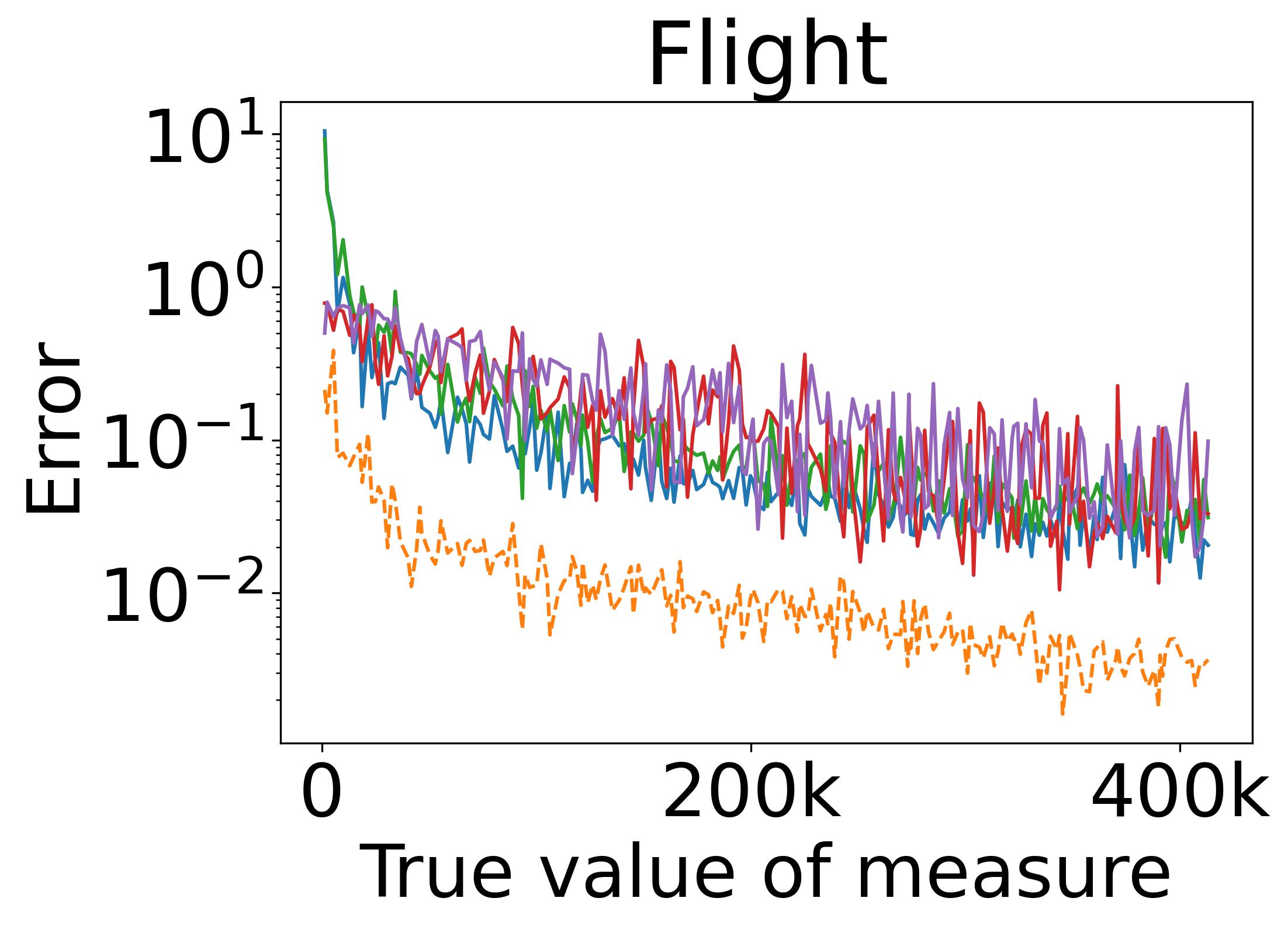}
        \hfill
         \includegraphics[width=0.19\textwidth]{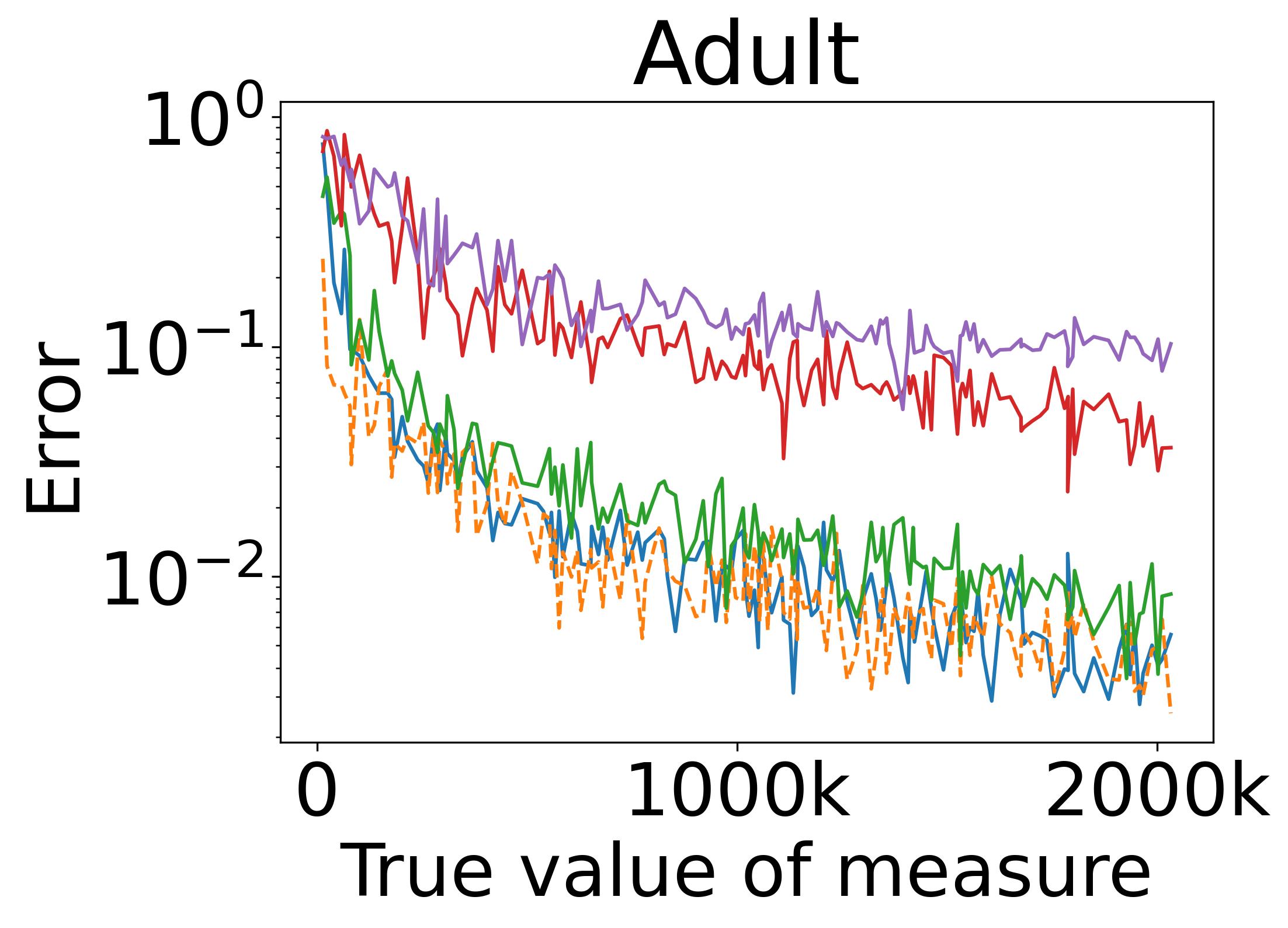}
        \includegraphics[width=\textwidth]{images/legend_1.png}
         \caption{$\mininconsistency$ (Number of edges)}
         \label{fig:comparing_strategies_nedges}
     \end{subfigure}
     \caption{Computing different strategies for choosing $\theta$ for all datasets with RNoise at $\alpha=0.01$ and $\epsilon=1$. The datasets are arranged according to their densities from sparsest (left) to densest (right). }
     \label{fig:comparing_strategies}
\end{figure*}

\paratitle{True vs private estimation}
In \Cref{fig:tp_RNoise,fig:tp_conoise}, we plot the true vs. private estimates at $\epsilon=1$ for all datasets with RNoise ($\alpha=0.01$) and CONoise (200 iterations) respectively. The datasets are ordered according to their densities from left to right. Each figure contains the measured value ($\mininconsistency$, $\problematic$, or $\repair$) on the Y-axis and the number of iterations on the X-axis. For the CONoise, the number of iterations is set to 200 for every dataset, and for RNoise, the iterations correspond to the number of iterations required to reach 1\% ($\alpha = 0.01$) number of random violations. The orange line corresponds to the true value of the measure, and the blue line corresponds to the private measure using our approach. For $\mininconsistency$ and $\problematic$ measures, the blue line represents the upper bound + hierarchical exponential mechanism strategy described in Section~\ref{sec:dc_aware} along with its standard deviation in shaded blue. For the $\repair$ measure, it represents the private minimum vertex cover size algorithm. We also add a baseline approach using a state-of-the-art private SQL approach called R2T~\cite{dong2022r2t}. We add this baseline only for the $\mininconsistency$ measure as $\repair$ cannot be written with SQL, and $\problematic$ requires the DISTINCT/GROUP BY clause that R2T does not support. Based on the experiments, we draw three significant observations. 

First, compared to the SQL baseline (R2T), our approach has a better relative error on average across all datasets. However, R2T is slightly behind for moderate to high dense datasets such as Tax ($0.207$ vs. $0.334$), Hospital ($0.209$ vs. $0.386$), and Flight($0.202$ vs. $0.205$) and Adult ($0.187$ vs. $0.269$) but \reva{falls short} for sparse datasets such as Stock ($0.492$ vs. $137.05$). This is because the true value of the measure is small, and R2T adds large amounts of noise. 

Second, our approach for the $\mininconsistency$ and $\problematic$ fluctuates more and has a higher standard deviation compared to the $\repair$ measure. This is because of the privacy noise due to the relatively high sensitivity of our upper bound + hierarchical exponential mechanism approach.
On the other hand, the vertex cover size approach for $\repair$ has a sensitivity equal to $2$ and, therefore, does not show much fluctuation when the true measure value is large enough. 

Third, we observe that our approach generally performs well across all five datasets and all inconsistency measures. The $\mininconsistency$ and $\problematic$ measures have average errors of $0.25$ and $0.46$, respectively, across all datasets where Stock is the worst performing dataset for $\mininconsistency$ and Adult is the worst performing for $\problematic$. The $\repair$ performs the best with an average error of $0.08$, with Stock as the worst-performing dataset. We investigate the performance of each dataset in detail in our next experiment and find out that the density of the graphs plays a significant role in the performance of our algorithms.

\paratitle{Comparing different strategies for choosing $\theta$}
In \Cref{fig:comparing_strategies}, we present the performance of different algorithm variations in computing $\problematic$ and $\mininconsistency$ for all datasets using RNnoise at $\alpha=0.01$ and $\epsilon=1$. The y-axis in each figure shows the logarithmic scaled error, while the x-axis displays the actual measure value, with different colors representing the strategies. The graphs are ordered from most sparse (Stock) to least sparse (Adult) to compare methods for choosing the $\theta$ value at $\epsilon = 1$ ($\epsilon_1 = 0.4$, $\epsilon_2=0.6$). The methods include all variations described in the algorithm variation section.
\eat{
: \xh{we can drop this part}
\begin{itemize}
 \item Baseline 1, $\theta = |V|$  (blue)
    \item Baseline 2, $\theta = \degree_{\max}(\graph)$ (orange dashed)
    \item Exponential mechanism (EM) with $\epsilon_1 = 0.4$ (green)
    \item Hierarchical EM with $\epsilon_1 = 0.4$ (red)
    \item Upper bound + hierarchical EM with $\epsilon_0 = 0.1$ and $\epsilon_1= 0.3$ (purple)
\end{itemize}}

We note all the strategies are private except baseline 2 (orange dash line) that sets $\theta$ as the true maximum degree of the conflict graph. 

Our experimental results, based on error trends and graph density, reveal several key observations.

First, we consistently observed that the initial error was higher at smaller iterations across all five datasets and inconsistency measures. This is because, at smaller iterations, the true value of the measures is small due to fewer violations, and the privacy noise dominates the signal of true value.

Second, for the sparsest dataset (Stock), all strategies have errors of magnitude 3-4 larger, except the non-private baseline (orange) and our approach using both upper bound and hierarchical exponential mechanism (purple). This is because the candidate set contains many large candidates, and it is crucial to prune it using the upper bound strategy to get meaningful results. 

Third, for the moderately sparse graphs (Tax and Hospital), our approach consistently (purple) consistently outperformed other private methods.  However, the two-step hierarchical exponential mechanism (red), which had a $3$ magnitude higher error for Stock, demonstrated comparable performance within a 1-magnitude error difference for Tax and Hospital.  This suggests that when the true max degree is not excessively low, estimating it without the upper bound strategy can be effective.

Finally, for the densest graphs (Flight and Adult), we observe that the optimized exponential mechanisms (red and purple) outperform the private baselines (blue and green) for the $\problematic$ measure (nodes with positive degree) plots (above). However, they fail to beat even the naive baseline (blue) for the $\mininconsistency$ (number of edges) measure (below). This is because the optimal degree bound value $\mininconsistency$ over the dense graphs is close to the largest possible value $|V|$. For such a case, our optimized EM is not able to prune too many candidates and lower the sensitivity, and hence, it wastes some of the privacy budget in the pruning process. However, the relative errors of all the algorithms are reasonably small for dense graphs, and the noisy answers do preserve the order of the true measures (shown in previous experiments in Figures~\ref{fig:tp_RNoise} and \ref{fig:tp_conoise}).

\paratitle{Varying privacy budget}
Figure~\ref{fig:varying_eps} illustrates how our algorithms perform at $\epsilon \in [0.1, 0.2, 0.5, 1.0, 2.0, 3.0, 5.0]$ with varying privacy budgets. The rightmost figure for the repair measure $\repair$ has a log scale on the y-axis for better readability. We experiment with three datasets of different density properties (sparsest Stock and densest Flight) and show that our algorithm gracefully scales with the $\epsilon$ privacy budget for all three inconsistency measures. We also observe that the algorithm has a more significant error variation at smaller epsilons ($<1$) except for Stock, which has a larger variation across all epsilons. This happens when the true value for this measure is small, and adding noise at a smaller budget ruins the estimate drastically. For the $\repair$ measure, the private value reaches with $0.05$ error at $\epsilon=3$ for Stock and as early as $\epsilon=0.1$ for others. 

\begin{figure}
    \centering
    \includegraphics[width=\linewidth]{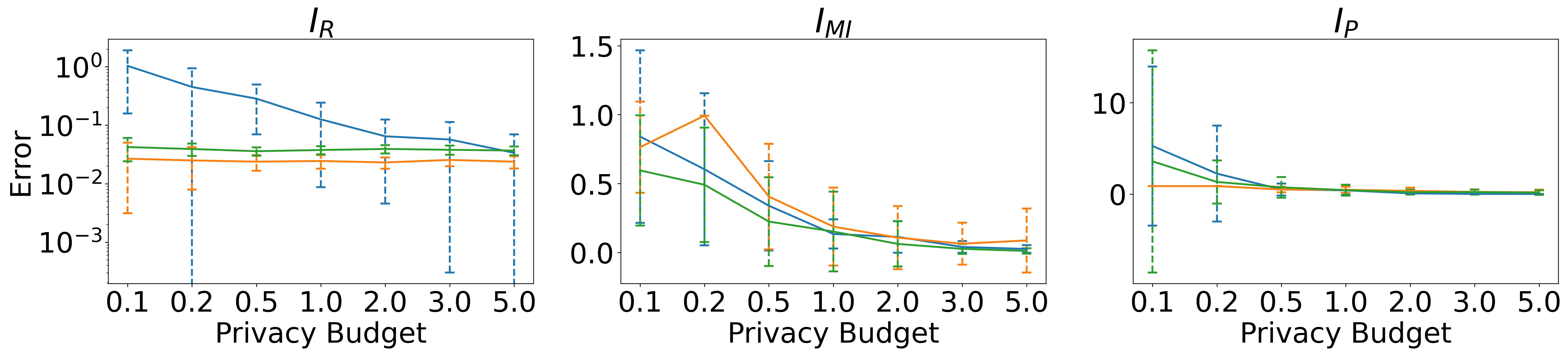}
    \includegraphics[width=0.5\linewidth]{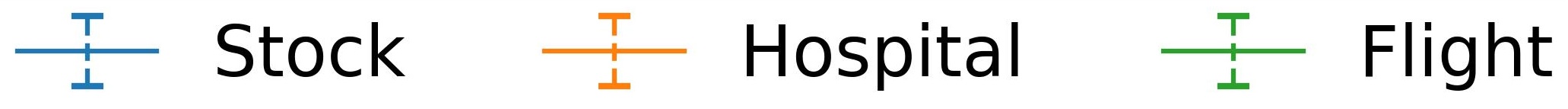}
    \caption{Computing inconsistency measures for different datasets with RNoise at $\alpha = 0.005$ and varying privacy budgets. $\repair$ plot has a y-axis in the log scale.}
    \label{fig:varying_eps}
\end{figure}

\revc{
\paratitle{Runtime and scalability analysis}
Figure~\ref{fig:runtime} presents the runtime of our methods for each measure. We fix the privacy budget $\epsilon=1$ and run three experiments by varying the graph size, numbers of DCs, and dataset. For the first experiment, we use our largest dataset, Tax, and vary the number of nodes from $10^2$ to $10^6$. RNoise uses  $\alpha=0.005$ in the left plot and $\alpha=0.01$ in the center plot. We observe that the number of edges scales exponentially when we increase the number of nodes, and the time taken by our algorithm is proportional to the graph size. With a graph of $10^2$ nodes and $\leq 10$ edges, our algorithm takes $10^{-3}$ seconds and goes up to $4500$ seconds with $10^6$ nodes and $322$ million edges. We omit the experiment with $10^6$ nodes and $\alpha=0.01$ as the graph size for this experiment went over 30GB and was not supported by the pickle library we use to save our graph. This is not an artifact of our algorithm and can be scaled in the future using other graph libraries. 

For our second experiment, we choose a subset of $10k$ rows of the Flight dataset and vary the number of DCs to $13$ with $\alpha=0.005$. With one DC and $\alpha=0.005$, our algorithm takes approximately 5 seconds for $\mininconsistency$ and $\problematic$ and $\leq 1$ second for $\repair$, and goes up to $25$ seconds and $5$ seconds, respectively, for $\alpha=0.065$ and 13 DCs. We also notice some dips in the trend line (e.g., at 10 and 13 constraints) because the exponential mechanism chooses larger thresholds at those points, and the edge addition algorithm takes slightly longer with chosen thresholds. Our third experiment on varying datasets behaves similarly and is deferred to Appendix A.4 in the full version~\cite{full_paper} for lack of space.  
}
\begin{figure}
    \centering
    \includegraphics[width=\linewidth]{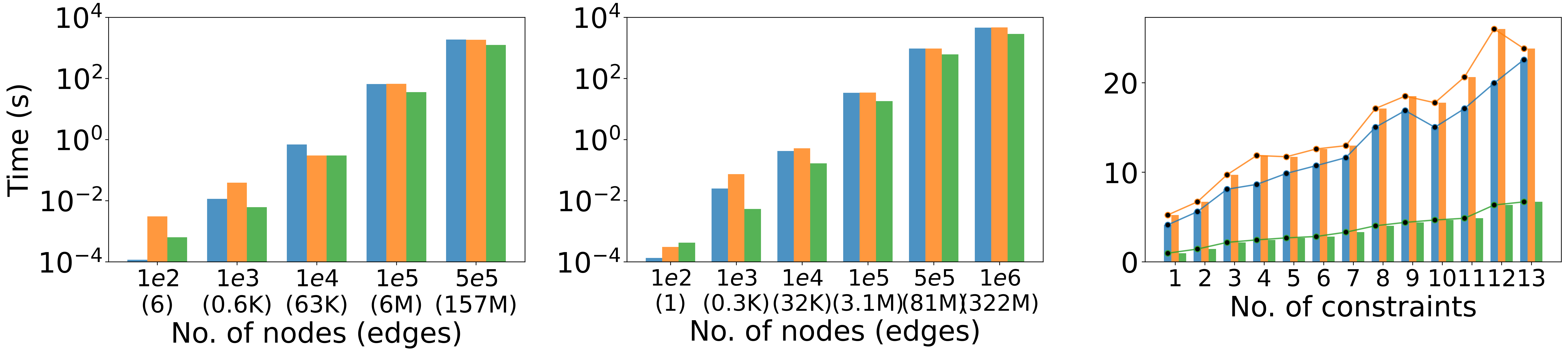}
    \includegraphics[width=0.4\linewidth]{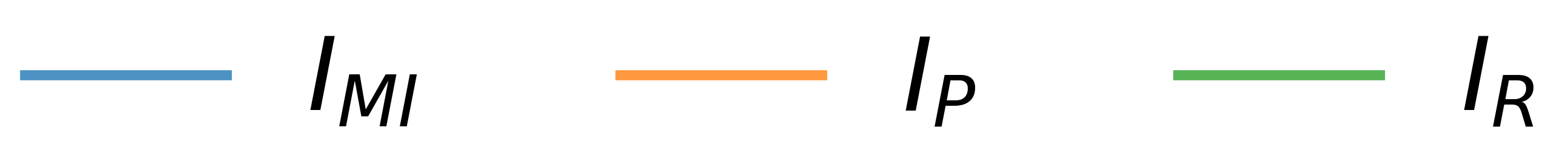}
    \caption{\revc{Runtime analysis for all measures. Varying graph size on Tax dataset with RNoise $\alpha=0.005$ (left) and $\alpha=0.01$ (center) and varying \#DCs for Flight dataset (right). $\mininconsistency$ and $\problematic$ plots have y-axis in the log scale.}}
    \label{fig:runtime}
\end{figure}

\section{Related Work}\label{sec:related}
We survey relevant works in the fields of DP and data repair.
\paragraph{Differential privacy}
DP has been studied in multiple settings~\cite{johnson2017practical,kotsogiannis2019privatesql,wilson2019differentially,johnson2018towards,DBLP:journals/pvldb/McKennaMHM18,tao2020computing}, including systems that support complex SQL queries that, in particular, can express integrity constraints~\cite{dong2022r2t}. The utilization of graph databases under DP has also been thoroughly explored, with both edge privacy~\cite{hay2009accurate, karwa2011private,karwa2012differentially,sala2011sharing,karwa2014differentially,zhang2015private} and node privacy~\cite{KasiviswanathanNRS13,blocki2013differentially,day2016publishing}. 
Our approach draws on \cite{day2016publishing} to allow efficient DP computation of the inconsistency measures over the conflict graph. In contrast, we have seen worse performance from alternative approaches for releasing graph statistics that tend to truncate edges or nodes aggressively. 
In the context of data quality, previous work~\cite{KrishnanWFGK16} has proposed a framework for releasing a private version of a database relation for publishing, supporting specific repair operations, while more recent, work~\cite{GeMHI21} provides a DP synthetic data generation mechanism that considers soft DCs~\cite{ChomickiM05}.

\paragraph{Data repair}
Various classes of constraints were proposed in the literature, including FDs, conditional FDs~\cite{bohannon2007conditional}, metric FDs~\cite{koudas2009metric}, and DCs~\cite{ChomickiM05}. We focused on DCs, a general class of integrity constraints that subsumes the aforementioned constraints. 
While computing the minimal data repair in some cases has been shown to be polynomial time~\cite{LivshitsKR20}, computing the minimal repair in most general cases, corresponding to \repair, is NP-hard. 
Therefore, a prominent vein of research has been devoted to utilizing these constraints for data repairing~\cite{Afrati2009,RekatsinasCIR17,ChuIP13,BertossiKL13,FaginKK15,LivshitsKR20,GiladDR20,GeertsMPS13}. The repair model in these works varies between several options: tuple deletion, cell value updates, tuple addition, and combinations thereof. 
The process of data repairing through such algorithms can be time consuming due to the size of the data and the size and complexity of the constraint set. 
Hence, previous work~\cite{LivshitsBKS20,LivshitsKTIKR21} proposed to keep track of the repairing process and ensure that it progresses correctly using inconsistency measures. 
In our work, we capitalize on the suitability of these measures to DP as they provide an aggregate form that summarizes the quality of the data for a given set of constraints.  

\common{
\section{Future Work}\label{sec:future}
This paper analyzes a novel problem of computing inconsistency measures privately. There are many interesting directions for the continuation of this work. This paper shows a naive threshold bound for general DCs that can be improved for better performance of our algorithm. Our proposed conflict graphs algorithm is intractable for the $\drastic$ and $\maxconsistency$ measures that other heuristic or approximation-based approaches may solve in the future. The vertex cover size algorithm using the stable ordering of edges is general purpose and may be used in other problems outside of inconsistency measures. It can also be analyzed further in future work to return the vertex cover set. Another interesting direction is to develop these measures in a multi-relational database setting. Our approach can be extended to multi-relational tables as long as we can create conflict graphs representing the violations. 
However, in the multi-table setting, we must consider additional constraints that require tackling several challenges. In particular, these challenges may arise when we have non-binary or non-anti-monotonic constraints. Non-binary constraints with more than two tuples participating in a constraint lead to hypergraphs, and constraints like foreign key and inclusion constraints are non-anti-monotonic. Thus, they cannot be represented as conflict graphs, and as such, are outside the scope of our work. 
Furthermore, in the context of DP, constraints on multi-relational tables also have implications for defining neighboring datasets and sensitivity that must be carefully considered. Our future work also includes studying the problem of private inconsistency measures with different privacy notions, such as k-anonymity, and using these private inconsistency measures in real-world data cleaning applications.
}

\section{Conclusions}\label{sec:conclusions}
We proposed a new problem of inconsistency measures for private datasets in the DP setting. We studied five measures and showed that two are intractable with DP, and the others face a significant challenge of high sensitivity. To solve this challenge, we leveraged the dataset's conflict graph and used graph projection and a novel approximate DP vertex cover size algorithm to accurately estimate the private inconsistency measures. We found that parameter selection was a significant challenge and were able to overcome it using optimization techniques based on the constraint set. To test our algorithm, we experimented with five real-world datasets with varying density properties and showed that our algorithm could accurately calculate these measures across all datasets.

\begin{acks}
Shubhankar Mohapatra was supported by an Ontario graduate, vector graduate research, and David R. Cheriton scholarships. The work of Xi He was supported by NSERC through a Discovery Grant, an alliance grant, and the Canada CIFAR AI Chairs program. The work of Amir Gilad was funded by the Israel Science Foundation (ISF) under grant 1702/24, the Scharf-Ullman Endowment, and the Alon Scholarship. 
\end{acks}


\newpage
\balance
\bibliographystyle{ACM-Reference-Format}
\bibliography{bibtex}

\ifpaper

\else
\newpage 
\appendix

\section{Theorems and Proofs}
\subsection{Proof for \cref{prop:sensitivity}}
\label{app:proof_sensitivity}
Recall the proposition states that given a database $D$ and a set of DCs $\constraintset$, where $|D|=n$, the following holds:
\begin{enumerate}
    \item The global sensitivity of \drastic\ is 1. 
    \item The global sensitivity of 
    \mininconsistency\ is $n$.       
    \item The global sensitivity of \problematic\ is $n$.
    \item The global sensitivity of \maxconsistency\ is exponential in $n$.
    \item The global sensitivity of \repair\ is 1.
\end{enumerate}
\begin{proof}
Consider two neighbouring datasets, $D$ and $D'$.
\paragraph{\drastic}
Adding or removing one tuple from the dataset will affect the addition or removal of all conflicts related to it in the dataset. In the worst case, this tuple could remove all conflicts in the dataset, and the $\drastic$ could go from 1 to 0 or vice versa.
\paragraph{\mininconsistency and \problematic}
The $\mininconsistency$ and $\problematic$ are concerned with the set of minimally inconsistent subsets $MI_\constraintset(D)$. The $\mininconsistency$ measure computes the total number of inconsistent subsets $|MI_\constraintset(D)|$ and $\problematic$ computes the total number of unique rows participating in $MI_\constraintset(D)$, $|\cup MI_\constraintset(D)|$. Now, without loss in generality, let's assume $D'$ has an additional tuple compared to $D$. In the worst case, the extra row could violate all other rows in the dataset, adding $|D|$ inconsistent subsets to $MI_\constraintset(D)$. Therefore, in the worst case, the $\mininconsistency$ measure and $\problematic$ could change by $|D|$.
\paragraph{\maxconsistency}
The \maxconsistency\ measure is \#P-complete and can be computed for a conflict graph $\graph$ if the dataset has only FDs in the constraint set $\constraintset$ and $\graph$ is $P_4$-free~\cite{KimelfeldLP20}. The maximum number of maximal independent sets~\cite{moon1965cliques, GriggsGG88} $f(n)$ for a graph with $n$ vertices is given by :  If $n \geqq 2$, then $f(n)= \begin{cases}3^{n / 3}, & \text { if } n \equiv 0(\bmod 3) \text {; } \\ 4.3^{[n / 3]-1}, & \text { if } n \equiv 1(\bmod 3) \text {; } \\ 2.3^{[n / 3]}, & \text { if } n \equiv 2(\bmod 3) .\end{cases}$

Using the above result, we can see that adding or removing a node in the graph can affect the total number of maximal independent sets in the order of $3^{n}$.
\paragraph{\repair}
Without loss in generality, let's assume $D'$ has an additional tuple compared to $D$. This extra row may or may not add extra violations. However, repairing this extra row in $D'$ will always remove these added violations. Therefore, \repair\ for $D'$ can only be one extra than $D$.
\end{proof} 

\subsection{DP Analysis for  $\mininconsistency$ and $\problematic ($\cref{algo:graph_general})}\label{app:graph_general}
\subsubsection{Proof for \cref{thm:privacy_proof_dc_oblivious}}\label{app:privacy_proof_dc_oblivious}
The theorem states that \cref{algo:graph_general} satisfies $(\epsilon_1 + \epsilon_2)$-node DP for $\graph$ and $(\epsilon_1 + \epsilon_2)$-DP for the input database $D$.
\begin{proof}
The proof is straightforward using the composition property of DP~\cref{prop:DP-comp-post}.
\end{proof}

\subsubsection{Proof for \cref{lemma:sens_quality}}\label{app:sens_quality}
\cref{lemma:sens_quality} states that the sensitivity of the quality function $q_{\epsilon_2}(\mathcal{G}, \theta_i)$ in Algorithm~\ref{algo:expo_mech_basic} defined in Equation~\eqref{eq:quality_function} is $\theta_{\max}=\max(\Theta)$. 

\begin{proof}
We prove the lemma for the $\mininconsistency$ measure and show that it is similar for $\problematic$. Let us assume that $\mathcal{G}$ and $\mathcal{G}'$ are two neighbouring graphs and $\mathcal{G}'$ has one extra node $v^*$. 
    \begin{equation*}
        \begin{split}
            &\|q_{\epsilon_2}(\mathcal{G}, \theta) - q_{\epsilon_2}(\mathcal{G}^\prime, \theta)\| \leq -|\mathcal{G}_{\theta_{\max}}.E| + |\mathcal{G}_{\theta}.E| - \sqrt{2}\frac{\theta}{\epsilon_1} \\ &+ |\mathcal{G}'_{\theta_{\max}}.E| - |\mathcal{G}'_{\theta}.E| + \sqrt{2}\frac{\theta}{\epsilon_1} \\
            &\leq \left(|\mathcal{G}'_{\theta_{\max}}.E| - |\mathcal{G}_{\theta_{\max}}.E|\right) - \left(|\mathcal{G}'_\theta.E| - |\mathcal{G}_\theta.E|\right)\\
            &\leq \theta_{\max} - \left(|\mathcal{G}'_\theta.E| - |\mathcal{G}_\theta.E|\right) 
            \leq \theta_{\max}    
        \end{split}
    \end{equation*}
    The second last inequality is due to Lemma~\ref{lemma:sens_lap} that states that $|\mathcal{G}'_{\theta_{max}}.E| - |\mathcal{G}_{\theta_{max}}.E| \leq \theta_{max}$. The last inequality is because $|\mathcal{G}'_\theta.E| \geq |\mathcal{G}_\theta.E|$. Note that the neighboring graph $\mathcal{G}'$ contains all edges of $\mathcal{G}$ plus extra edges of the added node $v^*$. Due to the stable ordering of edges in the edge addition algorithm, each extra edge of $v^*$ either substitutes an existing edge or is added as an extra edge in $\mathcal{G}_\theta$. Therefore, the total edges $|\mathcal{G}'_\theta.E|$ is equal or larger than $|\mathcal{G}_\theta.E|$. We elaborate this detail further in the proof for Lemma~\ref{lemma:sens_lap}. For the $\problematic$ measure, the term in the last inequality changes to $|\mathcal{G}'_\theta.V_{>0}| - |\mathcal{G}_\theta.V_{>0}|$ and is also non-negative because $\mathcal{G}'$ contains an extra node that can only add and not subtract from the total number of nodes with positive degree.
\end{proof}

\subsubsection{Proof for ~\cref{lemma:sens_lap}}
This lemma states that the sensitivity of $f\circ\pi_\theta(\cdot)$ in Algorithm~\ref{algo:graph_general} is $\theta$, where $\pi_\theta$ is the edge addition algorithm with the user input $\theta$, and $f(\cdot)$ return returns edge count for $\mininconsistency$ and the number of nodes with positive degrees for $\problematic$.

\begin{proof}
Let's assume without loss of generality that
$\mathcal{G}^{\prime}=\left(V^{\prime}, E^{\prime}\right)$ has an additional node $v^{+}$compared to $\mathcal{G}=$ $(V, E)$, i.e., $V^{\prime}=V \cup\left\{v^{+}\right\}, E^{\prime}=E \cup E^{+}$, and $E^{+}$is the set of all edges incident to $v^{+}$in $\mathcal{G}^{\prime}$. We prove this lemma separately for $\mininconsistency$ and $\problematic$.

\paragraph{For \mininconsistency}
Let $\Lambda^{\prime}$ be the stable orderings for constructing $\pi_\theta\left(\mathcal{G}^{\prime}\right)$, and $t$ be the number of edges added to $\pi_\theta\left(\mathcal{G}^{\prime}\right)$ that are incident to $v^{+}$. Clearly, $t \leq \theta$ because of the $\theta$-bounded algorithm. Let $e_{\ell_1}^{\prime}, \ldots, e_{\ell_t}^{\prime}$ denote these $t$ edges in their order in $\Lambda^{\prime}$. Let $\Lambda_0$ be the sequence obtained by removing from $\Lambda^{\prime}$ all edges incident to $v^{+}$, and $\Lambda_k$, for $1 \leq k \leq t$, be the sequence obtained by removing from $\Lambda^{\prime}$ all edges that both are incident to $v^{+}$and come after $e_{\ell_k}^{\prime}$ in $\Lambda^{\prime}$. Let $\pi_\theta^{\Lambda_k}\left(\mathcal{G}^{\prime}\right)$, for $0 \leq k \leq t$, be the graph reconstructed by trying to add edges in $\Lambda_k$ one by one on nodes in $\mathcal{G}^{\prime}$, and $\lambda_k$ be the sequence of edges from $\Lambda_k$ that are actually added in the process. Thus $\lambda_k$ uniquely determines $\pi_\theta^{\Lambda_k}\left(\mathcal{G}^{\prime}\right)$; we abuse the notation and use $\lambda_k$ to also denote $\pi_\theta^{\Lambda_k}\left(\mathcal{G}^{\prime}\right)$. We have $\lambda_0=\pi_\theta(\mathcal{G})$, and $\lambda_t=\pi_\theta\left(\mathcal{G}^{\prime}\right)$.

In the rest of the proof, we show that $\forall k$ such that $1 \leq k \leq t$, at most 1 edge will differ between $\lambda_k$ and $\lambda_{k-1}$. This will prove the lemma because there are at most $t$ (upper bounded by $\theta$) edges that are different between $\lambda_t$ and $\lambda_0$.

To prove that any two consecutive sequences differ by at most 1 edge, let's first consider how the sequence $\lambda_k$ differs from $\lambda_{k-1}$. Recall that by construction, $\Lambda_k$ contains one extra edge in addition to $\Lambda_{k-1}$ and that this edge is also incident to $v^*$. Let that additional differing edge be $e_{\ell_k}^\prime = (u_j, v^+)$. In the process of creating the graph $\pi_\theta^{\Lambda_k}(\mathcal{G}^{\prime})$, each edge will need a decision of either getting added or not. The decisions for all edges coming before $e_{\ell_k}^{\prime}$ in $\Lambda^{\prime}$ must be the same in both $\lambda_k$ and $\lambda_{k-1}$. Similarly, after $e_{\ell_k}^{\prime}$, the edges in $\Lambda_k$ and $\Lambda_{k-1}$ are exactly the same. However, the decisions for including the edges after $e_{\ell_k}^{\prime}$ may or may not be the same. Assuming that there are a total of $s \geq 1$ different decisions, we will observe how the additional edge $e_{\ell_k}^{\prime}$ makes a difference in decisions.

When $s=1$, the only different decision must be regarding differing edge $e_{\ell_k}^\prime = (u_j, v^+)$ and that must be including that edge in the total number of edges for $\lambda_k$. Also note that due to this addition, the degree of $u_j$ gets added by 1 which did not happen for $\lambda_{k-1}$. When $s>1$, the second different decision must be regarding an edge incident to $u_j$ and that is because degree of $u_j$ has reached $\theta$, and the last one of these, denoted by $(u_j, u_{i \theta})$ which was added in $\lambda_{k-1}$, cannot be added in $\lambda_k$. In this scenario, $u_j$ has the same degree (i.e., $\theta$ ) in both $\lambda_k$ and $\lambda_{k-1}$. Now if $s$ is exactly equal to 2, then the second different decision must be not adding the edge $(u_j, u_{i \theta})$ to $\lambda_k$. Again, note here that as $(u_j, u_{i \theta})$ was not added in $\lambda_k$ but was added in $\lambda_{k-1}$, there is still space for one another edge of $u_{i \theta}$. If $s>2$, then the third difference must be the addition of an edge incident to $u_{i \theta}$ in $\lambda_k$. This process goes on for each different decision in $\lambda_k$ and $\lambda_{k-1}$. Since the total number of different decisions $s$ is finite, this sequence of reasoning will stop with a difference of at most 1 in the total number of the edges between $\lambda_{k-1}$ and $\lambda_k$.

\paragraph{For \problematic}
Assume, in the worst case, the graph $\mathcal{G}$ is a star graph with $n$ nodes such that there exists an internal node that is connected to all other $n-1$ nodes. In this scenario, there are no nodes that have 0 degrees, and the $\problematic$ measure $= n-0 = 0$. If the neighbouring graph $\mathcal{G}^\prime$ differs on the internal node, all edges of the graph are removed are the $\problematic = n$. The edge addition algorithm $\pi_\theta$ would play a minimal role here as $\theta$ could be equal to $n$.
\end{proof}

\subsubsection{Proof for \cref{thm:privacy_proof_dc_aware}}
\label{app:privacy_proof_dc_aware}

    Algorithm~\ref{algo:graph_general} with the optimized EM in  Algorithm~\ref{algo:em_opt} satisfies $(\epsilon_1 + \epsilon_2)$-DP.

\begin{proof}
    The total privacy budget is split in two ways for optimization and measure computation. These budgets can be composed using \cref{prop:DP-comp-post}. 
\end{proof}

\subsubsection{Proof for \cref{lemma:sens_quality_2stepEM}}
\label{app:sens_quality_2stepEM}
This lemma states that the sensitivity of $q_{\epsilon_2}(\mathcal{G}, \theta_i)$ in the 2-step EM (Algorithm~\ref{algo:em_opt})
defined in Equation~\eqref{eq:quality_function} is $\theta_{\max}= \min(\noisydegreebound,|V|)$ for 1st EM step and $\theta_{\max}=\theta^*$ for the 2nd EM step.

\begin{proof}
Using Lemma~\ref{lemma:sens_quality}, we know that the sensitivity of quality function $q_{\epsilon_2}(\mathcal{G}, \theta_i)$ is given by the max over all candidates in $\max(\Theta)$. For the first step of the 2-step EM, the maximum candidate apart from $|V|$ is given by $\min(\noisydegreebound,|V|)$. For the candidate $|V|$, the quality function $q$ differs. It only depends on the Laplace error $\frac{\sqrt{2}|V|}{\epsilon_2}$ and has no error from $e_{\text{bias}}$ as no edges are truncated due to $|V|$. For the 2nd step of EM, we truncate all values in the set greater than the output of the first step, i.e., $\theta^*$. Therefore, the sensitivity becomes $\theta^*$.
\end{proof}

\subsubsection{Utility Analysis for \cref{algo:graph_general}}\label{app:graph_general_utility}
\cref{thm:graph_general_utility} states that on any database instance $D$ and its respective conflict graph $\graph$, let $o$ be the output of Algorithm~\ref{algo:graph_general} with Algorithm~\ref{algo:expo_mech_basic} over $D$.  
Then,  with a probability of at least $1-\beta$, we have 
\begin{equation}
|o-a| \leq -\tilde{q}_{\opt}(D,\epsilon_2) + \frac{2 \theta_{\max}}{\epsilon_1} (\ln \frac{2|\Theta|}{|\Theta_{\opt}|\cdot \beta}) 
\end{equation}
where $a$ is the true inconsistency measure over $D$ and $\beta\leq \frac{1}{e^{\sqrt{2}}}$.

\begin{proof}
By the utility property of the exponential mechanism~\cite{mcsherry2007mechanism}, with at most probability $\beta/2$, Algorithm~\ref{algo:expo_mech_basic} will sample a bad $\theta^*$ with a  quality value as below
\begin{equation}
    q_{\epsilon_2}(\graph,\theta^*) \leq q_{\opt}(D,\epsilon_2) - \frac{2 \theta_{\max}}{\epsilon_1} (\ln \frac{2|\Theta|}{|\Theta_{\opt}|\beta})
\end{equation}
which is equivalent to 
\begin{equation}\label{eq:goodtheta}
    e_{\text{bias}}(\mathcal{G},\theta^*)  \geq -q_{\opt}(D,\epsilon_2) + \frac{2 \theta_{\max}}{\epsilon_1} (\ln \frac{2|\Theta|}{|\Theta_{\opt}|\beta}) -  \frac{\sqrt{2}\theta^*}{\epsilon_2}.
\end{equation}

With probability $\beta/2$, where $\beta\leq \frac{1}{e^{\sqrt{2}}}$,
we have 
\begin{equation}    
\text{Lap}(\frac{\theta^*}{\epsilon_2}) \geq
      \frac{\ln(1/\beta)\theta^{*}}{\epsilon_2} \geq \frac{\sqrt{2}\theta^*}{\epsilon_2}
      \end{equation}
Then, by union bound, with at most probability $\beta$, we have 
\begin{eqnarray}
   && |o-a| \nonumber\\
       &=& 
|f(\mathcal{G}_{\theta^*})+\text{Lap}(\frac{\theta^*}{\epsilon_2})-a|
            \nonumber  \\
    &\geq& a- f(\mathcal{G}_{\theta^*})+ \frac{\sqrt{2}\theta^*}{\epsilon_2}
 \nonumber \\
    &=& f(\mathcal{G})-f(\mathcal{G}_{\theta^*})+
     \frac{\sqrt{2}\theta^*}{\epsilon_2} \nonumber \\
    &=& f(\mathcal{G})-f(\mathcal{G}_{\theta_{\max}}) +
f(\mathcal{G}_{\theta_{\max}}) - f(\mathcal{G}_{\theta^*})
 +    \frac{\sqrt{2}\theta^*}{\epsilon_2} \nonumber\\    
    &=& f(\mathcal{G})-f(\mathcal{G}_{\theta_{\max}}) +
        e_{\text{bias}}(\mathcal{G},\theta^*)  + \frac{\sqrt{2}\theta^*}{\epsilon_2} \nonumber\\
     &\geq& -q_{\opt}(D,\epsilon_2) + f(\mathcal{G})-f(\mathcal{G}_{\theta_{\max}}) + \frac{2 \theta_{\max}}{\epsilon_1} (\ln \frac{2|\Theta|}{|\Theta_{\opt}|\beta})  \nonumber \\
      &=& -\tilde{q}_{\opt}(D,\epsilon_2) + \frac{2 \theta_{\max}}{\epsilon_1} (\ln \frac{2|\Theta|}{|\Theta_{\opt}|\beta})
\end{eqnarray}

\end{proof}

\subsection{DP Analysis for $\repair$ (\cref{algo:dp_vertexcover})} \label{app:repair}

\subsubsection{Proof for \cref{thm:vertex_cover_priv_util_analysis}}
Algorithm~\ref{algo:dp_vertexcover} satisfies $\epsilon$-node DP and always outputs the size of a 2-approximate vertex cover of graph $\mathcal{G}$.
\begin{proof}
    The privacy analysis of Algorithm~\ref{algo:dp_vertexcover} is straightforward as we calculate the private vertex cover using the Laplace mechanism with sensitivity $2$ according to Proposition~\ref{prop:vertexcover_sens}. The utility analysis can be derived from the original 2-approximation algorithm. The stable ordering $\Lambda$ in Algorithm~\ref{algo:dp_vertexcover} can be perceived as one particular random order of the edges and hence has the same utility as the original 2-approximation algorithm.
\end{proof}

\subsubsection{Proof for \cref{prop:vertexcover_sens}}\label{app:vertext_cover_sensitivity}
Algorithm~\ref{algo:dp_vertexcover} obtains a vertex cover, and its size has a sensitivity of 2.

\begin{proof}
Let's assume without loss of generality that
$\mathcal{G}^{\prime}=\left(V^{\prime}, E^{\prime}\right)$ has an additional node $v^{+}$compared to $\mathcal{G}=$ $(V, E)$, i.e., $V^{\prime}=V \cup\left\{v^{+}\right\}, E^{\prime}=E \cup E^{+}$, and $E^{+}$ is the set of all edges incident to $v^{+}$ in $\mathcal{G}^{\prime}$. We prove the theorem using a mathematical induction on $i$ that iterates over all edges of the global stable ordering $\Lambda$.

\underline{Base}: At step 0, the value of $c$ and $c'$ are both 0.

\underline{Hypothesis}: As the algorithm progresses at each step $i$ when the edge $e_i$ is chosen, either the edges of graph $\mathcal{G}'$ which is denoted by $E'_i$ has an extra vertex or the edge of graph $\mathcal{G}$ has an extra vertex. Thus, we can have two cases depending on some node $v^*$ and its edges $\{v^*\}$. Note that at the beginning of the algorithm, $v^*$ is the differing node $v^+$ and $\mathcal{G}'$ has the extra edges of $v^*/v^+$, but $v^*$ may change as the algorithm progresses. The cases are as illustrated below:
\begin{itemize}
    \item Case 1: $E_i$ does not contain any edges incident to $v^*$, $E'_i = E_i + \{ v^* \}$ and the vertex cover sizes at step $i$ could be $c_i = c'_i$ or $c_i = c'_i + 2$.
    \item Case 2: $E'_i$ does not contain any edges incident to $v^*$, $E_i = E'_i + \{ v^* \}$ and the vertex cover sizes at step $i$ could be $c_i = c'_i$ or $c'_i = c_i + 2$.
    \item Case 3: $E_i=E'_i$ and the vertex cover sizes at step $i$ is $c_i = c'_i$. This case occurs only when the additional node $v^+$ has no edges. 
\end{itemize}

\underline{Induction}: At step $i+1$, lets assume an edge $e_{i+1} = \{u, v\}$ is chosen. Depending on the $i^{th}$ step, we can have 2 cases as stated in the hypothesis.

\begin{itemize}
    \item Case 1 (When $E'_i$ has the extra edges of $v^*$): We can have the following subcases at step $i+1$ depending on $e_{i+1}$.
        \begin{enumerate}[label=\alph*),ref=\alph*]
            \item If the edge is part of $E'_{i}$ but not of $E_i$ ($e_{i+1} \in E'_{i} \setminus E_{i}$): Then $e_{i+1} = \{u, v\}$ should not exist in $E_i$ (according to the hypothesis at the $i$ step) and one of $u$ or $v$ must be $v^*$. Let's assume without loss of generality that $v$ is $v^*$. The algorithm will add $(u, v)$ to $C'$ and update $c'_{i+1} = c_i + 2$. Hence, we have either $c'_{i+1} = c_{i+1}$ or $c'_{i+1} = c + 2$.

            In addition, all edges of $u$ and $v/v^*$ will be removed from $E'_{i+1}$. Thus, we have $E_{i+1} = E'_{i+1} + \{u\}$, where $\{u\}$ represent edges of $u$. Now $u$ becomes the new $v^*$ and moves to Case 2 for the $i+1$ step.  
            
            \item If the edge is part of both $E'_i$ and $E_i$($e_{i+1} \in E'_i$ and $e_{i+1} \in E_i$): In this case $(u,v)$ will be added to both $C$ and $C'$ and the vertex sizes with be updated as $c_{i+1} = c_i + 2$ and $c'_{i+1} = c' + 2$. 

            Also, the edges adjacent to u and v will be removed from $E_i$ and $E'_i$. We still have $E'_i = E + {v^*}$ (the extra edges of $v^*$ and remain in case 1 for step i+1. 

            \item If the edge is part of neither $E'_i$ nor $E_i$ (If $e_{i+1} \in E'_i$ and $e_{i+1} \in E_i$): the algorithm makes no change. The previous state keep constant: $E'_{i+1} = E'_i, E_{i+1} = E_i$ and $c'_{i+1} = c'_i, c_{i+1} = c_i$. The extra edges of $v^*$ are still in $E'_{i+1}$.
        \end{enumerate}
        
    \item Case 2 (When $E_i$ has the extra edges of $v^*$) : This case is symmetrical to Case 1. There will be three subcases similar to Case 1 -- a) in which after the update the state of the algorithm switches to Case 1, b) in which the state remains in Case 2, and c) where no update takes place.  

    \item Case 3 (When $E_i = E'_i$): In this case, the algorithm progresses similarly for both the graphs, and remains in case 3 with equal vertex covers, $c_{i+1} = c'_{i+1}$.
\end{itemize}

Our induction proves that our hypothesis is true and the algorithm starts with Case 1 and either remains in the same case or oscillates between Case 1 and Case 2. Hence as per our hypothesis statement, the difference between the vertex cover sizes are upper bounded by 2.
\end{proof}

\subsection{Deferred experiment}
This experiment is similar to our runtime analysis experiment. In Figure~\ref{fig:time_datasets}, we fix the total number of nodes to 10k and noise to RNoise at $\alpha=0.001$ and vary the dataset. The x-axis is ordered according to the density of the dataset. We observe that the runtime is proportional to the density of the dataset and increases exponentially with the total number of edges in the graph. 
\begin{figure}
    \centering
    \includegraphics[width=0.6\linewidth]{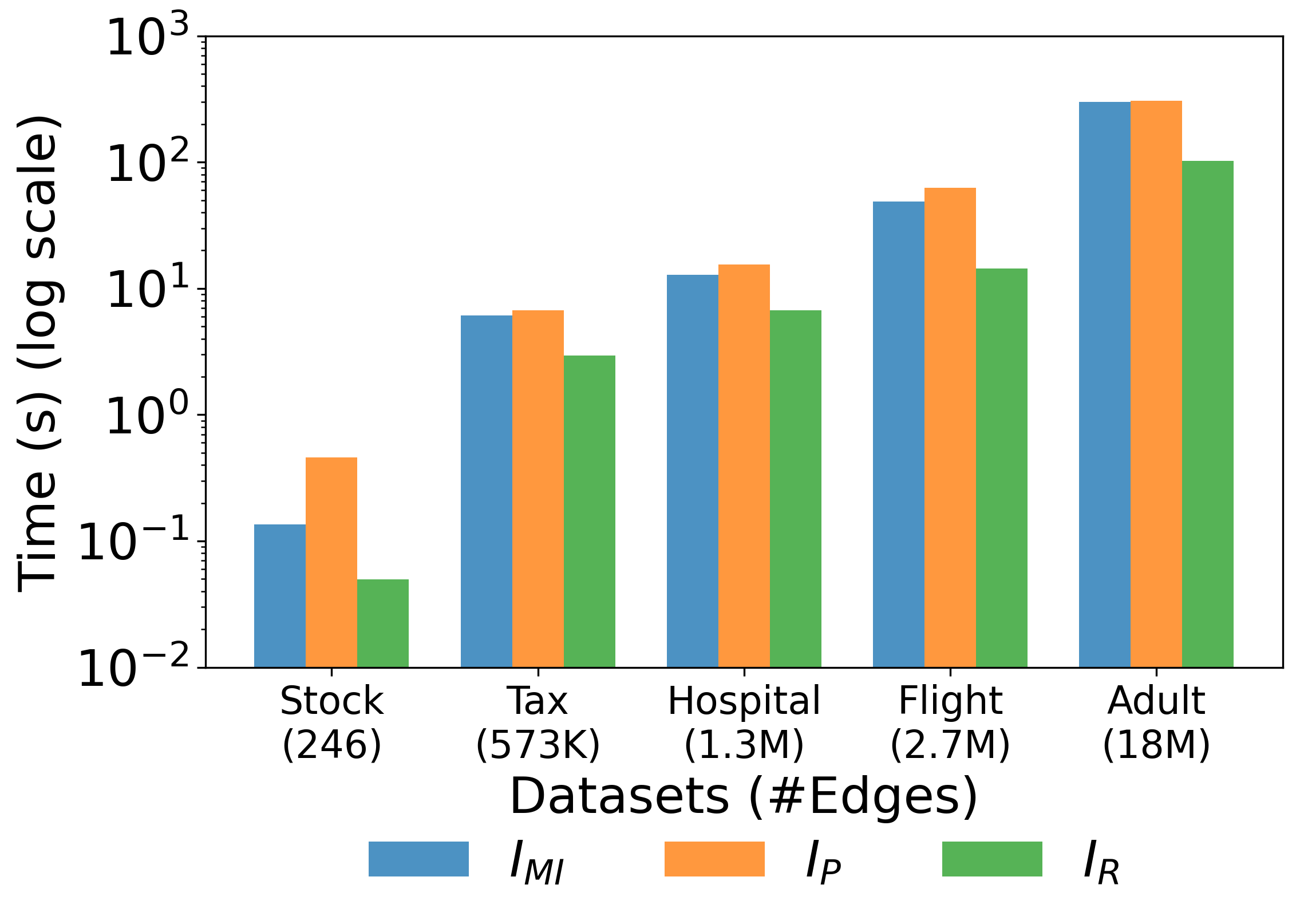}
    \caption{Runtime analysis for all measures by varying datasets.}
    \label{fig:time_datasets}
\end{figure}

\subsection{Utility vs $\beta$ for Theorem~\ref{thm:graph_general_utility}}
In Figure~\ref{fig:utility_vs_beta}, we show the trend of the utility analysis according to Theorem~\ref{thm:graph_general_utility} by varying the $\beta$ parameter. We follow Example~\ref{example:running_example} and set $\epsilon_1$ and $\epsilon_2$ to 1. The figure confirms that as the value of $\beta$ increases, there is a decrease in the distance between the true answer and the output, i.e., the utility increases. However, with higher values of $\beta$, the utility analysis holds with lesser probability, i.e., $1-\beta$. 

\begin{figure}
    \centering
    \includegraphics[width=0.6\linewidth]{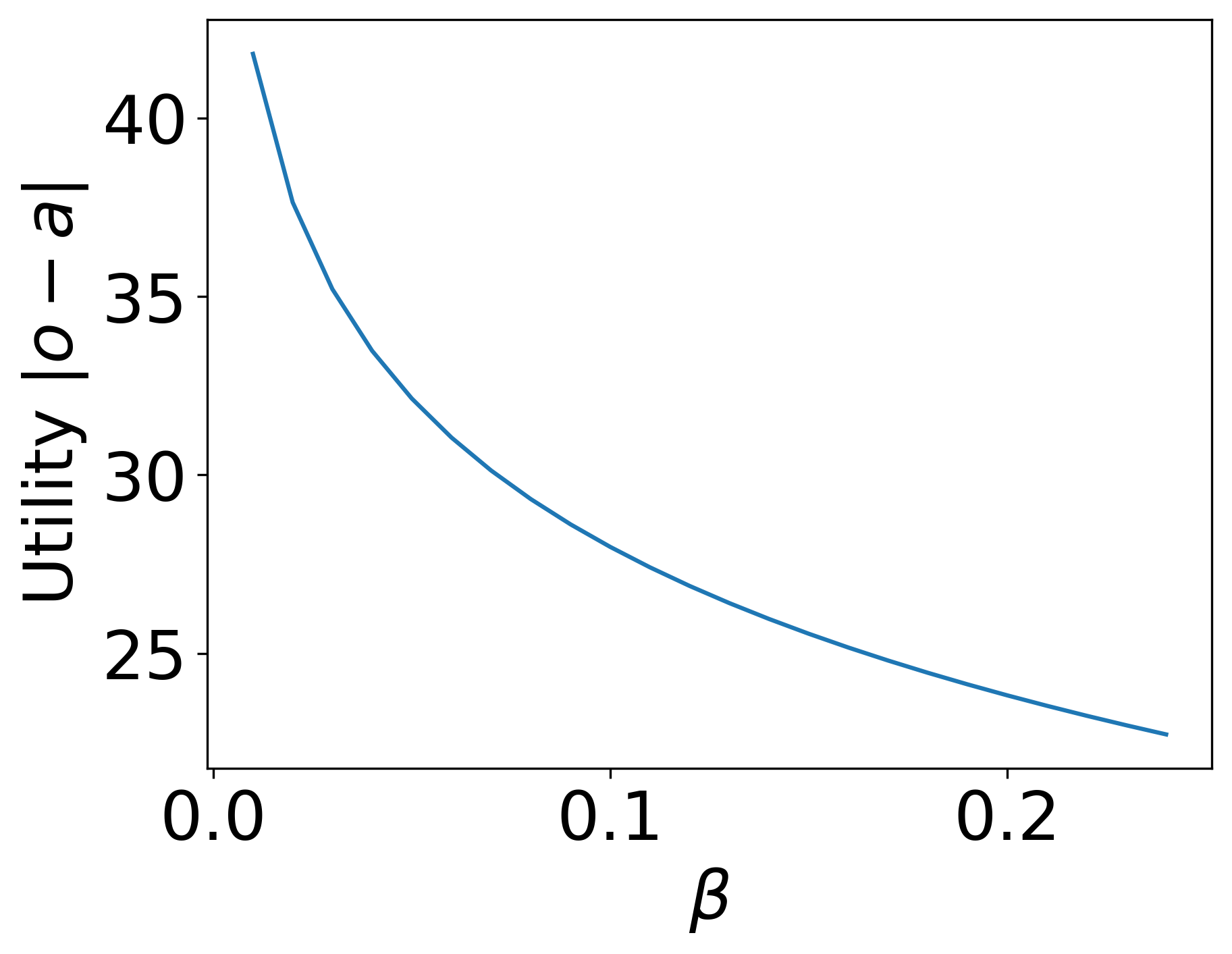}
    \caption{Trend of the utility analysis vs $\beta$}
    \label{fig:utility_vs_beta}
\end{figure}
\fi 

\end{document}
\endinput